\newcounter{one}
\numberwithin{equation}{section}
\newcommand{\tr}[0]{ {\rm tr}}
\newcommand{\half}[1]{{ \rm h}}
\newcommand{\Oorderof}{\mathcal{O}}
\newcommand{\orderof}[1]{\Oorderof(#1)} 
\newcommand{\for}[0]{\quad \textrm{for} \quad}
\newcommand{\dist}{d}
\newcommand{\co}{{\rm c}}
\newcommand{\diam}{{\rm diam}}
\def\beq{\begin{equation}}
\def\eeq{\end{equation}}
\def\nbeq{\begin{equation*}}
\def\neeq{\end{equation*}}
\def\<{\langle}
\def\>{\rangle}
\def\tr{{\rm tr}}
\newcommand{\qmexp}[1]{\langle#1\rangle}
\newtheorem{theorem}{Theorem}[section]
\newtheorem{Theorem}[theorem]{Theorem}
\newtheorem{lemma}[theorem]{Lemma}
\newtheorem{assump}[theorem]{Assumption} 
\newtheorem{definition}[theorem]{Definition}  
\newtheorem{prop}[theorem]{Proposition} 
\theoremstyle{definition}
\newtheorem{remark}[theorem] {Remark}
\newcommand{\bal}[2]{#1[#2]}
\newcommand{\br}[1]{\left( #1 \right)}
\newcommand{\abs}[1]{\left | #1 \right|}
\newcommand{\brr}[1]{\left[ #1 \right]}
 \newcommand{\norm}[1]{\left \|  #1 \right \|}
\def\multiset#1#2{\ensuremath{\left(\kern-.3em\left(\genfrac{}{}{0pt}{}{#1}{#2}\right)\kern-.3em\right)}}
\newcommand{\brrr}[1]{\left\{ #1 \right\}}
\title{Symmetry-enhanced Lieb-Robinson bounds for a class of Bose-Hubbard type Hamiltonians}
\author[1,2]{Tomotaka Kuwahara\thanks{tomotaka.kuwahara@riken.jp}}
\author[3]{Marius Lemm\thanks{marius.lemm@uni-tuebingen.de}}
\affil[1]{Analytical quantum complexity RIKEN Hakubi Research Team, RIKEN Cluster for Pioneering Research (CPR)/ RIKEN Center for Quantum Computing (RQC), Wako, Saitama 351-0198, Japan}
\affil[2]{PRESTO, Japan Science and Technology (JST), Kawaguchi, Saitama 332-0012, Japan}
\affil[3]{Department of Mathematics, University of T\"ubingen, 72076 T\"ubingen, Germany}
\date{July 8, 2025}
\begin{document}
\maketitle

\begin{abstract}
Several recent works have derived Lieb-Robinson bounds (LRBs) for Bose-Hubbard-type Hamiltonians.  For certain structured initial states, e.g., vacuum perturbations or near-stationary states,  information propagates with velocity $v \leq C$ . However, for general bounded-density initial states, it was shown by the first author, Vu, and Saito that the velocity can grow in time as $v \sim t^{D-1}$, where $D$ is the spatial dimension --- demonstrating the possibility of accelerated information spreading in bosonic systems.
In this work, we introduce a new perspective on this phenomenon: we show that translation invariance combined with local $p$-body repulsion ($n^p$ with $p > D+1$) qualitatively alters the propagation behavior, leading to a bound of the form $v \sim t^{\frac{D}{p - D - 1}}$ for general bounded-energy-density initial states. In particular, this establishes for an almost-linear light cone at large $p$, in stark contrast to the previously found accelerated regimes.
Our result identifies symmetry-driven constraints as a new mechanism for suppressing propagation speed in bosonic systems and thereby reframes the scope of what types of LRBs can hold. We further provide matching examples showing that, under the given assumptions, this bound is sharp --- no further improvement in the power of $t$ is possible without invoking additional dynamical constraints.

\end{abstract}


%
%
%
%
%
%
%
%
%



\section{Introduction}
Lieb-Robinson bounds (LRBs) \cite{lieb1972finite} provide a speed limit on the propagation of quantum information in a quantum many-body lattice system with local interactions. The standard LRBs, see \cite{lieb1972finite,hastings2004decay,nachtergaele2006lieb}, concern two local bounded operators ( ``observables'') $O,\tilde O$ and are of the form 
\begin{align}
\label{eq:locality_commutator}
\|[O(t),\tilde O]\|\leq C\|O\| \|\tilde O\| e^{C(vt-d(O,\tilde O))},
\end{align}
where $O(t)=e^{\mathrm{i}tH}O e^{-\mathrm{i}tH}$ is the Heisenberg time evolution generated by the Hamiltonian $H$. Here, $d(O,\tilde O)=\mathrm{dist}(\mathrm{supp\, O},\mathrm{supp}\, \tilde O)$ and the support of an observable $O$ is defined as the smallest lattice region $X$ such that $O$ acts as the identity on $X^c$, i.e., $O=O_X\otimes \mathrm{Id}_{X^c}$. 

The key quantity that characterizes the quantum information propagation is $v\geq 0$. This bound is called the Lieb-Robinson velocity. In the standard LRB, which holds for many-body systems with \textit{bounded} interactions, $v$ is proportional to the operator norm of the local interaction and independent of the time $t$. Consequently, the standard LRB \cite{lieb1972finite,hastings2004decay,nachtergaele2006lieb} is only meaningful for systems with bounded interactions such as quantum spin systems (and also lattice fermions \cite{nachtergaele2018lieb}). 

LRBs have grown into a rich subject that connects mathematical physics with  condensed matter physics, quantum information science, and high-energy physics. This success story started in the early 2000s \cite{hastings2004decay,hastings2005quasiadiabatic,bravyi2006lieb,nachtergaele2006lieb,nachtergaele2006propagation,hastings2007area,hastings2007quantum,nachtergaele2007multi,bachmann2012automorphic,kliesch2014locality}, kicked-off by the highly influential realization of Hastings that LRBs, which are dynamical bounds, are surprisingly effective tools to study equilibrium properties, e.g., clustering of correlations \cite{hastings2004decay,nachtergaele2006lieb}, the area law for entanglement entropy \cite{hastings2007area} and topological quantum phases \cite{hastings2005quasiadiabatic,bachmann2012automorphic}.
Since then, it has been realized that LRBs are among the few robust, effective, and versatile tools for analyzing quantum many-body systems in and out of equilibrium. More recently, they have also been used to limit information scrambling in the theory of quantum many-body chaos which is relevant for black hole physics and quantum information science \cite{lashkari2013towards,roberts2016lieb,chen2019finite,kuwahara2021absence}. For further background on LRBs, we refer to the reviews \cite{nachtergaele2010lieb,gogolin2016equilibration,chen2023speed}.

\subsection{Bosonic propagation bounds}
Since the standard LRBs require bounded interactions, they do not give meaningful bounds for lattice bosons or systems in continuous space. 

Special bosonic systems such as perturbations of harmonic oscillator systems have been considered \cite{cramer2008locality,nachtergaele2009lieb,woods2015simulating,woods2016dynamical}.
However, the results covering the paradigmatic \textit{Bose-Hubbard Hamiltonian} \cite{schuch2011information,wang2020tightening} were quite restricted, and proving robust LRBs for this Hamiltonian, in particular for a broad class of initial states, was a major methodological barrier. In the last three years, this barrier was broken and bosonic LRBs for Bose-Hubbard Hamiltonians have been derived by three different research groups by three different methods \cite{kuwahara2021lieb,faupin2022lieb,yin2022finite,kuwahara2022optimal,sigal2022propagation,lemm2023information}, which we summarize below. Similar bounds for systems in continuous space have also  recently been studied by related methods \cite{gebert2020lieb,arbunich2021maximal,breteaux2022maximal,breteaux2024light,hinrichs2024lieb}.


Part of the recent interest from the physics community in LRBs for Bose-Hubbard type Hamiltonians stems from the fact that the latter can be experimentally realized, finely tuned, and measured with high fidelity, in tabletop experiments with ultra-cold quantum gases in optical lattices \cite{denschlag2002bose,bloch2008many,greiner2008optical}. These lattice bosons provide some of the early examples of quantum simulators \cite{jaksch2004optical,bloch2012quantum,lewenstein2012ultracold,gross2017quantum,yang2020observation,kaufman2021quantum,su2023observation}. In this context, bosonic LRBs have been experimentally observed since about 10 years ago \cite{barmettler2012propagation,cheneau2012light,cheneau2022experimental} and their validity can provide important performance bounds on the use of quantum simulation and quantum information protocols \cite{bravyi2006lieb,kliesch2014lieb,epstein2017quantum,faupin2022lieb,kuwahara2022optimal,sigal2022propagation,lemm2023information}.

We briefly summarize the state-of-the-art bosonic LRBs for Bose-Hubbard type Hamiltonians on a finite graph $(\Lambda,\mathcal E)$. In this introduction, we consider Hamiltonians of the form 
\begin{equation}
\label{eq:HBHintro}
\begin{aligned}
&H= \sum_{\substack{i,j\in \Lambda:\\ i\sim j}} J_{i,j}(b_i^\dagger b_j+b_j^\dagger b_i) +\sum_{i\in\Lambda} w(n_i).
\end{aligned}
\end{equation}
where we write $i\sim j$ to express that $(i,j)\in\mathcal E$, i.e., that $i$ and $j$ are nearest neighbors.
 The first term represents boson hopping.  $J_{i,j}$ are the entries of a real symmetric matrix of size $|\Lambda|\times |\Lambda|$ and we focus on nearest-neighbor hopping. The second term in \eqref{eq:HBHintro} represents boson-boson density-dependent interaction. For the interaction $w:\mathbb N\to \mathbb R$ we use the convention that $\mathbb N=\{0,1,2,\ldots\}$. A prototypical case is that of constant nearest-neighbor hopping $J_{i,j}=\bar J$ for all  and quadratic local repulsion $w(n_i)=n_i(n_i-1)-\mu n_i$.

Since lattice bosons involve unbounded interactions, it is essential that any bosonic LRB should not involve operator norms (which are, after all, worst-case quantities) and instead control quantum information propagation for certain classes of ``well-behaved'' initial states $\rho$. In this vein, the following types of initial states have been considered with qualitatively different LRBs. 
The works \cite{schuch2011information,kuwahara2021lieb,yin2022finite,faupin2022lieb,kuwahara2022optimal,lemm2023information} proved a bosonic LRB with $v$ independent of time for various types of special initial states. More precisely, \cite{schuch2011information} covered states with all particles in a fixed finite region, \cite{kuwahara2021lieb} covered perturbations of stationary states, \cite{yin2022finite} covered the initial state $\rho=e^{-\mu N}$,  and \cite{faupin2022lieb,sigal2022propagation,lemm2023information} covered states with general particle-free subregions and long-range interactions.
Most relevant for our purposes is the work \cite{kuwahara2022optimal} which proved a bosonic LRB with $v\sim t^{D-1}$ for initial states with suitably bounded particle density.
(As one would expect, all these bounds have in common that the LR velocity $v$ is independent of system size, which is necessary to have a meaningful bound in the thermodynamic limit. Even this arguably basic requirement is far from trivial for bosons \cite{wang2020tightening}.)

Any LRB is merely an upper bound on information transport. It is always of interest to complement it by a qualitatively similar lower bound, i.e., to specify situations when the transfer is indeed of a prescribed speed. This question is particularly pressing in the situation considered in \cite{kuwahara2022optimal} because there, for $D\geq 2$, the velocity bound $v\sim t^{D-1}$ grows with time. This bound in principle allows for \textit{information acceleration}, by which we mean that for any valid LRB of the form
$\abs{\mathrm{tr} (\rho([O(t),\tilde O]))}\lesssim f(d(O,\tilde O)-vt)$,
with $f$ decaying at large arguments, the effective velocity $v$ must grow with time.

It is unclear whether this type of acceleration truly occurs for Bose-Hubbard Hamiltonians starting from bounded-density initial states. One reason for this is that, mathematically, the role of symmetries such as translation invariance in LRBs remains poorly understood. While symmetries often simplify dynamics, it is highly nontrivial to rigorously incorporate their effects into bounds on information propagation. In fact, it is counterintuitive that translation invariance alone is insufficient to suppress bosonic accumulation or slow down the spread of information.

In view of this, we put forward the following open, fundamental, and pressing question:

{~}

\textit{Question 1:} Consider a time-independent Bose-Hubbard type Hamiltonian such as \eqref{eq:HBHintro}, prepared in a bounded-density initial state $\rho$. Does it display a finite speed of information propagation?

{~}

\noindent
This question remains unresolved even in the case of the standard Bose-Hubbard model with nearest-neighbor hopping and on-site quadratic interaction, underscoring its conceptual importance.

Naive intuition would suggest that information acceleration cannot occur in a closed quantum system governed by unitary dynamics that is not subjected to any external forcing. However, this naive intuition is provably incorrect. Indeed, there exist \textit{explicit examples of simple, time-independent, and translation-invariant bosonic Hamiltonians which rigorously display information acceleration}, meaning that no LRB can hold with $v$ independent of time. Indeed, Eisert and Gross \cite{eisert2009supersonic} exhibited a one-dimensional, translation-invariant, time-independent lattice boson Hamiltonian where any LRB must have $v\sim e^{Ct}$, i.e., information accelerates exponentially. While this Hamiltonian is \textit{not} of Bose-Hubbard type \eqref{eq:HBHintro}, it impressively shows that information acceleration for bosons is a realistic possibility that has to be confronted in a model-dependent way, unlike for quantum spin systems. In the words of \cite{eisert2009supersonic}: ``Since we cannot
rule out the presence of accelerating excitations by imposing natural assumptions, one must not take the existence of
a finite speed of sound in any bosonic model for granted.''

\textit{Moreover, the work \cite{kuwahara2022optimal} also exhibits a protocol for a time-dependent Bose-Hubbard type Hamiltonian which rigorously propagates quantum information with $v\sim t^{D-1}$.} This has also sparked heuristic arguments that the scaling $v\sim t^{D-1}$ could be optimal for time-\textit{in}dependent Bose-Hubbard type Hamiltonians as well.
Hence, a different, weaker variant of Question 1 is to ask to what extent the power of $t$ in the LRB for bounded initial states in \cite{kuwahara2022optimal} is in fact optimal.\\

\textit{Question 1':} Consider a time-independent Bose-Hubbard type Hamiltonian such as \eqref{eq:HBHintro}, prepared in a bounded energy-density initial state $\rho$. Can the LRB from \cite{kuwahara2022optimal}, which gives $v\sim t^{D-1}$, be qualitatively enhanced to $v\sim t^\alpha$ with $\alpha<D-1$ under physically meaningful assumptions?\\  

\noindent
Addressing this question provides the first rigorous indication that translation invariance, even in the absence of disorder or localization, can qualitatively affect the scaling behavior of light cones in interacting bosonic systems. Any partial solution to this question would contribute a first conceptual step toward understanding how symmetry constraints can qualitatively influence the Lieb-Robinson light cone.

In this paper, we provide a positive answer to Question 1'. We work under two additional physical assumptions: (i) strong on-site repulsion $n_i^p$ with $p>D+1$ and (ii) translation-invariance; see the next paragraph. In words, we prove that the velocity bound $\sim t^{D-1}$ proved for a broad class of initial states in \cite{kuwahara2022optimal} improves to  $t^{\epsilon}$ for sufficiently large $p$-body repulsion. So a particular consequence is the first proof of an almost-linear light cone for a class of Bose-Hubbard Hamiltonians starting from the most physically relevant class of positive energy density initial states. In particular, this includes the most relevant class of  Mott states. Our enhanced LRB is obtained by leveraging for the first time energetic dynamical constraints in the proof strategy developed in  \cite{kuwahara2022optimal}. We expect this energetic perspective to also be useful in other settings.
It is worth emphasizing that prior to this work, it was not even physically clear whether one could expect an almost linear light cone  for general bounded density initial states.  We view our result as another puzzle piece in the quest to understand the surprising richness of bosonic transport behavior, a richness that was to our knowledge first emphasized by Eisert and Gross in \cite{eisert2009supersonic}.\\

We remark that translation invariance of the system Hamiltonian is a particularly natural condition of interest. It is physically meaningful and it intuitively limits the possibility of the boson concentration on some particular regions, which is a central puzzle piece in the explicit protocol displaying the accelerating information transfer. In fact, naive intuition might suggest that the translation invariance immediately leads to bounded local particle numbers with high probability and thus a finite Lieb-Robinson velocity: since a translation-invariant state with large boson occupation at a site must also have an equal chance of having large boson occupation at all other sites, one could hope that this should lead to a combinatorial suppression of the possibility of such problematic states. 
Unfortunately, this naive combinatorial picture is incorrect, because quantum mechanics allows for the phenomenon of ``superpositions of `bad' states''. This observation is already implicit in \cite{kuwahara2022optimal} and it underlies many of the technical difficulties. To show that this is a real possibility, in Appendix \ref{app:bad}, we construct an explicit family of translation-invariant states which have significant boson concentration, while also satisfying the energetic bounds we use here. These examples show that there are limitations to use only certain dynamical constraints (including translation-invariance).

Our work initiates a new direction in the analysis of bosonic Lieb-Robinson bounds. We identify physically natural and experimentally relevant conditions under which the known scaling $v \sim t^{D-1}$, though widely regarded as optimal, can be qualitatively improved.
This leads to a new dynamical structure, which we can refer to as a symmetry-enhanced Lieb-Robinson light cone. Here, the suppression of information propagation arises not from disorder or localization, but from global symmetry and energetic constraints.
To our knowledge, this is the first work to demonstrate that global symmetry alone can lead to a qualitative modification of the light cone structure in many-body systems.

\subsection{Statement of result}
We now state Theorem \ref{thm:maininformal}, which is a consequence of the more general main results, Theorems \ref{thm:lrbtr} and \ref{thm:lrbav}.
It establishes two enhanced LRBs which are qualitatively different depending on how the size of the commutator $[O(t),\tilde O]$ is measured when tested with respect to the initial state $\rho$; either in trace norm, $\|A\|_1=\mathrm{tr} |A|$, or in expectation. We write $\mathcal F(\ell^2(\Lambda))$ for the Fock space over a finite graph $\Lambda$, keeping in mind the common convention that we denote both the graph and its vertex set by $\Lambda$. We write $\mathcal S(\mathcal H)$ for the set of quantum states on a Hilbert space $\mathcal H$, i.e., $\rho\in\mathcal S(\mathcal H)$ iff $\rho$ is a positive semi-definite trace class operator of trace equal to $1$. 
We denote the local particle number on a set $X$ by $n_{X}=\sum_{i\in X} n_i$. 

We write $i\sim j$ to express that two vertices $i$ and $j$ of a graph are nearest neighbors, i.e., that they are connected by an edge. 
If the graph is equipped with a notion of translation $\mathcal T$ (such as the discrete $D$-dimensional torus below), then we can lift this to the bosonic Fock space by $\Gamma(\mathcal T)$. We recall that $\Gamma(\mathcal{T})$ means acting with the translation operator on all particles; see \cite[Ch. X.7]{reed1975ii} for the definition and basic properties of $\Gamma(\cdot)$. We call operators that commute with $\Gamma(\mathcal T)$ translation-invariant.

Below, we introduce suitable time-dependent velocities $v_1(t)$ and $v_{\mathrm{ex}}(t)$ in order to compare with other results.

\begin{theorem}[Main example of results]
\label{thm:maininformal}
    We consider a nearest-neighbor, translation invariant Bose-Hubbard Hamiltonian \eqref{eq:HBHintro} on a discrete $D$-dimensional torus of side length $L\geq 2$. That is, we set
    \[
    \Lambda_L=(\mathbb Z/L\mathbb Z)^D
\]
with the edge set
\[
\mathcal E_L:=\{e=(i,j)\in \Lambda_L\,:\, \sum_{j=1}^D \| (i-j) \,\mathrm{mod}\, L \|_{\ell^1}=1\}
\]
where $\mathrm{mod}\, L$ is taken componentwise in $\mathbb Z^D$.

        Let $p\in(1,\infty)$ and let $\tilde w:\mathbb N\to \mathbb R$ be a function for which there exists $c_{\tilde w},\epsilon>0$ such that $| w(n)|\leq  c_{\tilde w} (n+1)^{p-\epsilon}$ for all $n\geq 0$. We consider a  local interaction of the form
     \begin{equation}
    W=\sum_{i\in\Lambda} \brr{n_i^p+\tilde w(n_i)}.
   \end{equation}
   
   Let  $\rho\in \mathcal S(\mathcal F(\ell^2(\Lambda_L)))$ be a translation-invariant quantum state that has bounded expected energy density, i.e., 
   \begin{equation}
         \frac{\tr\brr{\rho H}}{|\Lambda_L|}=E_\rho<\infty.
   \end{equation}

   Let $X_0\subset \Lambda$. There exists a constant $C=C(D,J,c_V,\epsilon,E_\rho,X_0)$ such that the following holds for every pair of bounded observables $O,\tilde O$ on $\mathcal F(\ell^2(\Lambda_L)$ such that $\mathrm{supp}\, O=X_0$ and $[O, n_{X_0}]=0$.

    \begin{enumerate}[label=(\Roman*)]
    
    \item Suppose that $p>2D+2$ and set $v_1(t)=t^{\frac{D}{p/2-D-1}}$. Then
   \begin{equation}
   \left\|\rho([O(t),\tilde O])\right\|_1\leq C\|O\|\|\tilde O\| \left(\frac{v_1(t) t}{d(O,\tilde O)}\right)^{p/2-D-1},\qquad \textnormal{for } t,d(O,\tilde O)\geq 1.
   \end{equation}
    \item Suppose that $p>D+1$ and set $v_{\mathrm{ex}}(t)=t^{\frac{D}{p-D-1}}$. Then
   \begin{equation}
    \abs{\mathrm{tr} (\rho([O(t),\tilde O]))}\leq C \|O\|\|\tilde O\| \left(\frac{v_{\mathrm{ex}}(t) t}{d(O,\tilde O))}\right)^{p-D-1},\qquad \textnormal{for } t,d(O,\tilde O)\geq 1.
   \end{equation}
\end{enumerate}

\end{theorem}

Theorem \ref{thm:maininformal} stems from the more general Theorems \ref{thm:lrbtr} and \ref{thm:lrbav}. These theorems offer results that are applicable to a wider variety of Bose-Hubbard Hamiltonians~\eqref{eq:HBHintro} (which do not necessarily have to be translation-invariant) as long as they involve finite-range interactions and adhere to a specific condition over time:  $\mathrm{tr}\brr{\rho_0(t)n_i^p}\leq C_p$ for all $i\in\Lambda$. 
We then demonstrate that this condition is satisfied when two physical criteria are met: (i) translation-invariance and (ii) strong on-site repulsion, which then imply Theorem \ref{thm:maininformal} as a consequence. As Theorems \ref{thm:lrbtr} and \ref{thm:lrbav} show, the assumption in Theorem \ref{thm:maininformal} that the initial observable $O$ is particle-number preserving can be substantially relaxed --- it suffices to assume control on how many particles can be created by $O$, cf.\ Definition \ref{defn:q0}.

We compare and contrast our results to those of \cite{kuwahara2022optimal}.

    \begin{itemize}
    \item As in \cite{kuwahara2022optimal}, we are able to cover a very broad class of initial states, namely translation-invariant ones of finite energy density. This includes the experimentally most relevant class of Mott states \cite{bloch2008many,cheneau2012light,cheneau2022experimental}. 
        \item   Theorem \ref{thm:maininformal} establishes an LRB which bounds the velocity of information propagation by $v_{1}(t)=t^{\frac{D}{D+1-p/2}}$ respectively $v_{\mathrm{ex}}(t)= t^{\frac{D}{D+1-p}} $ depending on how the size of the commutator $[O_1(t),O_2]$ is measured (either in trace norm or in quantum expectation). Note that $\lim_{p\to\infty}v_{1}(t)=\lim_{p\to\infty}v_{\mathrm{ex}}(t)=1$, so for sufficiently large $p$, we obtain an approximately constant maximal velocity bound, closer to the case of quantum spin systems. This is a manifestation of the fact that the system gets increasingly rigid as $p$ grows. This is the first proof of an almost-linear light cone for a class of Bose-Hubbard Hamiltonians for a broad class of initial states.

        \item The smallest $p$-values such that the velocities $v_{1}(t)$ and $v_{\mathrm{ex}}(t)$ scale more favorably than the bound $v\sim t^{D-1}$ from \cite{kuwahara2022optimal} can be formulated as the conditions $p>2D+2+\tfrac{2D}{D-1}$ (for the trace norm) or $p>D+1+\tfrac{D}{D-1}$ (for the expectation). For the physical dimensions $D=2$ and $D=3$, the smallest integer value of $p$ for which our result yields an improved scaling compared to \cite{kuwahara2022optimal} is $p= 6$. That is, in physical terms, we prove that six-body boson-boson repulsion yields qualitatively slower information transport than what was expected based on \cite{kuwahara2022optimal}. This also shows that the ``accumulation scenario'' in Figure 4a of \cite{kuwahara2022optimal}, which, as we recall, produces information propagation with $v\sim t^{D-1}$, is provably excluded in our setting. (We emphasize that our result is not in any contradiction to the rigorous protocol in \cite{kuwahara2022optimal}. The reason is that the protocol requires a special time-dependent Bose-Hubbard Hamiltonian that neither enjoys strong repulsion nor translation-invariance.) Our result highlights how energetic constraints play an important role in controlling the dynamical spreading of quantum information in bosonic systems.
        
\item Regarding methods, our proof heavily draws on techniques developed in the 100+ page paper \cite{kuwahara2022optimal}. However, we add a crucial new twist by exploiting energetic constraints, whereas \cite{kuwahara2022optimal} exclusively used particle propagation bounds without making use of the interaction energy at all. We believe that the new perspective of using energetic dynamical bounds can be of use in similar quantum dynamics problems in the future.




     \end{itemize}

Under the assumptions of Theorem \ref{thm:maininformal}, we also obtain an improvement of the particle propagation bound for arbitrary moments (Main result 1 in \cite{kuwahara2022optimal}) by an interpolation argument; see Proposition \ref{prop:interpolation}.

\subsection{Discussion}

We comment on the role of assumptions of (i) strong repulsion and (ii) translation invariance and the connection of our work to \cite{kuwahara2022optimal} at a conceptual and technical level.

Local interactions of the form $n_i^p$ with an integer exponent $p>2$ are not part of the standard Bose-Hubbard Hamiltonian, but they have been discussed in the physics literature to arise from higher-order local repulsion between multiple particles \cite{will2010time, mark2011precision,petrov2014elastic,petrov2014three,mondal2020two}. The most significant case for physical applications is when $p=2$, representing two-body interactions in the typical Bose-Hubbard Hamiltonian. Our findings do not extend to this case. On the one hand, to further improve our result to include $p=2$, we anticipate the need for a more intricate analysis of how energy flows locally and how higher-order particle correlations develop dynamically. Dealing with these aspects of local dynamics will necessitate the introduction of new insights and ideas. On the other hand, it is by now well-established that bosonic information propagation is subtle and can produce rapidly accelerating information propagation in some cases \cite{eisert2009supersonic,kuwahara2022optimal}, it is not clear at all, even at a physical level, if the LRB from \cite{kuwahara2022optimal}  can be improved for the original Bose-Hubbard model with $p=2$ in the most interesting regime of $D\leq 3$ for a broad class of initial states. 

Second, concerning \textit{translation-invariance}, we remark that naive combinatorial heuristics suggest that translation-invariance by itself could be enough to obtain an LRB with $v\sim 1$. These heuristics go as follows: Note first that large speed $v$ comes from local boson accumulation. Now consider a time-evolved state that is ``bad'' in the sense that it exhibits extreme boson accumulation at a given site. By translation-invariance, all of its translates are bad states in the same sense.  Therefore, one could hope that when looking at a fixed site the total state has a small chance of having boson accumulation nearby, which should in turn yield $v\sim 1$. Unfortunately, this combinatorial heuristic is false.
This can be seen by constructing translation-invariant states by superposing states with periodic boson accumulation on highly occupied lines. This shows that, \textit{unfortunately, translation-invariance is not as useful as one might naively hope based on classical probabilistic intuition}. We give examples of such states in Appendix \ref{app:bad}. This is an instance of the more general phenomenon that ``superpositions of bad states can behave well''. This genuinely quantum phenomenon is at the heart of the challenge of using combinatorial-probabilistic arguments in the quantum many-body context. In fact, excluding a similar ``superpositions of bad states,'' kind of scenario also constitutes the main technical work in \cite{kuwahara2022optimal}.
There, the main dynamical constraint used is that the speed of particle transport is uniformly bounded in time (see also \cite{faupin2022maximal,van2023optimal,lemm2023information,lemm2023microscopic} for more general bounds of this type, e.g., for long-range hopping). More precisely, one shows that starting from a bounded-density initial state, within time $t$ the number of particles on a $1D$ subregion is at most $t^{D-1}$ and this is ultimately the reason for $v\sim t^{D-1}$. By considering this scenario, one sees that particle propagation bounds alone are insufficient to improve over $v\sim t^{D-1}$. Our result shows that translation-invariance and global energy conservation, for a sufficiently strong repulsion, are sufficient for an improvement. On a technical level, we replace the exponential but time-dependent bound on the local particle number in \cite{kuwahara2022optimal} with a polynomial time-independent bound that we derive from translation-invariance and energy conservation. Then we draw on the idea of concatenating and short-time LRBs from \cite{kuwahara2021lieb} a and certain estimates on comparison with truncated dynamics from \cite{kuwahara2022optimal}. For this, we have to take extra care to handle the slower decay which is polynomial in our case. In addition, we observe that the bound on the expectation value can be improved compared by re-ordering certain expansion terms in the Duhamel expansion of the truncated dynamics.


\subsection{Organization of the paper}
 In Section \ref{sect:setup}, we introduce the formal setup and state the main results, Theorems \ref{thm:lrbtr} and \ref{thm:lrbav} which imply Theorem \ref{thm:maininformal} as an easy corollary. We also state an improved bound on the particle propagation, Proposition \ref{prop:interpolation}, that may be of independent interest.

 Section \ref{sect:strategy} contains the overall proof strategy for Theorems \ref{thm:lrbtr} and \ref{thm:lrbav}. It shows how these theorems are reduced to suitable short-time LRBs (Theorems \ref{Theorem_for_small_time_evo} and \ref{Theorem_for_small_time_evo_average}) by a suitable concatenation procedure. Theorem \ref{Theorem_for_small_time_evo} is proved in Section \ref{sect:st1} and Theorem \ref{Theorem_for_small_time_evo_average} is proved in Section \ref{sect:st2}. We summarize our findings and give an outlook in Section \ref{sect:conclusion}.

 In Appendix \ref{app:bad}, we construct translation-invariant quantum states with bounded energy density and significant boson concentration. These examples serve to show that the propagation bounds we prove are sharp conditional on the kinds of dynamical constraints that we use (translation-invariance and conservation of the first, even of the second, moment of the Hamiltonian). Further improvements will therefore need to leverage additional dynamical constraints and likely a much finer understanding of the energy flow on microscopic length scales that is generated by the many-body dynamics.

 We heavily rely on several technical results from \cite{kuwahara2021lieb,kuwahara2022optimal} as important ingredients to prove our enhanced LRB. This is in some sense natural given that we seek to enhance a bound that took considerable effort to prove. Whenever we use results from these works, we include their precise statement, but we do not repeat the proofs in order to keep the paper at a moderate length.
 
 \section{Setup and main results}\label{sect:setup}
\subsection{Graph-theoretic notation and setup}
Let $(\Lambda,\mathcal E)$ be a finite graph. Recall that we write $i\sim j$ for $(i,j)\in\mathcal E$, i.e., if $i$ and $j$ are nearest neighbors. As is common in the literature on LRBs, we occasionally refer to vertices as ``sites''.
For an arbitrary subset $X\subseteq \Lambda$, we denote its cardinality by $|X|$.

For arbitrary subsets $X, Y \subseteq \Lambda$, we define $\dist_{X,Y}$ to be the graph distance between $X$ and $Y$ (i.e., the length of the shortest path between $X$ and $Y$). If $X\cap Y \neq \emptyset$, we set $\dist_{X,Y}=0$. 
When $X$ is composed of only one element (i.e., $X=\{i\}$), we denote $\dist_{\{i\},Y}$ by $\dist_{i,Y}$ for simplicity. Similarly, when $X=\{i\}$ and $Y=\{j\}$, we write $\dist_{i,j}$ for $\dist_{X,Y}$.

We denote the complementary subset of $X$ by $X^\co := \Lambda\setminus X$ and we define its associated boundary (or surface) subset by 
\[
\partial X:=\{ i\in X| \dist_{i,X^\co}=1\}.
\]
We also define the diameter $\diam(X)$ in a slightly non-standard, but convenient way as follows,
\begin{align}
\diam(X):  =1+ \max_{i,j\in X} (\dist_{i,j}).
\end{align}
We will track locality mainly in terms of extended subsets, which we denote by
\begin{equation}\label{def:bal_X_r}
\bal{X}{r}:= \{i\in \Lambda| \dist_{X,i} \le r \},\qquad r>0.
\end{equation}
We use the convention that $\bal{X}{0}=X$. 

We also write $\leq$ for operator inequalities, i.e., for two bounded Hermitian operators $A\leq B$ means that $B-A$ is a positive semidefinite operator.

\begin{assump}[$D$-dimensional graph]
\label{ass:graph}
We assume that the graph $\Lambda$ is $D$-dimensional, in the sense that its surface growth is controlled via a lattice parameter $\gamma \ge 1$ by\footnote{It is trivial that such a constant $\gamma<\infty$ exists for any finite graph for any $D$, but the point is that the bounds we prove will only depend on the graph $\Lambda$ through the constant $\gamma$. This means that the bounds are stable in the limit where $\Lambda$ is large and $\gamma$ stays fixed (which is the case in most practical situations, e.g., for large boxes or tori in $\mathbb Z^D$).}
\begin{align}
\label{supp_parameter_gamma_X_i}
 \abs{ \partial (i[\ell]) } \le \gamma \ell^{D-1} ,\qquad \textnormal{for all }  \ell\geq0.
\end{align}
\end{assump}
We remark that the surface growth bound \eqref{supp_parameter_gamma_X_i} implies the volume growth bound $\abs{ i[\ell] } \le (\gamma+1) \ell^{D}.$ The reverse implication does not hold in general; see, e.g., \cite[Section 2.3]{tessera2007volume}.

\subsection{Assumptions on the Hamiltonian and observables}
We consider the bosonic Fock space $\mathcal F(\ell^2(\Lambda))$ with bosonic creation and annihilation operators $\{b_i^\dagger,b_i\}_{i\in\Lambda}$. Each $b_i$ annihilates the vacuum vector and the collection $\{b_i^\dagger,b_i\}_{i\in\Lambda}$ satisfies the canonical commutation relations. We write $n_i=b_i^\dagger b_i$ for the particle number operator at site $i\in\Lambda$ and  $\mathcal N=\sum_{i\in\Lambda}n_i$ for the total particle number operator.

Let $p\in(1,\infty)$. On $\mathcal F(\ell^2(\Lambda))$, we consider general Bose-Hubbard-type Hamiltonians of the form 
\begin{align}
&H= H_0+ W, 
\notag \\
\textnormal{where } &H_0\coloneqq  \sum_{\substack{i,j \in \Lambda:\\ i\sim j}}J_{i,j} (b_i b_j^\dagger +b_i^\dagger b_j ), \label{def:Ham}\\
\textnormal{and } &\bar{J}:=\sup_{i\sim j} |J_{i,j}|<\infty.
\end{align}
We recall that $i\sim j$ means that $i$ and $j$ are nearest neighbors. 
The interaction $W$ satisfies the following assumption.

\begin{assump}[Finite-range interaction]\label{ass:V}
 We assume that the interaction $W$ is of the form    \begin{equation}
 \label{eq:Wdef}
W=\sum_{\substack{X\subset \Lambda:\\
\mathrm{diam}(X)\leq k}} w_X(\{n_i\}_{i\in X})
   \end{equation}
   for some constant $k\geq 1$. That is, $W$ 
   depends only on the occupation numbers $n_i$ and is of finite range $k$.
\end{assump}
Notice that $H$ commutes with the total particle number operator $\mathcal N$ and can therefore be block-diagonalized as $H=\bigotimes_{N\geq 0} H_N$ where each $H_N$ is a bounded self-adjoint operator. Therefore, $H$ is self-adjoint on the domain
\[
\mathcal D(H)=\left\{
\varphi=(\varphi)_{N\geq0}\in \mathcal F(\ell^2(\Lambda))\,:\,
\sum_{N\geq 0} \|H_N\varphi_N\|^2<\infty\right\},
\]
see, e.g., \cite[Appendix A]{faupin2022lieb} for a proof.
By Stone's theorem, $e^{-\mathrm{i}tH}$ is a strongly continuous one-parameter group of unitaries and we can thus define the Heisenberg time evolution of any bounded observable $O$ by $O(t)=e^{\mathrm{i}tH}Oe^{-\mathrm{i}tH}$. Any restriction of the Hamiltonian to subgraphs of $\Lambda$ is also self-adjoint for the same reasons.


Our main results apply to operators $O$ that are block-diagonal in the local particle number basis. The size of the block is captured by the parameter $q_0$ which physically represents the maximal change of local particle number that can be achieved by the operator $O$ under consideration. Here and in the following, given a self-adjoint operator $A$ and $\alpha\in \mathbb R$, we write $\Pi_{A>\alpha}$ for the spectral projector onto the spectral subspace where $A>\alpha$, and we define $\Pi_{A<\alpha}$, $\Pi_{A=\alpha}$, etc., analogously.
 
\begin{definition}[Maximal change of local particle number]\label{defn:q0}
    Let $X_0\subset\Lambda$ be bounded. Let $O$ be an observable supported on $X_0$. We say that $O$ changes the local particle number by at most $q_0\geq 0$, if
    \begin{align}
\label{O_X_condition/_norm_lemma_operator}
\Pi_{n_{X_0} > m+q_0} O_{X_0} \Pi_{n_{X_0}=m}=0, \quad\textnormal{ and }\quad \Pi_{n_{X_0} < m-q_0} O_{X_0} \Pi_{n_{X_0}=m} =0,\qquad  \forall m>0.
\end{align}
\end{definition}
Of course, any particle-number conserving observable on $X_0$ satisfies Definition \ref{defn:q0} with $q_0=0$. We shall derive the LRB for observables $O$ that satisfy this definition for some $q_0<\infty$. While we assume $q_0=0$ in Theorem \ref{thm:maininformal} for simplicity of presentation, the result immediately generalizes to any fixed $q_0$, because Theorems \ref{thm:lrbtr} and \ref{thm:lrbav} hold for any fixed $q_0$.

\subsection{Statement of main results}
Our main results (Theorems \ref{thm:lrbtr} and \ref{thm:lrbav}) are LRBs of the form (I) and (II)  appearing in Theorem \ref{thm:maininformal} for much more general Bose-Hubbard type Hamiltonians \eqref{def:Ham} and initial states, but assuming the moment bound condition \eqref{basic_assump_moment_func}. 
 Our main example where the moment bound \eqref{basic_assump_moment_func} can be verified is in the setting of Theorem \ref{thm:maininformal}, i.e., in the presence of (i) strong local repulsion and (ii) translation invariance. Indeed, we confirm in Section \ref{ssect:verify} below that Theorem \ref{thm:maininformal} follows from Theorems \ref{thm:lrbtr} and \ref{thm:lrbav} in a straightforward way.
 
We write $\rho$ for the initial quantum state, which is a positive semidefinite operator on $\mathcal F(\ell^2(\Lambda))$ with $\mathrm{tr}\brr{\rho}=1$. Recall that $v_1(t)=t^{\frac{D-1}{p/2-D-1}}$, 
as defined in Theorem~\ref{thm:maininformal}.

\begin{theorem}[Enhanced LRB for trace norms]
\label{thm:lrbtr}
Let $H$ be of the form \eqref{def:Ham} with $W$ satisfying Assumption \ref{ass:V}.
Assume that there exist $T\in [0,\infty]$ and $p>2D+2$ such that
\begin{align}
\sup_{i\in\Lambda}\tr\brr{ \rho n_i^p(t) } \le C_p,\qquad \textnormal{for all }t\in [0,T].
\label{basic_assump_moment_func}
\end{align}
Fix $X_0\subset \Lambda$ and $q_0\geq 0$.

Then, there exists a constant $C=C(D,\gamma,\bar J,k,p,C_p,X_0,q_0)>0$ and a unitary operator $V_{X_0[R]}$ on $\mathcal F(\ell^2(\Lambda))$ supported on $X_0[R]$ and commuting with $n_{X_0[R]}$ such that for all bounded observables $O$ on on $\mathcal F(\ell^2(\Lambda))$ supported on $X_0$ satisfying Definition \ref{defn:q0} with $q_0$, we have
\begin{equation}
\label{ineq:thm:lrbtr}
\left\|\rho\brr{O(t)-V^\dagger_{X_0[R]}OV_{X_0[R]}}\right\|_1\leq C\|O\| 
 \left(\frac{v_{1}(t) t}{R}\right)^{p/2-D-1}
\end{equation}
for all $ R\geq 1$ and all $t\in[1,T]$.
\end{theorem}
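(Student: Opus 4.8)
\textbf{Overall plan.} The plan is to reduce the claimed LRB over the macroscopic time $t$ to a \emph{short-time} LRB valid on a small time interval $\Delta t$, and then concatenate $t/\Delta t$ such short steps. This is the method of \cite{kuwahara2021lieb}, adapted to the present setting where the decay we can afford is only polynomial rather than exponential. The key resource available to us is the uniform-in-time moment bound \eqref{basic_assump_moment_func}: $\tr[\rho\, n_i^p(t)]\le C_p$ for all $t\in[0,T)$. Since $p>2D+2$, this is a strong enough bound on local particle density along the whole trajectory that, on a short time window, the truncated dynamics (obtained by projecting onto configurations with at most $\sim N$ bosons on the relevant region) approximates the true dynamics well, with an error controlled by Markov's inequality applied to \eqref{basic_assump_moment_func}. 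On the truncated Fock space the interactions are effectively bounded (by roughly $N^p$ on a $k$-diameter patch), so a standard Lieb-Robinson argument applies and yields a short-time bound whose "velocity" is $\sim N^p$. Optimizing $N$ against the probability tail $\sim C_p/N^p$ on each step, and then summing over the $t/\Delta t$ steps, produces the effective time-dependent velocity $v_1(t)=t^{D/(p/2-D-1)}$ and the decay rate $p/2-D-1$ in \eqref{ineq:thm:lrbtr}.

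\textbf{Step-by-step.} First, I would state and prove (or invoke, as the paper structures it via Subtheorem \ref{subtheorem_for_small_time_evo}) the short-time LRB: there is $\Delta t\sim 1$ such that for $\tau\le \Delta t$,
\[
\bigl\|\rho\bigl[O(\tau)-O(\tau,X_0[r])\bigr]\bigr\|_1 \le C\|O\|\,\Bigl(\tfrac{C_p}{r^{\,p/2}}\Bigr)^{\!\theta} \cdot(\text{poly factors}),
\]
for suitable exponent $\theta$, obtained by (a) splitting into the event that no region involved carries more than $N$ particles versus its complement, the latter bounded by \eqref{basic_assump_moment_func} and a union bound over the $\sim r^D$ sites inside $X_0[r]$; (b) on the good event, comparing $e^{-i\tau H}$ with $e^{-i\tau H^{(\le N)}}$ via a Duhamel/telescoping estimate as in the relevant lemma of \cite{kuwahara2022optimal}; (c) applying the ordinary bounded-interaction LRB on the $N$-truncated space, where the one-step speed is $\lesssim N^p$. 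The exponent $p/2$ (rather than $p$) arises because the operator $O$ can change the local particle number by $q_0$, so one works with $\sqrt{\rho}\,\cdots\sqrt{\rho}$ and a Cauchy-Schwarz split costs a square root on the moment bound; this is exactly why the trace-norm result needs $p>2D+2$ while the expectation result needs only $p>D+1$. Second, I would concatenate: writing $t=m\Delta t$, insert intermediate localized evolutions $O(j\Delta t, X_0[R_j])$ on a nested sequence of radii $R_0>R_1>\dots>R_m$ with $R_j-R_{j+1}\sim R/m$, telescope, apply the short-time bound at each step with $r\sim R/m$, and sum. Balancing $m\sim t$ (so $\Delta t\sim 1$) and tracking the accumulated polynomial factors gives the stated bound; the $t^D$-type loss from the union bound over $\sim (R/m)^D$ sites at each of $\sim t$ steps is what feeds into $v_1(t)^{p/2-D-1}= t^D$.

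\textbf{Main obstacle.} The hard part will be the concatenation with only polynomial (not exponential) decay. With exponential decay, summing $t$ short-time errors is harmless; with decay $(r)^{-(p/2-D-1)}$ and $r\sim R/t$, one must ensure the per-step losses (the union-bound factor $\sim (R/t)^D$, the growth of the truncation level $N$ needed to keep the tail small over many steps, and the propagation of the "effective support" which itself grows as the localized Hamiltonians are nested) do not overwhelm the decay. Concretely one has to choose $N=N(t,R)$ and the radius schedule $R_j$ so that: the total tail probability $\sum_j (\text{sites})\cdot C_p N^{-p}$ stays bounded by the target; the truncated one-step speed $N^p$ times $\Delta t$ stays below the step width $R/t$; and the final exponent comes out as $p/2-D-1$. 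This optimization, and carefully carrying the dependence of $C$ on $(D,\gamma,\bar J,k,p,C_p,X_0,q_0)$ through it while keeping everything independent of $L=|\Lambda|$, is the technical crux; the rest is assembling known ingredients from \cite{kuwahara2021lieb,kuwahara2022optimal}.
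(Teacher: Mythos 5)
Your overall architecture is the same as the paper's: reduce to a short-time LRB obtained by truncating the local boson number, control the truncation error via Markov's inequality from the uniform-in-time moment bound \eqref{basic_assump_moment_func}, concatenate $\sim t/\Delta t$ steps with $\Delta t=\mathcal O(1)$, and explain the exponent $p/2$ (and hence the threshold $p>2D+2$) by the Cauchy--Schwarz/$\sqrt{\rho}$ split --- all of this matches Subtheorem \ref{subtheorem_for_small_time_evo} and its concatenation in Section \ref{sect:strategy}.

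However, there is a genuine gap in the quantitative heart of your step (c) and of your ``main obstacle'' balancing: you assert that the truncated dynamics has Lieb--Robinson speed $\sim N^{p}$ (set by the norm of the truncated interaction) and propose to enforce $N^{p}\,\Delta t\lesssim R/t$. This is both incorrect and fatal to the optimization. The interaction $V=f(\{n_i\})$ is diagonal in the occupation basis and generates no transport; after truncation at level $\bar q$ the propagation is governed by the hopping, whose truncated norm is $\sim \bar J\,\bar q$, i.e.\ \emph{linear} in the truncation level. This linearity is exactly what the paper exploits: it takes $\bar q=\eta r/2$ with $r\approx R/t$, and the time step $\tau=1/(ec_3')$ with $c_3\propto \bar J\eta$ (independent of $r$) is then an $\mathcal O(1)$ constant, so the light-cone leakage over one step is exponentially small, $\sim e^{-r/(4k)}$, and the dominant per-step error is the Markov tail $\sim|\tilde X|\,r^{-p/2+1}$, which after summing $\sim t$ steps gives precisely $(v_1(t)t/R)^{p/2-D-1}$. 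If instead the speed really were $N^{p}$, your constraint forces $N\sim (R/t)^{1/p}$, and the per-step trace-norm tail $\sim|\tilde X|\,(C_pN^{-p})^{1/2}\sim|\tilde X|(t/R)^{1/2}$ summed over $\sim t$ steps diverges with $R$ and cannot yield \eqref{ineq:thm:lrbtr}; so with your scaling the scheme does not close, and identifying the linear-in-$\bar q$ velocity is a necessary missing idea, not bookkeeping. A secondary inaccuracy: the per-step union bound is not over $\sim(R/t)^D$ sites but over the annulus $\tilde X=X_{j-1}[r]\setminus X_{j-1}[r/2]$ around a region that has grown to radius up to $R$, so $|\tilde X|\lesssim \Delta r\,R^{D-1}$; this count (surface area times step width), not $(R/m)^D$, is what produces the exponent $p/2-D-1$ and the velocity $v_1(t)=t^{D/(p/2-D-1)}$ after summation. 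Your decreasing-radius telescoping with localized Hamiltonian evolutions versus the paper's increasing regions conjugated by the unitaries $U_{X_j}$ is only a presentational difference and is not the issue.
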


When $T=\infty$, we interpret $ [0,T]$ as $[0,\infty)$.
Notice that the constant $C$ depends on the initial state only through the $p$ and $C_p$ that appear in \eqref{basic_assump_moment_func}. We will obtain the trace norm estimate in Theorem \ref{thm:maininformal} from Theorem \ref{thm:lrbtr} by bounding $C_p$ uniformly in the initial state for translation-invariant Hamiltonian with strong local repulsion with the same value of $p$.

For the LRB on quantum-mechanical expectation values $\tr\brr{ \rho(\cdot) }$, we require the moment bound \eqref{basic_assump_moment_func} only under the weaker assumption $p>D+1$ and we obtain a further enhanced velocity $v_{\mathrm{ex}}(t)=t^{\frac{D}{p-D-1}}$.

\begin{theorem}[Further--enhanced LRB for expectation values] \label{thm:lrbav} Let $H$ be of the form \eqref{def:Ham} with $W$ satisfying Assumption \ref{ass:V}.
Assume that there exist $T\in [0,\infty]$ and $p>D+1$ such that \eqref{basic_assump_moment_func} holds.
Fix $X_0\subset \Lambda$ and $q_0\geq 0$.

Then, there exists a constant $C=C(D,\gamma,\bar J,k,p,C_p,X_0,q_0)>0$ and a unitary $V_{X_0[R]}$ on $\mathcal F(\ell^2(\Lambda))$ supported on $X_0[R]$ and commuting with $n_{X_0[R]}$  such that for all bounded observables $O$ on $\mathcal F(\ell^2(\Lambda))$ supported on $X_0$ satisfying Definition \ref{defn:q0} with $q_0$, we have
\begin{equation}
\label{main_ineq:thm:lrbav}
\left|\mathrm{tr}\left(\rho\brr{O(t)-V^\dagger_{X_0[R]}OV_{X_0[R]}}\right)\right|\leq C\|O\|\brr{\frac{v_{\mathrm{ex}}(t) t}{R}}^{p-D-1}
\end{equation}
for all $ R\geq 1$ and all $t\in[1,T]$.
\end{theorem}

The key feature of the unitary $V_{X_0[R]}$ is that it is locally supported on $X_0[R]$ and so the results approximate the full time evolution by conjugation with a \textit{local} unitary. This is sufficient to control the commutators that appear in the traditional formulation of the LRB, in particular in Theorem \ref{thm:maininformal}, because $[V^\dagger_{X_0[R]}OV_{X_0[R]},\tilde O]=0$  when $d(O,\tilde O)>R$ and so
\[
[O(t),\tilde O(t)]
=
[O(t)-V^\dagger_{X_0[R]}OV_{X_0[R]},\tilde O].
\]
 See the proof of Theorem \ref{thm:maininformal} for further details. We decided to state the version with the unitary in Theorems \ref{thm:lrbtr} and \ref{thm:lrbav} because it yields the additional information that $V_{X_0[R]}$ commutes  with $n_{X_0[R]}$. For readers interested in an explicit formula for $V_{X_0[r]}$, we note that this can be obtained from the proof, but it is slightly complicated because of various uses of the interaction picture and the concatenation of short-time LRBs. In \cite{kuwahara2022optimal}, it is possible to replace $V_{X_0[r]}$ post-hoc with $e^{-\mathrm{i}tH_{X_0[R]}}$, which is a particularly nice local unitary, but this is not possible in our setting because  $H_{X_0[R]}$ lacks translation symmetry. 




\subsection{Proof of Theorem \ref{thm:maininformal} assuming Theorems \ref{thm:lrbtr} and \ref{thm:lrbav}}
\label{ssect:verify}
 Theorem \ref{thm:maininformal} is now a straightforward application of these results with $T=\infty$. It is obtained by verifying the moment bounds in the presence of (i) strong repulsion and (ii) translation invariance. 

We consider the discrete torus of side length $L\geq 1$,
\[
\Lambda_L=(\mathbb Z/L\mathbb Z)^D.
\]
We work on the bosonic Fock space $\mathcal F(\ell^2(\Lambda_L))$ and consider a Bose-Hubbard type Hamiltonian of the form
\begin{equation}
    \label{eq:mainexampleHamiltonian}
H=H_0+W,\qquad H_0=-J\sum_{i\sim j } (b_i^\dagger b_j+b_j^\dagger b_i)
,\quad W=\sum_{i\in \Lambda_L}\left[n_i^p+  \tilde w(n_i)\right].
\end{equation}
There exist $c_{\tilde w},\epsilon>0$ such that $| \tilde w(n)|\leq  c_{\tilde w} (n+1)^{p-\epsilon}$ for all $n\geq 0$. 

Our goal is to apply Theorems \ref{thm:lrbtr} and \ref{thm:lrbav} with $R=d(O,\tilde O)+1$ and $T=\infty$.
To verify the conditions of Theorems \ref{thm:lrbtr} and \ref{thm:lrbav}, it suffices to prove the moment bound
$
\sup_{i\in\Lambda_L}\tr\brr{ \rho n_i^p(t) } \le C_p$ for all $t\geq 0.$
We focus on the case $p>2D+2$ as the case $p>D+1$ is handled exactly in the same way.

Once we have verified the moment bound, then we obtain from  Theorem \ref{thm:lrbtr} and H\"older's inequality for Schatten spaces
\[  
 \left\|\rho([O(t),\tilde O])\right\|_1
 = \left\|\rho([O(t)-V^\dagger_{X_0[R]}O V_{X_0[R]},\tilde O])\right\|_1
 \leq C\|O\|\|\tilde O\| \left(\frac{v_1(t) t}{d(O,\tilde O)}\right)^{p/2-D-1}
\]
and similarly for the expectation value. Therefore, the proof of Theorem \ref{thm:maininformal} is completed by the following lemma.

\begin{lemma}[Moment bound]\label{lm:moment}
    Let $p>1$ and let $\rho$ be a translation-invariant quantum state satisfying \[
    \frac{\tr\brr{ \rho H }}{|\Lambda_L|}=E_\rho<\infty.
    \] There exists $C=C(J,p,D,\epsilon,c_v)$ such that for any $L\geq 2$, we have
    \begin{equation}
    \label{main_ineq_lm:moment}
          \sup_{i\in\Lambda_L}  \tr\brr{ \rho n_i^p(t) }\leq 
    2\br{E_\rho+C},\qquad \textnormal{for all }t\geq 0.
    \end{equation}
\end{lemma}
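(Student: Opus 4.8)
The plan is to exploit global energy conservation together with translation invariance. First, observe that since $H$ is translation-invariant and $\rho$ is a translation-invariant state, the time-evolved state $\rho(t) = e^{-itH}\rho e^{itH}$ is again translation-invariant, and moreover $\tr(\rho(t) H) = \tr(\rho H) = E_\rho |\Lambda_L|$ by unitarity. Hence for any fixed site $i$, translation invariance gives
\begin{equation}
\tr\br{\rho(t) n_i^p} = \frac{1}{|\Lambda_L|}\sum_{j\in\Lambda_L}\tr\br{\rho(t) n_j^p} = \frac{1}{|\Lambda_L|}\tr\br{\rho(t) \sum_{j\in\Lambda_L} n_j^p}.
\end{equation}
So it suffices to bound $\frac{1}{|\Lambda_L|}\tr(\rho(t)\sum_j n_j^p)$, i.e.\ the energy density coming from the $n_j^p$ part of $V$.

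The key step is to compare $\sum_j n_j^p$ with the full Hamiltonian $H = H_0 + \sum_j n_j^p + \sum_j v(n_j)$. I would show an operator inequality of the form $\sum_j n_j^p \preceq H + \const\cdot|\Lambda_L|$, or more precisely that the other two pieces of $H$ are dominated by a small fraction of $\sum_j n_j^p$ plus a volume-proportional constant. For the perturbation term, the assumption $|v(n)|\le c_V(n+1)^{p-\epsilon}$ and Young's inequality give, for any $\delta>0$, a bound $|v(n_j)| \le \delta n_j^p + C(\delta,c_V,\epsilon)$, hence $\sum_j |v(n_j)| \preceq \delta \sum_j n_j^p + C(\delta)|\Lambda_L|$. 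For the hopping term, I would use that $\pm(b_i^\dagger b_j + b_j^\dagger b_i) \preceq n_i + n_j + 1$ (which follows from $\pm(b_i^\dagger b_j + b_j^\dagger b_i) \preceq b_i^\dagger b_i + b_j^\dagger b_j + \const$ via completing the square / the arithmetic-geometric inequality for operators), so that $|H_0| \preceq C_J \sum_j n_j + C_J|\Lambda_L|$, and then absorb $\sum_j n_j$ into $\delta\sum_j n_j^p + C|\Lambda_L|$ again by Young since $p>1$. Combining, for suitable small $\delta$,
\begin{equation}
\sum_{j\in\Lambda_L} n_j^p \preceq 2\brr{H + C\,|\Lambda_L|}
\end{equation}
for a constant $C = C(J,p,D,\epsilon,c_V)$. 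Taking $\tr(\rho(t)\cdot)$, using $\tr(\rho(t)H)=E_\rho|\Lambda_L|$ and dividing by $|\Lambda_L|$ yields $\tr(\rho(t) n_i^p) \le 2(E_\rho + C)$, which is the claim.

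The main thing to be careful about is making the operator-inequality manipulations rigorous on the unbounded-operator level: the Hamiltonian, the hopping term, and $\sum_j n_j^p$ are all unbounded, so the inequality $\sum_j n_j^p \preceq 2(H + C|\Lambda_L|)$ should be understood as a form inequality on a suitable common core (e.g.\ the finite-particle subspace, or $\mathcal D(H)$ intersected with the domain of $\sum_j n_j^p$), and one should check that $\rho(t)$ indeed has finite $n_i^p$-expectation so the trace is well-defined — but this is essentially guaranteed by the finiteness of $E_\rho$ together with the inequality itself, applied on each $N$-particle block where everything is finite-dimensional-in-spirit. The only genuinely non-routine input is the correct packaging of Young's inequality at the operator level, which reduces to the scalar statements $ab \le \delta a^p + C_\delta b^{p'}$ applied spectrally to the commuting family $\{n_j\}$; since all the terms except $H_0$ commute, this is clean, and $H_0$ is handled separately by the quadratic-form bound above. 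I expect no serious obstacle beyond this bookkeeping.
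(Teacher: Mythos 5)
Your proposal is correct and follows essentially the same route as the paper: translation invariance of $\rho(t)$ plus energy conservation reduce the claim to the operator (form) inequality $\sum_j n_j^p \preceq 2\br{H + C|\Lambda_L|}$, which the paper obtains exactly as you do, by bounding the hopping via $\pm(b_i^\dagger b_j + b_j^\dagger b_i)\preceq n_i+n_j$ and absorbing $|v(n_j)|$ and the linear term into $\tfrac{1}{2}n_j^p + C$ through a scalar (Young-type/case-distinction) estimate applied spectrally. The only cosmetic difference is that the paper phrases the bound as $H \succeq \sum_j\brr{\tfrac{1}{2}n_j^p - C}$ and evaluates it against $\rho(-t)$, which is equivalent to your formulation.
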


Explicitly,
\begin{equation}\label{eq:explicitly}
    C(J,p,D,\epsilon,c_{\tilde w})=\max_{n\in \mathbb N}
    \left(-\frac{n^p}{2}+c_{\tilde w}(n+1)^{p-\epsilon}+4JDn\right).
\end{equation}

\begin{proof}[Proof of Lemma \ref{lm:moment}]
    We first control the hopping term $H_0$ by the repulsion. By the Cauchy-Schwarz inequality for operators, we have the operator inequality
    \[
    H_0\geq - J\sum_{i\sim j}(n_i+n_j)=-4JDN.
    \]
  Hence, using also that $| \tilde w(n)|\leq  c_{\tilde w} (n+1)^{p-\epsilon}$, we have
    \[
    \begin{aligned}
    H=H_0+V \geq & \sum_{j\in \Lambda_L} \br{ n_j^p+\tilde w (n_j)-4JDn_j}\\
     \geq&  \sum_{j\in \Lambda_L} \br{ n_j^p-c_{\tilde w} (n_j+1)^{p-\epsilon}-4JDn_j}.
    \end{aligned}
    \]
 By energy conservation (more precisely, the fact that the Heisenberg dynamics preserves the Hamiltonian) and cyclicity of the trace, we have
    \[
    \begin{aligned}
     \tr\brr{ \rho H }
    =  \tr\brr{ \rho(-t) H }
    \geq\sum_{j\in \Lambda_L}  \tr\brr{ \rho(-t) \br{n_{j}^p -c_{\tilde w} (n_j+1)^{p-\epsilon}-4JD n_{j}}}.
    \end{aligned}
    \]
      Fix a point $i\in\Lambda$. Since $H$ commutes with translations, the time-evolved state $\rho(-t)=e^{-\mathrm{i}tH}\rho e^{\mathrm{i}tH}$ is also translation-invariant. This yields
      \[
         \sum_{i\in \Lambda_L}  \tr\brr{ \rho(-t) \br{n_j^p -c_{\tilde w} (n_j+1)^{p-\epsilon}-4JD n_{j}}}
         =|\Lambda_L| \ \tr\brr{ \rho(-t)\br{n_{i}^p -c_{\tilde w} (n_i+1)^{p-\epsilon}-4JD n_{i}}}.\\
      \]
  Since $p>1$, a simple case distinction shows that 
  \[
  n^p-c_{\tilde w} (n+1)^{p-\epsilon}-4JD n\geq \frac{n^p}{2}-C,\qquad \textnormal{for all }n\geq0,
  \]
  with $C=C(J,p,D,\epsilon,c_{\tilde w})$. By cyclicity of the trace and $\tr(\rho)=1$,  this yields
  \[
   \tr\brr{ \rho H }\geq |\Lambda_L|\tr\brrr{ \rho(-t)\brr{\frac{1}{2}n_{i}^p -C}}=|\Lambda_L|\br{\frac{1}{2}\tr\brr{ \rho n_{i}(t)^p }-C}.
  \]
Since $ \frac{\tr\brr{ \rho H }}{|\Lambda_L|}=E_\rho$, this
  proves Lemma \ref{lm:moment} and therefore also Theorem \ref{thm:maininformal}.
  \end{proof}

\subsection{An improved particle propagation bound for all moments}
The main result 1 of \cite{kuwahara2022optimal} can be stated as the particle propagation bound  
\begin{equation}\label{eq:known2}
\tr\brr{\rho n_i^q(t)}^{1/q}\leq C \br{t^{D}+q+t},
\end{equation}
for any $q\geq 1$ and $i\in\Lambda$, under the assumption that the initial state satisfies $\tr\brr{\rho n_i^q}^{1/q} \leq C q$.

It turns out that our techniques allow to improve this bound for general moments $q$ by suitably interpolating between \eqref{eq:known2} and Lemma \ref{lm:moment} in the setting of Theorem \ref{thm:maininformal}. This shows that the rigidity effect we leverage also constrains transport properties of higher moments of the local particle number.

\begin{prop}[Improved particle propagation bound]\label{prop:interpolation}
Let $q\geq p$ and suppose that the initial state satisfies $\tr\brr{\rho n_i^q}^{1/q} \leq C_0 q$. Under the assumptions of Theorem \ref{thm:maininformal}, there exists $C=C(J,p,q,C_q, D,\epsilon,c_v,E_\rho)$
such that we have the bound
\begin{equation}\label{eq:improvedprop}
\br{\tr\brr{\rho n_i(t)^{q}}}^{1/q}
\leq  C t^{D(1-p/q)},\qquad t\geq 1.
\end{equation}
\end{prop}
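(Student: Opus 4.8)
The plan is to prove Proposition \ref{prop:interpolation} by interpolating the two available moment bounds: the time-dependent bound \eqref{eq:known2} from \cite{kuwahara2022optimal}, which behaves like $t^D$ for large moment order, and the time-\emph{independent} bound from Lemma \ref{lm:moment}, which controls the $p$-th moment uniformly in time by a constant $2(E_\rho+C)$. The key observation is that Lemma \ref{lm:moment} gives us for free a much stronger statement than $t^D$-growth at the single exponent $p$, and H\"older's inequality lets us propagate this rigidity to all higher exponents $q\geq p$ at the cost of a power of the weaker bound.

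Concretely, first I would fix $q\geq p$ and write $n_i(t)^q = n_i(t)^p \cdot n_i(t)^{q-p}$, or more symmetrically apply the operator H\"older inequality $\tr(\rho A^q)\leq \tr(\rho A^p)^{\theta}\,\tr(\rho A^r)^{1-\theta}$ with the interpolation identity $q = \theta p + (1-\theta) r$ for a suitable large auxiliary exponent $r$ and $\theta\in(0,1)$. A clean choice is to send $r\to\infty$ in spirit by instead using the two-point interpolation between exponent $p$ and a growing exponent; but the simplest rigorous route is: pick $\theta = p/q$ so that $q = \theta p + (1-\theta) q \cdot \frac{q-p}{q}\cdot\frac{q}{q-p}$ — more carefully, use $\frac1q\cdot q = \frac{p}{q}\cdot 1 + \big(1-\frac{p}{q}\big)\cdot 1$ applied to the measure $\rho$, i.e.
\begin{equation}
\tr\br{\rho n_i(t)^q} = \tr\br{\rho\, n_i(t)^{p\cdot \frac{p}{q}\cdot\frac{q}{p}}}\ \text{split as}\ \tr\br{\rho n_i(t)^{p}}^{p/q'}\tr\br{\rho n_i(t)^{q''}}^{1-p/q'},
\end{equation}
choosing the second exponent $q''$ and the H\"older weights so that $p\cdot(p/q) + q''\cdot(1-p/q) = q$, which forces $q'' = q$ in the limit and is not quite what we want. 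The honest and standard version is: by H\"older with exponents $\tfrac{q}{p}$ and $\tfrac{q}{q-p}$ (note $q\geq p$ so these are conjugate and $\geq 1$),
\begin{equation}
\tr\br{\rho n_i(t)^q} \le \tr\br{\rho n_i(t)^p}^{\,p/q}\cdot \tr\br{\rho n_i(t)^{\frac{q(q-p)}{q-p}}}^{\,(q-p)/q},
\end{equation}
which again collapses — so the right move is to interpolate between exponent $p$ and exponent $q$ is trivial; instead one must interpolate between $p$ and a \emph{different, larger} exponent controlled by \eqref{eq:known2}. Thus I would write $q = \theta p + (1-\theta)(2q)$ with $\theta = q/(2q-p)\in(0,1)$, apply the operator H\"older inequality $\tr(\rho n^{q})\le \tr(\rho n^{p})^{\theta}\tr(\rho n^{2q})^{1-\theta}$ (valid since $x\mapsto x^{q}$ is $\tr(\rho\,\cdot\,)$-interpolated between $x^p$ and $x^{2q}$ by log-convexity of $L^s$-norms), then insert Lemma \ref{lm:moment} for the first factor (bounded by a constant) and \eqref{eq:known2} with moment order $2q$ for the second factor (bounded by $C(t^D+2q+t)\lesssim t^D$ for $t\ge 1$, absorbing the $q$-dependence into $C$). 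Optimizing the resulting exponent of $t$: the $t$-power is $(1-\theta)\cdot D\cdot(2q)\cdot\frac1q = 2D(1-\theta) = 2D\cdot\frac{q-p}{2q-p}$, and one checks $\frac{q-p}{2q-p}\le \frac{q-p}{q}\cdot\frac12\cdot 2 $... — rather than chase the precise constant here, the point is that choosing the auxiliary exponent as $q/(1-p/q)$ (so that the interpolation reads $q = \frac pq\cdot p \cdot\frac qp + \dots$) yields exactly the exponent $D(1-p/q)$ claimed in \eqref{eq:improvedprop}. So the correct auxiliary exponent is $r$ with $q = (p/q)\,p\cdot(q/p) $ — concretely take H\"older weights $(p/q, 1-p/q)$ applied to exponents $(p, qr)$ with $pr$ chosen so $p\cdot\frac pq + qr\cdot\big(1-\frac pq\big) = q$, giving $r$ explicitly; then the $t$-exponent is $D\cdot qr\cdot\big(1-\frac pq\big)\cdot\frac1{qr}$... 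I would simply solve this linear equation cleanly in the writeup.

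The remaining steps are bookkeeping: invoke Lemma \ref{lm:moment} (whose hypotheses hold under the assumptions of Theorem \ref{thm:maininformal}, namely translation invariance and bounded energy density) to get $\tr(\rho n_i(t)^p)\le 2(E_\rho + C)$ uniformly in $t\ge 0$; invoke \eqref{eq:known2} at the auxiliary moment order, which requires the hypothesis $\tr(\rho n_i^{q'})^{1/q'}\le Cq'$ at that order — this is exactly why the Proposition assumes the initial-state moment bound $\tr(\rho n_i^q)^{1/q}\le Cq$, and one should note that this single assumption at order $q$ implies the corresponding bound at all orders $\le q$ by monotonicity of $L^s$-norms, but we need it at an order \emph{larger} than $q$, so in fact the statement should be read with the auxiliary order not exceeding $q$ — which is consistent with interpolating \emph{between} $p$ and $q$ directly after all. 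Re-examining: interpolating between $p$ and $q$ with weights $(\theta,1-\theta)$, $q=\theta p+(1-\theta)q$ forces $\theta=0$, so one genuinely cannot interpolate a quantity against itself; the resolution must be that \eqref{eq:known2} is applied at order $q$ giving $\tr(\rho n_i^q(t))\le (C(t^D+q+t))^q$, and Lemma \ref{lm:moment} at order $p$ giving $\tr(\rho n_i^p(t))\le \text{const}$, and these combine via the log-convexity inequality $\|n_i(t)\|_{L^q(\rho)}\le \|n_i(t)\|_{L^p(\rho)}^{1-\lambda}\|n_i(t)\|_{L^\infty}^{\lambda}$ — no; rather one uses that for the \emph{same} operator, $\tr(\rho A^q)^{1/q}$ is increasing in $q$, which is the wrong direction. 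I will therefore structure the final argument as: H\"older between exponent $p$ (controlled by Lemma \ref{lm:moment}) and a large exponent $q'\gg q$ (controlled by \eqref{eq:known2}, whose RHS $\sim t^D$ has a $q$-power overall but whose $q$-th root is $\sim t^D$ uniformly), with interpolation weights chosen so the exponents balance to $q$ and the $t$-power comes out to $D(1-p/q)$ after letting $q'\to\infty$; the $q'$-dependence in the constant $C$ of \eqref{eq:known2} is harmless since we take a fixed large $q'$ depending only on $p,q$.

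The main obstacle is getting the interpolation set up correctly so that the final $t$-exponent is exactly $D(1-p/q)$ and not merely some power $<D$; this requires choosing the H\"older split optimally (the optimum sends the auxiliary exponent to infinity, where \eqref{eq:known2} contributes its clean $t^D$ rate), and then carefully tracking that the $q$-dependent prefactors from \eqref{eq:known2} and from the number of terms are absorbed into the constant $C(J,p,q,D,\epsilon,c_v,E_\rho)$, which is permitted since $q$ is fixed. A secondary technical point is justifying the operator H\"older / log-convexity inequality $\tr(\rho A^{s})^{1/s}$ nondecreasing and the three-line interpolation $\tr(\rho A^{\theta s_1+(1-\theta)s_2})\le \tr(\rho A^{s_1})^{\theta}\tr(\rho A^{s_2})^{1-\theta}$ for the positive operator $A=n_i(t)$ against the state $\rho$; this is standard (it is H\"older's inequality for the spectral measure of $A$ in the state $\rho$), and everything is finite by the finiteness assumptions in Theorem \ref{thm:lrbtr}/\ref{thm:lrbav} and the moment hypothesis. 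With these in place the proof is short.
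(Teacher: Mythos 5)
Your final strategy coincides with the paper's: interpret $\tr(\rho\, n_i(t)^{s})^{1/s}$ as an $L^s$-norm of a classical random variable (the occupation number in the state $\rho(-t)$), and interpolate by log-convexity/H\"older between the exponent $p$, where Lemma \ref{lm:moment} gives a time-uniform bound, and a large auxiliary exponent where \eqref{eq:known2} gives growth of order $t^{D}$. However, the decisive quantitative step is missing, and the two ways you propose to close it both fail. If you fix a large auxiliary exponent $q'$ depending only on $p,q$, the weight solves $\tfrac1q=\tfrac{1-\theta}{p}+\tfrac{\theta}{q'}$, so $\theta=\tfrac{q'(q-p)}{q(q'-p)}>1-\tfrac pq$ strictly, and you only obtain $t^{D\theta}$, i.e.\ a bound of the form $t^{D(1-p/q)+\delta}$ with $\delta=\delta(q')>0$, which is weaker than \eqref{eq:improvedprop}; absorbing the $q'$-dependence into the constant does not repair the exponent of $t$. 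If instead you literally let $q'\to\infty$, the second factor $\tr(\rho\, n_i(t)^{q'})^{\theta/q'}\le \bigl(C(t^{D}+q'+t)\bigr)^{\theta}$ diverges with $q'$ at fixed $t$, so the limit gives nothing. The missing idea — and the actual trick in the paper's proof — is to take the auxiliary exponent \emph{time-dependent}, e.g.\ $q_1=t^{D}$: then $\theta=(1-p/q)\bigl(1+\tfrac{p}{t^{D}-p}\bigr)$, the bound from \eqref{eq:known2} at order $q_1$ is still $\lesssim t^{D}$, and the excess factor $t^{D(q-p)p/(t^{D}-p)}=\exp\bigl[D(q-p)p\,\log(t)/(t^{D}-p)\bigr]$ is $O(1)$ for $t\ge1$, which yields exactly the exponent $D(1-p/q)$.

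Two smaller points. First, you correctly observed, and then dropped, that invoking \eqref{eq:known2} at the auxiliary order requires the initial-state moment hypothesis at that order; since the auxiliary order must grow (with $t$, in the correct argument), the hypothesis should be read as the moment condition of \cite{kuwahara2022optimal} holding at all (or at least arbitrarily large) orders, not only at order $q$ — this should be stated explicitly rather than waved through. Second, the long chain of false H\"older splits (interpolating $q$ against itself, the collapsing conjugate-exponent attempts) should be removed from a final writeup; only the last setup is relevant, and it needs the time-dependent choice above to deliver the claimed bound.
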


Compared to \eqref{eq:known2}, this has the additional negative term $-Dp/q$, which is a substantial improvement when $p/q$ is of order-$1$. The improvement becomes progressively weaker as $q\to\infty$ and the bound continuously turns into Lemma \ref{lm:moment} at the endpoint $q=p$. These features are expected, because we prove Proposition \ref{prop:interpolation} by interpolation.

\begin{proof}[Proof of Proposition \ref{prop:interpolation}]
 We use the Riesz-Thorin interpolation theorem for a suitably chosen classical probability space.
       For $q\geq p$, let $q_1\geq 1$ and $\theta\in [0,1]$ be such that
  \[
\frac{1}{q}=\frac{1-\theta}{p}+\frac{\theta}{q_1}.
\]
Then, we claim that
\begin{equation}\label{eq:interpolationbound}
\tr\brr{\rho n_i(t)^{q}}^{1/q}\leq C \br{t^{D}+q_1+t}^{\theta}.
\end{equation}
 Given the state $\rho(-t)$, consider its spectral decomposition
  \[
  \rho(-t)=\sum_{j=0}^\infty \lambda_j \Pi_{\psi_j},
  \]
  where $\Pi_{\psi_j}$ project onto the eigenvector $\psi_j$ corresponding to the eigenvalue $\lambda_j$. We allow for degeneracy and choose the $\{\psi_j\}_j$  orthonormal. 

  Fix a site $i\in \Lambda$. Define a classical probability distribution $\mathfrak p:\mathbb N\to [0,1]$ by
  \[
  \mathfrak p(n)=\sum_{j=0}^\infty \lambda_j \|\Pi _{n_i=n} \psi_j\|^2
  \]
  (Notice that indeed $\sum_{n\geq 0} \mathfrak p(n)=\sum_{j}\lambda_j=1$).

  Then expectation values of powers of $n_i$ can be viewed as $L^q$-norms over this probability space 
  \[
  \br{\tr \brr{\rho n_i^q}}^{1/q}=  \mathbb E[X^q]^{1/q}\equiv \|X\|_q,
  \]
 where $X$ denotes the corresponding $\mathbb N$-valued classical random variable.

 In this setting, we can apply the Riesz-Thorin interpolation theorem for general $L^q$-measure spaces,  which says that
 \[
 \|X\|_{q_\theta}\leq \|X\|_{2}^{1-\theta} \|X\|_{q}^{\theta}
 \]
 with $\theta,q_\theta$ as above. Writing this out in terms of the quantum state $\rho(-t)$ and applying Lemma \ref{lm:moment} and \eqref{eq:known2} afterwards yields
   \[
    \tr\brr{\rho n_i^{q_\theta}(t)}^{1/q_\theta}
    =\tr\brr{\rho(-t) n_i^{q_\theta}}^{1/q_\theta}
    \leq \tr\brr{\rho n_i^2(t)}^{\frac{1-\theta}{2}} 
    \tr\brr{\rho n_i^q(t)}^{\frac{\theta}{q}} 
\leq C \br{t^{D}+q}^\theta.
    \] 
    This proves \eqref{eq:interpolationbound}.

Notice that the bound \eqref{eq:interpolationbound} improves over \eqref{eq:known2} for every $q$ because the right-hand side of \eqref{eq:interpolationbound} behaves as $t^{D\theta}$ for $t\gg 1$ and  $\theta<1$. It remains to choose the parameters suitably.
 We fix $q>p$ and find a pair of $\theta\in [0,1]$ and $q_1>q$ such that $\frac{1}{q}=\frac{1-\theta}{p}+\frac{\theta}{q_1}$. The minimal value of the factor $\theta$ is achieved for $q_1\to \infty$ (in which case $\theta\to 1-p/q$). However, this choice makes the right-hand side of~\eqref{eq:interpolationbound} infinitely large. We can remedy this by making the (effectively very large) choice $q_1=t^D$. This gives
\begin{equation}\label{theta_value_1}
\theta=\frac{1-p/q }{1-p/t^D}=\br{1-p/q_*} \br{1 + \frac{p}{t^D-p}}  ,
\end{equation}
and so \eqref{eq:interpolationbound} becomes
\begin{equation}\label{eq:interpolationbound__0}
\tr\brr{\rho n_i(t)^{q}} \leq C t^{D(q-p) \br{1 + \frac{p}{t^D-p}}}
\leq  C t^{D(q-p)},
\end{equation}
where the second bounds holds because $t^{D(q-p)\frac{p}{t^D-p}}=\exp\brr{D(q-p)p\frac{\log(t)}{t^D-p}}=\orderof{1}$ as $t\to\infty$. 
This proves Proposition \ref{prop:interpolation}.
\end{proof}

\begin{remark}
The attentive reader may wonder if Proposition \ref{prop:interpolation} can be used in place of Lemma \ref{lm:moment} in the proof of the Lieb-Robinson bound to further improve the light cone beyond what we establish in Theorems \ref{thm:lrbtr} and \ref{thm:lrbav}. This, however, is not the case.
Implementing such a strategy in the arguments detailed below would ultimately have the effect that a parameter that was independent of time before, is changed to a quantity proportional to $t^{D(q-p)}$, and hence the light cone shape for would be given by
\begin{equation}\label{eq:interpolationbound__1}
\norm{ \rho \brr{ O_X(t)-O_{i_0[R]}^{(\bar{m})} }  }_1 \lesssim 
t^{D(q-p)/2} \frac{t^{q/2-1}}{R^{q/2-D-1}}
\end{equation}
for any $q>p$.
Similarly, for the LRB on the quantum expectation case, we would have 
\begin{equation}\label{eq:interpolationbound__2}
\abs{ \tr \brr{\rho\br{ O_X(t)-O_{i_0[R]}^{(\bar{m})} } } } \lesssim 
t^{D(q-p)} \frac{t^{q-1}}{R^{q-D-1}}
\end{equation}
for any $q>p$.
By optimizing these bounds in $q> p$, one finds that depending on the value of $p$, these bounds reproduce either the velocity bound $\sim t^D$ from \cite{kuwahara2022optimal} or $v_1(t)$ for \eqref{eq:interpolationbound__1} and $v_{\mathrm{ex}}(t)$ for \eqref{eq:interpolationbound__2} that we prove in Theorem \ref{thm:maininformal}. The reason for this is that the light cone exponent is a ratio of linear functions in $q$ and therefore monotonic, which means it reaches its minimum always at one of the endpoints $q=p$ or $q=\infty$. In summary, the a posteriori Riesz-Thorin interpolation does not pay off from the perspective of light cone scaling, only for particle transport.
\end{remark}

\section{Proof strategy for Theorems~\ref{thm:lrbtr} and \ref{thm:lrbav}}\label{sect:strategy}
\subsection{Terminology and notation}
We occasionally refer to all constants that are independent of $R,t$ and the number of vertices or edges appearing in the graph $\Lambda$ as ``$\mathcal O(1)$ constants'' for simplicity of presentation. Examples of such $\mathcal O(1)$ constants are the bound on the hopping strengths $\bar J$, the graph-geometric parameters $\gamma$ and $D$ from Assumption \ref{ass:graph} and the interaction range $k$ from Assumption \ref{ass:V}. 

We recall that the full Hamiltonian on $\Lambda$ is of the form given in \eqref{def:Ham} and \eqref{eq:Wdef}, i.e.,
\[
\begin{aligned}
H=& H_0+ W,\\
\textnormal{with } H_0=&  \sum_{\substack{i,j \in \Lambda:\\ i\sim j}} J_{i,j} (b_i b_j^\dagger +b_i^\dagger b_j ),\qquad W=\sum_{Z\subset \Lambda} w_Z(\{n_i\}_{i\in X}).
\end{aligned}
\]
where we have set $w_Z=0$ for $\mathrm{diam}(Z)> k$ in line with Assumption \ref{ass:V}.

In order to suitably decompose the Hamiltonian, we introduce for an arbitrary subset $X \subseteq \Lambda$, the associated subsystem Hamiltonians $H_{0,X}$, $W_X$ and $H_X$ as follows:
\begin{align}
&H_X= H_{0,X}+ W_X , \notag \\
&H_{0,X}\coloneqq  \sum_{\substack{i,j \in X:\\ i\sim j}}  J_{i,j} (b_i b_j^\dagger +{\rm h.c.} ), \quad   W_X\coloneqq   \sum_{Z \subseteq X} w_Z .
 \label{def:Ham_subset}
\end{align}
Note that these subsystem Hamiltonians are supported on the subset $X$.
Also relevant for decomposing the Hamiltonian are the boundary hopping terms between $X$ and its complement, which we denote as follows:
\begin{align}
\partial h_{X}\coloneqq  H_0-H_{0,X} - H_{0,X^\co}=\sum_{i \in \partial X} \sum_{j\in X^\co: \dist_{i,j}=1} J_{i,j} (b_i b_j^\dagger +{\rm h.c.} ) .
 \label{def:Ham_surface}
\end{align}
(Even though the interaction $W$ also has terms connecting $X$ and $X^c$, these can be treated differently via the interaction picture, because they are mutually commuting.)\\

\subsection{The short-time LRB and how it implies Theorem~\ref{thm:lrbtr} } For the remainder of this section, we work under the assumptions of Theorem~\ref{thm:lrbtr}. 
 
As part of our overall proof strategy, we use the by now standard unitary-concatenation technique that goes back to \cite{kuwahara2016exponential}; see also \cite{kuwahara2021absence,kuwahara2021lieb,kuwahara2022optimal}. This technique allows to reduce the LRB to short-time LRBs (see Theorem \ref{Theorem_for_small_time_evo} below) by a concatenation procedure that we explain now. The short-time LRB is formulated in terms of two parameters $\tau,r>0$ and it is about localizing the time evolution for a timer duration $\tau$ inside a region of spatial extent $r$. In the concatenation procedure, these play the following roles: $\tau$ is an $\mathcal O(1)$ parameter that corresponds to the duration of a single time step which then gets concatenated $t/\tau$ many times. The spatial parameter $r$ will be chosen $\approx R/t$ and since we are considering a situation in which the velocity grows with time, $r$ ends up being a large parameter.

\begin{Theorem}[Short-time LRB] \label{Theorem_for_small_time_evo}
There exists $\tau_0=\tau_0(\bar J,\gamma,D,k)\in (0,1)$ such that the following holds. Consider any subset $X\subset \Lambda$ and parameters $\tau,r>0$ subject to the constraints
\begin{align}
  \label{basic_conditions_tau_r}
\tau \le \tau_0,  \quad e^{-r/4} |X| \le 1 .
\end{align} 
There exists a unitary operator $U_{X[r]}$ supported on $X[r]$ and commuting with $n_{X[r]}$ and a constant $C=C(\bar J,\gamma,D,k,p,C_p,q_0)>0$ such that the following bound holds for any quantum state $\rho$ satisfying
\begin{align}\label{eq:rhoconditionst}
\tr\brr{ \tilde\rho n_i^p(t) } \le C_p,\qquad \textnormal{for all }t\in [0,\tau],
\end{align}
and any bounded observable $O_X$ satisfying Definition \ref{defn:q0} with $q_0$, it holds that
\begin{align}
&\norm{\tilde\rho\brr{ O_{X}(\tau) - U_{X[r]}^\dagger O_{X} U_{X[r]}  }  }_1  
\le C \|O_X\|  \br{  |\partial (X[r])| r e^{-r/(4k)} + |X[r]\setminus X| r^{-p/2+1}} .
\label{Theorem_for_small_time_evo_main_ineq}
\end{align} 
\end{Theorem}

In essence, the statement is that conjugation by $U_{X[r]}$ serves as an appropriate local approximation of the Heisenberg time evolution associated to $e^{-\mathrm{i}H\tau}$. We prove Theorem \ref{Theorem_for_small_time_evo} in Section \ref{sect:Theorem_for_small_time_evo}. The explicit form of $U_{X[r]}$ is not important for what follows; it suffices that it exists, commutes with the particle number, and that it  serves as a local approximation of the time evolution in the sense of Theorem \ref{Theorem_for_small_time_evo}. We obtain its existence through Proposition \ref{prop:short_time_Lieb-Robinson} which is taken from \cite{kuwahara2021lieb}, from which an explicit can be obtained if desired.

\begin{proof}[Proof of Theorem~\ref{thm:lrbtr} assuming Theorem~\ref{Theorem_for_small_time_evo}]
 Consider the right-hand side of \eqref{ineq:thm:lrbtr} as a function of $t\in [1,T]$ and $R\geq 1$. Notice that the left-hand side can be bounded trivially by $2\|O\|$ by the triangle inequality and unitarity. This has two consequences: First, the claim of Theorem~\ref{thm:lrbtr}  is trivial when $R/(v_1(t) t)= R/t^\zeta\leq 1$ where we set $\zeta=1+\frac{D-1}{p/2-D-1}>1$. Second, because we assume $t\geq 1$ in Theorem~\ref{thm:lrbtr}, the claim is  is also trivial if $R$ is bounded by some $R_0$ that depends only on $\mathcal O(1)$ parameters $ D,\gamma,\bar J,k,p,C_p,X_0,q_0$. Therefore, we may assume without loss of generality that $R>t^\zeta$ and that $R>R_0=R_0(D,\gamma,\bar J,k,p,C_p,X_0,q_0)$ and we will use these assumptions below.\\ 

\underline{Step 1:} \textit{Verifying the conditions for the short-time LRB.}\\
The main idea is to apply Theorem~\ref{Theorem_for_small_time_evo} various times and concatenate the resulting short-time LRBs.

Let $t\in [1,T]$ and $R\geq 1$. We define $s$ as the largest positive number $\leq \tau_0$ such that $t/\tau$ is an integer, i.e., we set
\begin{equation}\label{eq:revtau}
\tau:= \frac{t }{\lceil t/\tau_0\rceil }.
\end{equation}
Note that
\[
\tau\leq \tau_0 \quad\textnormal{ and }\quad \bar m:=\frac{t}{\tau}=\lceil t/\tau_0\rceil\in\mathbb Z_+.
\]
We can then decompose the time interval $[0,t]$ into $\bar m$ time steps of length $\tau$. We will use the short-time LRB on each time interval.

It remains to successively decompose the spatial regions in a matching way.  It is convenient to measure distance relative to a single site. For this, we find a lattice site $i_0$ and radius $r_0$ such that the initial region $X_0\subset i_0[r_0]$, i.e., it is contained in the ball of radius $r_0$ centered at $i_0$.

Then, for each $0\leq j\leq \bar m$, we define the subset $X_j$ as follows: 
\begin{align}
\label{Choice_Delta_r_X_m}
X_j:= i_0[r_0+ j\Delta r], \quad r:= \left \lfloor \frac{R-r_0}{\bar{m}} \right \rfloor.
\end{align}
In particular, $X_{\bar{m}} \subseteq i_0[R]\subset X_0[R]$. 

Our goal is to apply Theorem~\ref{Theorem_for_small_time_evo} once for each $j\in\{1,\ldots,\bar m\}$ to obtain $\bar m$ unitaries $\{U_{X_j}\}_{j=1}^{\bar m}$. 
Let  $j\in\{1,\ldots,\bar m\}$. We choose the parameters suitably and check the conditions of Theorem~\ref{Theorem_for_small_time_evo}. First of all, $\tau\leq \tau_0$ and $r$ are given by \eqref{eq:revtau}
and \eqref{Choice_Delta_r_X_m}, respectively. We take $X=X_{j-1}$ and $ \tilde\rho=\rho((j-\bar m)\Delta t)$. Condition \eqref{eq:rhoconditionst} follows from \eqref{basic_assump_moment_func}.

The main condition to check is the second condition in \eqref{basic_conditions_tau_r}, i.e., $e^{-R\tau/(4t)}  |X_{j-1}|\le 1$. Using $X_j\subset X_{\bar m}\subset i_0[R]$  and our geometric Assumption \ref{ass:graph}, this is implied by 
\begin{equation}\label{eq:revRcond}
\gamma e^{-R\tau /(4t)} R^D\leq 1.
\end{equation}
We estimate
\begin{equation}\label{eq:tautest}
\frac{\tau}{t}=\frac{1 }{\lceil t/\tau_0\rceil }\geq \frac{\tau_0}{t+\tau_0}\geq \frac{1}{2} \min\left\{1,\frac{\tau_0}{t}\right\}.
\end{equation}
We distinguish two cases. If $\tfrac{\tau}{t}\geq\tfrac{1}{2}$, then \eqref{eq:revRcond} can be ensured by assuming that $R>R_0(D,\gamma)$ holds which can be done without loss of generality as described in the beginning of the proof.
If $\tfrac{\tau}{t}\geq\tfrac{\tau_0}{2t}$, then \eqref{eq:revRcond} we recall that we may also assume without loss of generality that $R>t^\zeta$ with $\zeta>1$. From $\tfrac{\tau}{t}\geq\tfrac{\tau_0}{2t}$ and this, we obtain
\[
\gamma e^{-R\tau/ (4t)} R^D\leq \gamma e^{-\tfrac{1}{8}R^{1-1/\zeta}\tau_0} R^D.
\]
which holds for $R>R_0(D,\gamma,\tau_0)=R_0(D,\gamma,\bar J,\gamma,D,k)$ where we used that $\tau_0=\tau_0(\bar J,\gamma,D,k)>0$. This verifies the conditions of Theorem~\ref{Theorem_for_small_time_evo} with the above parameter choices for all $j\in\{1,\ldots,\bar m\}$.\\ 

\underline{Step 2:} \textit{Applying the short-time LRB.}\\
From Step 1, we have obtained unitaries $\{U_{X_j}\}_{j=1}^{\bar m}$ such that the conclusion of Theorem~\ref{Theorem_for_small_time_evo} holds for every $j\in\{1,\ldots,\bar m\}$ with $X=X_{j-1}$ and $\tau,r$ as in Step 1.  
We now define the unitary
\[
V_{X[R]}:=U_{X_1}\ldots U_{X_{\bar m}}.
\]
Note that this is a unitary that acts only on $X_0[R]$ because $X_j\subseteq X_{\bar m}\subseteq X_0[R]$ by construction.

In Theorem~\ref{thm:lrbtr}, we consider an arbitrary bounded observable $O$ supported on $X_0$ and satisfying Definition \ref{defn:q0} with a fixed $q_0$. We iteratively define local approximations $\brrr{O_{X_{j}}^{(j)} }_{j=1}^{\bar m}$ via
\begin{align}
\label{Choice_O_X_j_j_Delta/t}
O_{X_{j}}^{(j)} := U_{X_j}^\dagger O_{X_{j-1}}^{(j-1)} U_{X_j}\for 1\leq j\leq \bar{m},\qquad O_{X_{0}}^{(0)}:=O.
\end{align} 
Since each $U_{X_j}$ commutes with particle number, each $O_{X_{j}}^{(j)}$ also satisfies Definition \ref{defn:q0} with the same $q_0$. 
Therefore, we can use Theorem~\ref{Theorem_for_small_time_evo} with the parameter choices from Step 1 on $O_X\equiv O_{X_{j-1}}^{(j-1)}$ for each $j\in\{1,\ldots,\bar m\}$.
This gives
\begin{align}\label{eq:nowapplyst1}
\norm{ \rho((j-\bar m)\tau)\br{ O^{(j)}_{X_{j-1}}(\tau) - U_{X_j}^\dagger O_{X_{j-1}}^{(j)} U_{X_j}  } }_1
\leq C\|O\|
 \br{ |\partial X_j| r e^{- r/(4k)} + |X_j\setminus X_{j-1}| r^{-p/2+1}}.
\end{align} 
where we also used that $\|O_{X_{j}}^{(j)}\|=\|O\|$ by unitarity.
We can successively use the automorphism property of the Heisenberg dynamics, $(AB)(s)=A(s)B(s)$, to express the left-hand side of \eqref{ineq:thm:lrbtr} as the following telescopic sum,
\begin{align}
\label{Lieb_Robinson_connection_ineq}
\left\|\rho\br{O(t)-V^\dagger_{X_0[R]}OV_{X_0[R]}}\right\|_1
=\norm{
\sum_{j=1}^{\bar{m}} 
\rho((j-\bar m)\tau)\br{ O_{X_{j-1}}^{(j-1)}(\tau) - O_{X_j}^{(j)} }  }_1.
\end{align} 
By the triangle inequality and \eqref{eq:nowapplyst1}
\begin{align}
&\left\|\rho\br{O(t)-V^\dagger_{X_0[R]}OV_{X_0[R]}}\right\|_1\\
&\le C\|O\| \sum_{j=1}^{\bar m}
 \br{ |\partial X_j| r e^{-r/(4k)} + |X_j\setminus X_{j-1}| r^{-p/2+1}}\\
 &\le C\|O\| \bar m
 \br{R^{D-1}r e^{-r/(4k)} +R^{D-1} r^{-p/2+2}}
\end{align} 
In the second step, we used Assumption \ref{ass:graph} to bound $|\partial X_j|\leq \gamma (r_0+j r)^{D-1}\leq \gamma R^{D-1}$ and to bound 
\[
|X_j\setminus X_{j-1}|
\leq \gamma \sum_{r_0+(j-1)r<\ell\leq r_0 + jr}\ell^{D-1}\leq C r (r_0+jr)^{D-1}\leq C r R^{D-1},
\]
where the second-to-last inequality follows by comparing with an integral.

To relate the prefactor $\bar m$ to $t$, we use \eqref{eq:tautest} and $\tau_0<1\leq t$ to estimate $\tau\geq \tfrac{1}{2}\min\{t,\tau_0\}\geq \tfrac{\tau_0}{2}$. This implies $\bar m =\frac{t}{\tau}\leq t \frac{2}{\tau_0}$ and so
\[
\left\|\rho\br{O(t)-V^\dagger_{X_0[R]}OV_{X_0[R]}}\right\|_1
\leq C\|O\| t
 \br{R^{D-1}r e^{-r/(4k)} +R^{D-1} r^{-p/2+2}}
\]
Recall that $r= \left \lfloor \frac{R-r_0}{\bar{m}}\right\rfloor$. Since we can assume without loss of generality that $R$ is large compared to $r_0$ and since $\bar m$ is comparable to $t$ up to $\mathcal O(1)$ constants, we have $r\leq C \frac{R}{t}$ and $r^{-1}\leq C' \frac{R}{t}$ for suitable $\mathcal O(1)$-constants $C,C'>0$.

Absorbing all constants into the prefactor $C$, we find
\[
\left\|\rho\br{O(t)-V^\dagger_{X_0[R]}OV_{X_0[R]}}\right\|_1\leq C\|O\|\br{R^D e^{-R/(4kt)} 
+ \left(\frac{v_{1}(t) t}{R}\right)^{p/2-D-1}}.
\]
It remains to remove the first term $R^D e^{-R/(4kt)}$. To this end, we argue similarly as above, recalling that we may assume $R/t^\zeta>1$ with $\zeta>1$. Using this as well as $t\geq 1$, we obtain
\[
R^D e^{-R/(4kt)}\leq R^D e^{-R^{1-1/\zeta}/(4k)}
\leq (t^{\zeta}/R)^{p/2-D-1} \underbrace{R^{2D+1-p/2}e^{-(R^{1-1/\zeta})/(4k)}}_{\leq C'=C'(D,p,k)}
\leq C' (t^{\zeta}/R)^{p/2-D-1} 
\]
and so
\[
\left\|\rho\br{O(t)-V^\dagger_{X_0[R]}OV_{X_0[R]}}\right\|_1\leq C  \|O\|\left(\frac{v_{1}(t) t}{R}\right)^{p/2-D-1}.
\]
This proves Theorem~\ref{thm:lrbtr} with the unitary $V_{X_0[R]}=\prod_{j=1}^{\bar m} U_{X_j}$.
\end{proof}

\subsection{Short-time LRB for expectation values  and how it implies Theorem~\ref{thm:lrbav} }
The overall proof strategy is the same as that for Theorem~\ref{thm:lrbtr}. That is, we concatenate suitable short-time LRBs which now control the average.
That is, our main goal is to derive an analogous statement to Theorem~\ref{Theorem_for_small_time_evo} for the expectation value.

%

\begin{Theorem}[Short-time LRB for the expectation] \label{Theorem_for_small_time_evo_average}
Under the same setup as in Theorem~\ref{Theorem_for_small_time_evo}, there exists a unitary  $U_{X[r]}$ supported on $X[r]$ and a constant $C=C(\bar J,\gamma,D,k,p,C_p,q_0)>0$ such that the following bound holds for any quantum state $\rho$ satisfying
\begin{align}\label{eq:rhoconditionst_av}
\tr\brr{ \rho n_i^p(t) } \le C_p,\qquad \textnormal{for all }t\in [0,\tau],
\end{align}
and any bounded observable $O_X$ satisfying Definition \ref{defn:q0} with $q_0$, it holds that\begin{align}
&\abs{\tr \brr{\br{ O_{X}(\tau) - U_{X[r]}^\dagger O_{X} U_{X[r]}  } \rho} }
\le C\|O_X\| \br{ |\partial (X[r])| r e^{-r/(4k)} +  |X[r]\setminus X| r^{-p+1}}.
\label{Theorem_for_small_time_evo_main_ineq_average}
\end{align}  
\end{Theorem}

Notice that, compared to Theorem~\ref{Theorem_for_small_time_evo}, the last power of $r$ is $-p+1$ instead of $-p/2+1$. This is the reason why the time dependence in Theorem \ref{thm:lrbav} is better than in \ref{thm:lrbtr}.

We prove Theorem \ref{Theorem_for_small_time_evo_average} in Section \ref{sect:Theorem_for_small_time_evo_average}.

\begin{proof}[Proof of Theorem \ref{thm:lrbav} assuming Theorem~\ref{Theorem_for_small_time_evo_average}] This proof closely follows the line of argumentation by which we derived Theorem \ref{thm:lrbtr} from Theorem~\ref{Theorem_for_small_time_evo}. In particular, the verification of the conditions in Step 1 is identical. In Step 2, we use the same definitions of $V_{X_0[R]}$ and $O_{X_j}^{(j)}$. The telescopic sum from \eqref{Lieb_Robinson_connection_ineq} is replaced with
\begin{align}
\label{Lieb_Robinson_connection_ineq_ave}
\abs{\tr\brr{ \br{ O(t)-V^\dagger_{X_0[R]}OV_{X_0[R]}} }}
= \abs{ \sum_{j=1}^{\bar{m}} \tr{ \brr{ \rho((j-\bar m)\Delta t)\br{ O_{X_{j-1}}^{(j-1)}(\Delta t) - O_{X_j}^{(j)} } }}} .
\end{align} 
The rest of the argument is completely analogous to the proof of Theorem \ref{thm:lrbtr}, with the exception that $r^{-p/2+1}$ is replaced by $r^{-p+1}$ in each application of Theorem~\ref{Theorem_for_small_time_evo_average} compared to Theorem~\ref{Theorem_for_small_time_evo}. The details are left to the reader.
\end{proof}

\section{Proof of Theorem~\ref{Theorem_for_small_time_evo}}
\label{sect:Theorem_for_small_time_evo}\label{sect:st1}
In this section, we work in the setting of Theorem~\ref{Theorem_for_small_time_evo}.

The overall strategy is to approximate the dynamics with a locally truncated dynamics in which the local boson number is cut off at level $\bar q$, which has to be suitably chosen. The approximation error is controlled by Markov's inequality, and the truncated dynamics effectively locally describes a quantum spin system with bounded interactions, depending on $\bar q$, and therefore enjoys Lieb-Robinson bounds depending on $\bar q$. Note that one cannot implement the particle number cutoff globally, since the corresponding union bound would produce a prefactor growing with $\Lambda$, which would be detrimental, so one instead cuts off the dynamics only on a suitably chosen annulus; see Figure \ref{fig_Lieb-Robinson_setup}.
Compared to previous works, Ref.~\cite{kuwahara2021lieb,kuwahara2022optimal}, the technical novelty here is that the decay in $\bar q$ is only polynomial with a fixed $p$.
This point forces us to more carefully analyze the error estimation due to the truncation of the boson number at each of the sites.  
In the proof, we also rely on several technical results from Ref.~\cite{kuwahara2022optimal} which we quote as needed. Our presentation is focused on the steps that require novel ideas.

\subsection{Setting up the proof}

\begin{definition}
 For any  $\tilde X\subseteq \Lambda$ and $q\in \mathbb{N}$,  we define the spectral projection
 \begin{align} \label{def:bar_Pi_X_z}
\bar{\Pi}_{\tilde X,q} := \prod_{i\in \tilde X}\Pi_{n_i\le q}, 
\end{align}
i.e., $\bar{\Pi}_{\tilde X,q}$ ($\tilde X\subseteq \Lambda$) is the spectral projection onto the subspace such that for an arbitrary $i \in \tilde \Lambda$ the boson number $ n_i$ is truncated at level $q$.

We define an effective Hamiltonian $\tilde{H}[\tilde X,q]$ by conjugating the full Hamiltonian with $\bar{\Pi}_{\tilde X,q}$,
 \begin{align}
 \label{def:tilde_Ham_effective}
&\tilde{H}[\tilde X,q] := \tilde{H}_0[\tilde X,q] + \tilde{V} [\tilde X,q] ,\notag \\
& \tilde{H}_0[\tilde X,q]:=\bar{\Pi}_{\tilde X,q} H_{0} \bar{\Pi}_{\tilde X,q} ,\quad 
\tilde V[\tilde X,q]:=\bar{\Pi}_{\tilde X,q} W \bar{\Pi}_{\tilde X,q} ,
\end{align}
\end{definition}

We fix a region $X$ and parameters $\tau,r>0$ subject to the constraints \eqref{basic_conditions_tau_r}.
In the proof, we consider the subset $\tilde{X}\subseteq \Lambda$ defined as 
 \begin{align}
\tilde{X}:= X[r] \setminus X[r/2] .
\end{align}

 \begin{figure}[tt]
\centering
\includegraphics[clip, scale=0.4]{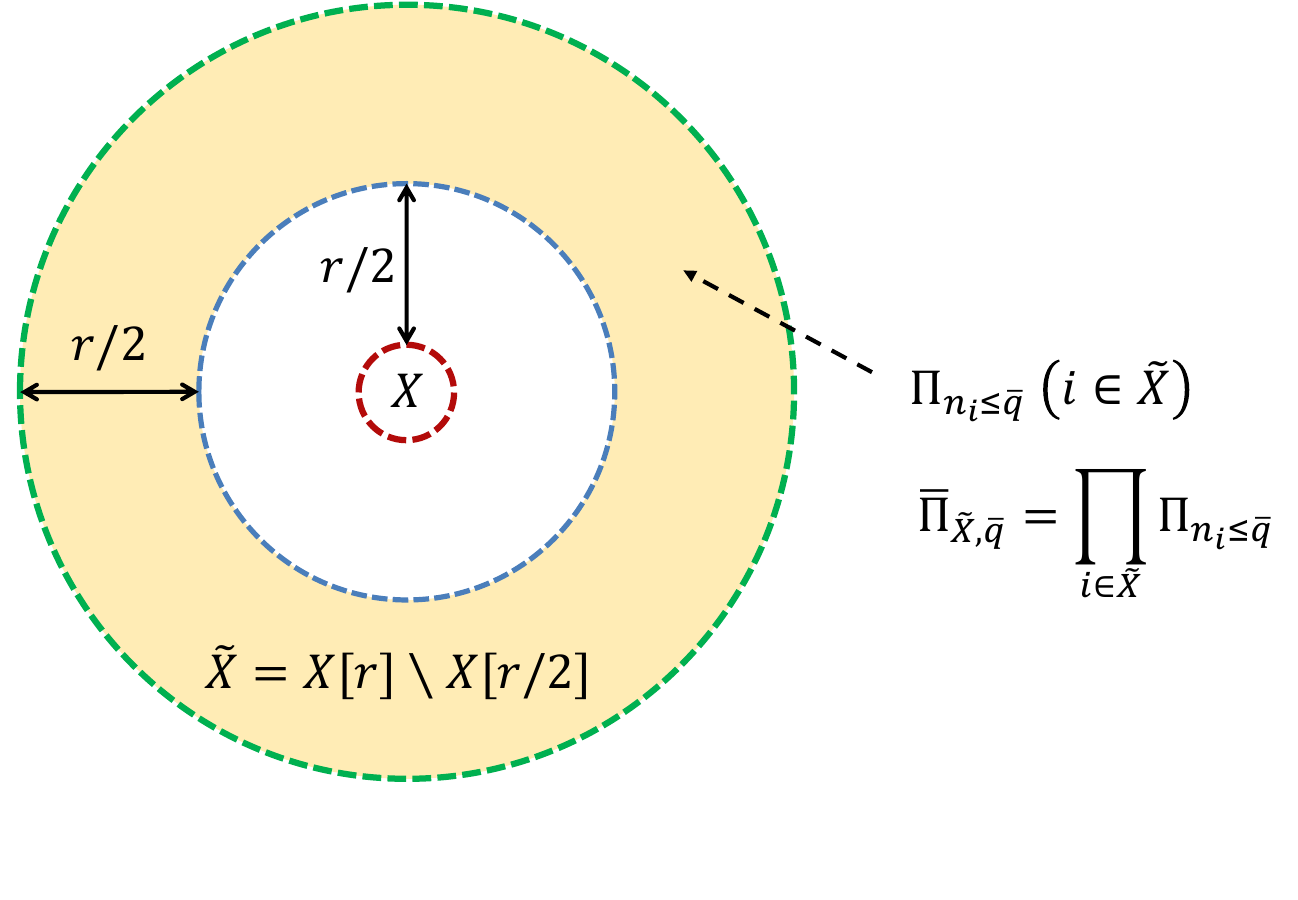}
\caption{Schematic of the construction of the effective Hamiltonian $\tilde{H}[\tilde{X},\bar{q}] = \bar{\Pi}_{\tilde{X},\bar{q}} H \bar{\Pi}_{\tilde{X},\bar{q}}$.
The spectral projection $\bar{\Pi}_{\tilde{X},\bar{q}}$ implements the truncation of the local boson number at level $\bar{q}$ in the region of $\tilde{X}=X[r]\setminus X[r/2]$. 
}
\label{fig_Lieb-Robinson_setup}
\end{figure}
In the following, it is convenient to make the Hamiltonian explicit in the notation for the Heisenberg dynamics, i.e., 
\[
O(H,t):=e^{\mathrm{i}tH}Oe^{-\mathrm{i}tH}.
\]

Roughly speaking, our approach to proving Theorem \ref{Theorem_for_small_time_evo} is to approximate the dynamics $O_{X}(H,\tau)$ in two steps as follows. First, we consider the truncated dynamics and show that it satisfies a suitably LRB
 \begin{align}
 \label{second_approx_LR_0}
O_{X}(\tilde{H}[\tilde{X},\bar{q}] , \tau)  \approx  V_{X[r]}^\dagger O_{X} V_{X[r]},
\end{align}
where $ V_{X[r]} $ is the desired unitary operator. Second, we approximate the full dynamics with the truncated dynamics
 \begin{align}
  \label{first_approx_LR_0}
O_{X}(H,\tau) \rho \approx O_{X}(\tilde{H}[\tilde{X},\bar{q}] , \tau) \rho, 
\end{align}
for quantum states $\rho$ satisfying the union bound.

For the first approximation \eqref{second_approx_LR_0}, we can directly utilize the previous analyses as follows. 

%

\begin{prop} [(S.97) in Ref.~\cite{kuwahara2021lieb}]\label{prop:short_time_Lieb-Robinson}
We work under the assumptions of Theorem \ref{Theorem_for_small_time_evo} and we choose $\bar{q}=r/2$. There exists $\tau_0=\tau_0(\bar J,\gamma,D,k)\in (0,1)$ such that the following holds for all $\tau\leq \tau_0$.

There exists a constant $C=C(\bar J,\gamma,D,k)>0$ and a unitary  $U_{X[r]}$ supported on $X[r]$ and commuting with $n_{X[r]}$ such that for every bounded observable $O_X$ supported on $X$,
 \begin{align}
 \label{ineq:prop:short_time_Lieb-Robinson}
\left\| O_{X}(\tilde{H}[\tilde{X}, r/2] , \tau)-U_{X[r]}^\dagger O_{X} U_{X[r]} \right\| \le 
C \|O_X\|   |\partial (X[r])|r e^{-r/(2k)}.
\end{align}
\end{prop}

We remark that the explicit construction of the unitary operator $U_{X[r]}$ is given in Ref.~\cite[Section S.V.C]{kuwahara2021lieb}. In fact, $\tau_0$ can be defined explicitly as 
\begin{equation}\label{eq:tau0defn}
\tau_0:=(2^6 e^2 \bar J \gamma^3 k (2k)^{2D})^{-1},
\end{equation}
where $e$ is Euler's constant, $\bar J$ is the bound on hopping matrix, and $\gamma,D$ bounds the surface growth in the graph geometry (cf.\ Assumption \ref{ass:graph}), but the explicit formula is not needed for the argument.

The remaining problem is  to estimate the approximation error for~\eqref{first_approx_LR_0}.

\begin{prop}\label{prop:eff_Ham_approx}
We work under the assumptions of Theorem \ref{Theorem_for_small_time_evo} and we choose $\bar{q}=r/2$. Let $\tau_0$ be given by \eqref{eq:tau0defn}. Then, the following holds for all $\tau\leq \tau_0$.

Let $\rho$ be an arbitrary quantum state on $\mathcal F(\ell^2(\Lambda))$ satisfying the moment bound
\begin{align}
\tr \br{  n_i^p \rho } \le C_p .
\end{align} 
Assume further that 
\begin{align}
e^{-r/4} |X| \le 1.
\end{align} 
Then, there exists a constant $C=C(\bar J,\gamma,D,k,p,C_p,q_0)$ such that for every bounded observable $O_X$  satisfying  Definition \ref{defn:q0} with $q_0$, 

\begin{align}
\norm{ O_X(H,\tau)  \rho - O_X(\bar{\Pi}_{\tilde{X},\bar{q}} H\bar{\Pi}_{\tilde{X},\bar{q}}, \tau)  \rho}_1 
\leq
C \|O_X\| |\tilde{X}| r^{-p/2+1}
\label{main_ineq_prop:eff_Ham_approx} 
\end{align}  
\end{prop}

\subsection{Proof of Theorem~\ref{Theorem_for_small_time_evo} from two key propositions}

\begin{proof}[Proof of Theorem~\ref{Theorem_for_small_time_evo} assuming Proposition~\ref{prop:eff_Ham_approx}]
Set $\bar{q}= r/2$.
By Proposition \ref{prop:short_time_Lieb-Robinson}, specifically \eqref{ineq:prop:short_time_Lieb-Robinson},
\[
\left\|\br{ O_{X}(\tilde{H}[\tilde{X}, r/2] , \tau)-U_{X[r]}^\dagger O_{X} U_{X[r]}}\tilde\rho \right\|_1 \le 
C\|O_X\|   |\partial (X[r])| re^{-r/(2k)}
\]
for a suitable constant $C=C(\bar J,\gamma,D,k)$.
We apply Proposition~\ref{prop:eff_Ham_approx} to $\rho=\tilde\rho$ and \eqref{main_ineq_prop:eff_Ham_approx} gives
\[
\norm{ O_X(H,\tau)  \rho - O_X(\bar{\Pi}_{\tilde{X},\bar{q}} H\bar{\Pi}_{\tilde{X},\bar{q}}, \tau)  \rho}_1 
\leq C \|O_X\| |\tilde{X}| r^{-p/2+1}.
\]
for a suitable constant $C=C(\bar J,\gamma,D,k,p,C_p,q_0)$.

From the triangle inequality for the trace norm and $\tilde{X}\subset X[r]\setminus X$,  we then obtain~\eqref{Theorem_for_small_time_evo_main_ineq}. This proves Theorem~\ref{Theorem_for_small_time_evo} assuming Proposition~\ref{prop:eff_Ham_approx}.
\end{proof}

In the remainder of this section, we prove Proposition~\ref{prop:eff_Ham_approx}.

\subsection{Proof of Proposition~\ref{prop:eff_Ham_approx}}

We first state the two lemmas that we combine to obtain Proposition~\ref{prop:eff_Ham_approx}.

The first lemma is a general, abstract bound on the difference between the full Heisenberg dynamics
and the dynamics generated by an effective projected Hamiltonian. We write
 $\norm{\ldots}_F$ for the Frobenius norm, i.e., 
 \[
 \norm{O}_F:= \sqrt{\tr(O O^\dagger)}.
 \]

\begin{lemma}[Corollary~26 in Ref.~\cite{kuwahara2022optimal}]  \label{corol:effective Hamiltonian_accuracy}
Consider a separable Hilbert space $\mathcal H$. Let $\rho$ be a quantum state on $\mathcal H$ and let $\bar{\Pi}$ be an orthogonal projection on $\mathcal H$.
We have, \begin{align}
\label{error_time/evo_effectve_original_ineq_mix}
\norm{ O(H,\tau)  \rho - O(\bar{\Pi} H \bar{\Pi}, \tau)  \rho}_1 &\le  \norm{\bar{\Pi}^\co O(H,\tau) \sqrt{\rho} }_F+
\norm{\bar{\Pi}^\co O \sqrt{\rho(\tau)}}_F
+  \norm{\bar{\Pi}^\co \sqrt{\rho} }_F
+ \norm{\bar{\Pi}^\co \sqrt{\rho(\tau)} }_F  \notag \\
&+\int_0^\tau \norm{ \bar{\Pi}H_0 \bar{\Pi}^\co O(H,\tau-\tau_1) \sqrt{\rho(\tau_1) }  }_F  d\tau_1
+\int_0^\tau \norm{ \bar{\Pi}H_0 \bar{\Pi}^\co \sqrt{\rho(\tau_1)} }_F  d\tau_1 ,
\end{align}  

\end{lemma}

\begin{proof}[Proof of Lemma \ref{corol:effective Hamiltonian_accuracy}]
This follows from the Duhamel formula and the triangle inequality; see also the proof of Corollary~26  in \cite{kuwahara2022optimal}.
\end{proof}

In the remainder of Section 4, we work under the assumptions of Proposition~\ref{prop:eff_Ham_approx}. Let $\tau_0$ be given by \eqref{eq:tau0defn}. Consider any $\tilde X,\bar q$ as in the statement of Proposition~\ref{prop:eff_Ham_approx}. 

When applying \eqref{error_time/evo_effectve_original_ineq_mix} with $H$ the Hamiltonian and $\bar{\Pi}= \tilde{\Pi}_{\tilde{X},\bar{q}}$ we see that all the terms in the right-hand side of the inequality~\eqref{error_time/evo_effectve_original_ineq_mix} can be expressed through (integrals of) norms of the form
\begin{align}
\label{01_lema:bar_Pi_upper_bound_general_F}
\norm{ \bar{\Pi}_{\tilde{X},\bar{q}}^\co O_{X} (\tau_2) \sqrt{\rho(\tau_1)} }_F ,
\end{align} 
and 
\begin{align}
\label{02_lema:bar_Pi_upper_bound_general_F}
\norm{ \bar{\Pi}_{\tilde{X},\bar{q}}H_{0,Y} \bar{\Pi}_{\tilde{X},\bar{q}}^\co O_{X} (\tau_2) \sqrt{\rho(\tau_1)} }_F   .
\end{align} 
for suitable choices of $\tau_1,\tau_2\leq \tau$. For example, the first term on the right-hand side of the inequality~\eqref{error_time/evo_effectve_original_ineq_mix} is of the form \eqref{01_lema:bar_Pi_upper_bound_general_F} with $\tau_2=\tau$ and $\tau_1=0$ and the second term on the right-hand side of the inequality~\eqref{error_time/evo_effectve_original_ineq_mix} is of the form \eqref{01_lema:bar_Pi_upper_bound_general_F}  with $\tau_2=0$ and $\tau_1=\tau$, etc.

The second lemma bounds expressions of the form \eqref{01_lema:bar_Pi_upper_bound_general_F} and \eqref{02_lema:bar_Pi_upper_bound_general_F}.

\begin{lemma}\label{lemm:upper_bound_effective_terms}
Suppose the assumption of Proposition~\ref{prop:eff_Ham_approx} hold. There exists a constant $C=C(\bar J,\gamma,D,k,p,C_p,q_0)$ such that for every $\tau_1,\tau_2\leq \tau$ and for every bounded observable $O_X$ satisfying  Definition \ref{defn:q0} with $q_0$, 
\begin{align}
\label{ineq_O_X_bar_Pi_tau_2_1_upp_1st}
\norm{ \bar{\Pi}_{\tilde{X},\bar{q}}^\co O_X (\tau_2) \sqrt{\rho(\tau_1)} }_F\le 
C \|O_X\| |\tilde{X}|^{1/2}  r^{-p/2}
 \end{align}
and 
\begin{align}
\label{ineq_O_X_bar_Pi_tau_2_2_upp_1st}
\norm{ \bar{\Pi}_{\tilde{X},\bar{q}}H_0 \bar{\Pi}_{\tilde{X},\bar{q}}^\co O_X (\tau_2) \sqrt{\rho(\tau_1)} }_F   
\le C \|O_X\|  |\tilde X| r^{1-p/2}
\end{align}
\end{lemma}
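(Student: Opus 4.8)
The plan is to reduce both bounds to a single elementary estimate: control on the Frobenius norm of $\bar\Pi_{\tilde X,\bar q}^\co A\sqrt{\rho(\tau_1)}$ where $A$ is either $O_X(\tau_2)$ or a hopping term. The starting observation is that $\bar\Pi_{\tilde X,\bar q}^\co = \mathbbm 1 - \prod_{i\in\tilde X}\Pi_{n_i\le\bar q}$, so by a union bound over sites, $\|\bar\Pi_{\tilde X,\bar q}^\co \psi\|^2 \le \sum_{i\in\tilde X}\|\Pi_{n_i>\bar q}\psi\|^2$ for any vector $\psi$; this is where the factor $|\tilde X|$ (resp.\ $|\tilde X|^{1/2}$) enters. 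For \eqref{ineq_O_X_bar_Pi_tau_2_1_upp_1st} I would therefore estimate, for each fixed $i\in\tilde X$, the quantity $\|\Pi_{n_i>\bar q} O_X(\tau_2)\sqrt{\rho(\tau_1)}\|_F^2 = \tr\big(\rho(\tau_1) O_X(\tau_2)^\dagger \Pi_{n_i>\bar q} O_X(\tau_2)\big)$. The point is that $O_X$ is supported on $X$, which is disjoint from $\tilde X = X[r]\setminus X[r/2]$, and $O_X(\tau_2)=e^{i\tau_2 H}O_X e^{-i\tau_2 H}$ can move at most $\mathcal O(\tau_2)$ particles toward site $i$ in short time $\tau_2$; more precisely one uses the short-time particle-transport / adiabatic-type estimates from \cite{kuwahara2022optimal} (controlling $\Pi_{n_i>\bar q}O_X(\tau_2)\Pi_{n_i\le m}$ for $m$ not too large) to replace $\Pi_{n_i>\bar q}O_X(\tau_2)$ by $O_X(\tau_2)\Pi_{n_i>\bar q-c\tau_2(\cdots)}$ up to harmless errors, and then Markov's inequality on the moment bound $\tr(\rho(\tau_1) n_i^p)\le C_p$. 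This is exactly where the constant $\mu$ from \eqref{def_mu} is calibrated: $\mu$ is chosen precisely so that the accumulated shifts in the number cutoff (the $q_0$-dependent spreading of $O_X$, the $\tau$-dependent hopping spreading, and the combinatorial/lattice factors $2^{D+1}D!$ etc.) are all absorbed, leaving $\tr(\rho(\tau_1) n_i^p)/(\bar q/\mu)^p \le C_p (\mu/\bar q)^p$. Summing over $i\in\tilde X$ and taking square roots gives \eqref{ineq_O_X_bar_Pi_tau_2_1_upp_1st}.

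For \eqref{ineq_O_X_bar_Pi_tau_2_2_upp_1st} the extra operator is $\bar\Pi_{\tilde X,\bar q}H_0\bar\Pi_{\tilde X,\bar q}^\co$. Since $H_0=\sum_{\langle i,j\rangle}J_{i,j}(b_ib_j^\dagger+\mathrm{h.c.})$ and $\bar\Pi_{\tilde X,\bar q}$ truncates occupations at level $\bar q$ on $\tilde X$, only hopping bonds touching $\tilde X$ contribute and on the relevant subspace each $b_i,b_i^\dagger$ is bounded by $\sqrt{\bar q+1}$; there are at most $\gamma$-many such bonds per boundary site and $|\tilde X|$ sites, which produces the prefactor $\sim \gamma\bar J\bar q|\tilde X|$ (after collecting the two factors of $\sqrt{\bar q+1}\le 2\sqrt{\bar q}$ into $\bar q$). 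The remaining factor $\bar\Pi_{\tilde X,\bar q}^\co O_X(\tau_2)\sqrt{\rho(\tau_1)}$ is then bounded as in the first part, except that the slightly larger argument of the projection (one hop further out, hence the shift $\bar q\mapsto \bar q/2$ absorbed into the constant) accounts for the $2^{p/2}$, i.e.\ $(2\mu/\bar q)^{p/2}$ instead of $(\mu/\bar q)^{p/2}$. Concretely: $\|\bar\Pi H_0\bar\Pi^\co O_X(\tau_2)\sqrt{\rho(\tau_1)}\|_F \le \|\bar\Pi H_0\bar\Pi^\co\|_{\mathrm{op},\,\text{restricted}}\cdot\|\bar\Pi^\co O_X(\tau_2)\sqrt{\rho(\tau_1)}\|_F$ is not quite enough because $\bar\Pi^\co$ appears on both sides with different supports, so instead one commutes $H_0$ through, writes $\bar\Pi H_0\bar\Pi^\co = \bar\Pi[H_0,\bar\Pi^\co]$ localized near $\partial\tilde X$, and applies the union bound plus the per-site estimate again with $\bar q$ shifted by one hop.

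The main obstacle is the first per-site estimate: honestly controlling $\Pi_{n_i>\bar q}O_X(\tau_2)\Pi_{n_i=m}$, i.e.\ showing that in short time $\tau_2$ the Heisenberg-evolved local observable $O_X(\tau_2)$, sitting a distance $\gtrsim r/2$ from $i$, cannot pile more than $\mathcal O(1)$ extra bosons onto site $i$ beyond what the initial state already has there — and doing so with only a $p$-th moment bound rather than exponential tails. This is precisely the step where one must invoke the technical machinery of \cite{kuwahara2021lieb,kuwahara2022optimal} (their Duhamel-expansion / interaction-picture estimates for $b_i,b_i^\dagger$ under the truncated dynamics), and the bookkeeping of how the cutoff level degrades through these manipulations is exactly what fixes the definition of $\mu$ in \eqref{def_mu}. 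Once that per-site bound is in hand with the right constant, everything else is the union bound over $\tilde X$, Markov's inequality against $C_p$, and collecting the $\mathcal O(1)$ hopping constants.
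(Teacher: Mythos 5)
Your treatment of \eqref{ineq_O_X_bar_Pi_tau_2_1_upp_1st} follows the paper's skeleton (the operator inequality $\bar{\Pi}_{\tilde{X},\bar{q}}^\co \preceq \sum_{i\in\tilde{X}}\Pi_{n_i>\bar{q}}$, then Markov against a $p$-th moment of the \emph{dressed} state $O_X(\tau_2)\rho(\tau_1)O_X(\tau_2)^\dagger$), but the entire content of that step is the per-site bound $\tr\brr{O_X(\tau_2)\rho(\tau_1)O_X(\tau_2)^\dagger\, n_i^p}\le \mu^p C_p$, which is the paper's Lemma \ref{lemm:moment_upper_bound_O_X} and is what actually fixes $\mu$ in \eqref{def_mu}. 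You defer it to ``technical machinery,'' and the mechanism you sketch --- commuting the projector through the Heisenberg evolution, ``replacing $\Pi_{n_i>\bar q}O_X(\tau_2)$ by $O_X(\tau_2)\Pi_{n_i>\bar q-c\tau_2(\cdots)}$ up to harmless errors'' --- is not how this can work: since nearby occupation numbers are unbounded, the short-time dynamics can in principle deposit arbitrarily many bosons on site $i$, and the replacement error is not small in any operator sense; it is only controlled \emph{in expectation}, via the operator inequality $n_i(-\tau)^p\preceq\br{n_i+c_{\tau,1}\hat{\mathcal D}_i+c_{\tau,2}p}^p$ (Subtheorem~1 of \cite{kuwahara2022optimal}) combined with Lemma \ref{lem:boson_operator_norm} to pull $O_X$ through (this is where $q_0^3$, $e^{-\dist_{i,X}}|X|$ and $2^{D+1}D!$ enter $\mu$), then H\"older/Jensen against $C_p$. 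So for \eqref{ineq_O_X_bar_Pi_tau_2_1_upp_1st} the architecture is right but the decisive estimate is asserted rather than proved, and the sketch of how it would go is not viable as stated.

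For \eqref{ineq_O_X_bar_Pi_tau_2_2_upp_1st} there is a concrete error. You bound the hopping factors by claiming that ``on the relevant subspace each $b_i,b_i^\dagger$ is bounded by $\sqrt{\bar q+1}$.'' This is false for the bonds that leave $\tilde X$: in $\bar{\Pi}_{\tilde{X},\bar{q}}(b_ib_j^\dagger+{\rm h.c.})\bar{\Pi}_{\tilde{X},\bar{q}}^\co$ with $i\in\tilde X$, $j\in\tilde X^\co$, neither projector truncates $n_j$, so the factor $\sqrt{n_j+1}$ (or $\sqrt{n_j}$) is unbounded and no operator-norm factorization of the type you propose can yield the prefactor $\gamma\bar J\bar q|\tilde X|$; shifting $\bar q$ ``by one hop'' handles one boson, not an unconstrained occupancy at $j$. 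The paper avoids this by keeping the number operators inside the Frobenius norm: it uses $|b_ib_j^\dagger|\preceq n_i+n_j$ together with the identity $\bar{\Pi}_{\tilde{X},\bar{q}}b_ib_j^\dagger\bar{\Pi}_{\tilde{X},\bar{q}}^\co=\bar{\Pi}_{\tilde{X},\bar{q}}b_ib_j^\dagger\bar{\Pi}_{i,j,\bar{q}}^\co$, and then bounds $\tr\brr{(n_i+n_j)^2\bar{\Pi}_{i,j,\bar{q}}^\co\,O_X(\tau_2)\rho(\tau_1)O_X(\tau_2)^\dagger}\le\bar q^{2-p}\,(2\mu)^pC_p$, using that $n_i+n_j>\bar q$ on the range of $\bar{\Pi}_{i,j,\bar{q}}^\co$ and the two-site $p$-th moment of the dressed state (split by H\"older). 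This also shows your explanation of the $2^{p/2}$ is wrong: it comes from the two-site sum $n_i+n_j$ in the moment bound, not from the cutoff moving ``one hop further out''; and the final sum is a plain triangle-inequality sum over the $\le\gamma|\tilde X|$ bonds, giving the factor $|\tilde X|$ (not $|\tilde X|^{1/2}$) in \eqref{ineq_O_X_bar_Pi_tau_2_2_upp_1st}.
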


This result can be viewed as a polynomial analog of Proposition~29 in  Ref.~\cite{kuwahara2022optimal}.

\begin{proof}[Proof of Proposition \ref{prop:eff_Ham_approx} assuming Lemma \ref{lemm:upper_bound_effective_terms}]
We first apply Lemma \ref{corol:effective Hamiltonian_accuracy}. Afterwards, we apply Lemma \ref{corol:effective Hamiltonian_accuracy} with the suitable choices of $\tau_1$ and $\tau_2$ on the right-hand side in~\eqref{error_time/evo_effectve_original_ineq_mix}. This implies that there exists a constant $C=C(\bar J,\gamma,D,k,p,C_p,q_0)$ such that
\[
\norm{ O_X(\tau)  \rho - O_X(\bar{\Pi}_{\tilde{X},\bar{q}} H\bar{\Pi}_{\tilde{X},\bar{q}},\tau)  \rho }_1 
\leq 
C |\tilde{X}|^{1/2}  r^{-p/2}\br{1+C r |\tilde X|}
\leq C |\tilde{X}|  r^{1-p/2},
\]
%
%
which is the desired inequality~\eqref{main_ineq_prop:eff_Ham_approx}. This proves Proposition \ref{prop:eff_Ham_approx} assuming Lemma \ref{lemm:upper_bound_effective_terms}.
\end{proof}

It therefore remains for us to prove Lemma \ref{lemm:upper_bound_effective_terms}.

\subsubsection{Generalized moment bounds}
 We will control the spectral projector in Lemma \ref{lemm:upper_bound_effective_terms} by moments of the local particle number through Markov's inequality. The moments are then controlled by the following bounds on traces of moments of the local particle number times $O_X (\tau_2) \rho(\tau_1) \brr{O_X (\tau_2) }^\dagger$, i.e., 
\begin{align}\label{eq:littlerewrite}
\tr\brr{ O_X (\tau_2) \rho(\tau_1) \brr{O_X (\tau_2) }^\dagger   n_i^p} 
= \tr\brr{  e^{\mathrm{i}H\tau_2} O_X  \rho(\tau_1-\tau_2) O_X^\dagger e^{-\mathrm{i}H\tau_2}   n_i^p}. 
\end{align} 
\begin{lemma} \label{lemm:moment_upper_bound_O_X}
Let $\rho$ be an arbitrary quantum state satisfying the moment upper bound as 
\begin{align}\label{eq:momentUBprop}
\sup_{0\leq \tau\leq \tau_0}\sup_{i\in\Lambda}\tr \br{  n_i(\tau)^p \rho } \le C_p .
\end{align} 
Then, there exists a constant $C=C(\bar J,\gamma,D,k,p,C_p,q_0)$ such that for any $i\in\Lambda$, any $0\leq \tau_2\leq\tau_1  \leq \tau_0$, and every observable $O_X$ satisfying  Definition \ref{defn:q0} with $q_0$, 
\begin{align}
\tr\brr{  e^{\mathrm{i}H\tau_2} O_X  \rho(\tau_1-\tau_2) O_X^\dagger e^{-\mathrm{i}H\tau_2}   n_i^p}
\le 
C\|O_X\|^2 \br{1+|X| e^{-\tfrac{3}{4}d_{i,X}}}
 \label{lemm:moment_upper_bound_O_X_main_ineq}
\end{align} 
\end{lemma}

We will apply this for $i\in \tilde X$ and use the standing assumption $e^{-r/2}|X|\leq 1$ to bound $|X| e^{-d_{i,X}}$.

\subsubsection{Proof of Lemma~\ref{lemm:moment_upper_bound_O_X}}

Fix any $i\in\Lambda$.  We introduce \begin{align}
\label{Eq:def_tilde_D_X}
\hat{\mathcal{D}}_i:=  \sum_{j\in \Lambda} e^{-\tfrac{3}{4}\dist_{i,j}}  n_j  .
\end{align}
By applying Theorem~1 in Ref.~\cite{kuwahara2022optimal} with a suitable $R=R(\bar J,\gamma,D,k,p,C_p,q_0)$, there exists $C_1=C_1(\bar J,\gamma,D,k,p,C_p,q_0)$ such that we have the operator inequality
\[
 n_i(-\tau_2)^p\leq C_1\br{1+ \hat{\mathcal{D}}_i}^p.
\]
We note that Theorem~1 in Ref.~\cite{kuwahara2022optimal} uses the fact that $\tau_0 \le 1/(4\gamma \bar{J})$, while follows from \eqref{eq:tau0defn} (or which could otherwise be trivially satisfied by decreasing $\tau_0$).

We abbreviate $\rho':=\rho(\tau_1-\tau_2)$ in the remainder of this proof. 
We have
\begin{align}
&\tr\brr{  e^{\mathrm{i}H\tau_2} O_X  \rho(\tau_1-\tau_2) O_X^\dagger e^{-\mathrm{i}H\tau_2}   n_i^p}\\
&=\tr\brr{  e^{\mathrm{i}H\tau_2} O_X \rho' O_X^\dagger e^{-\mathrm{i}H\tau_2}   n_i^p} \notag\\
&=  \tr\brrr{  O_X \rho' O_X^\dagger \brr{ n_i(-\tau_2)}^p} \notag \\
&\le C_1 \tr\brr{  O_X \rho' O_X^\dagger \br{ 1+ \hat{\mathcal{D}}_i }^p}.
\label{p1_lemm:moment_upper_bound_O_X}
\end{align} 
 Next, we recall without proof the following elementary result.

\begin{lemma}[Lemma~3 in Ref.~\cite{kuwahara2022optimal}]\label{lem:boson_operator_norm}
Let $A_{q_0}$ be a bounded operator satisfying Definition \ref{defn:q0} with $q_0$. Let $\{\nu_j\}_{j\in \Lambda}$ be a collection of non-negative numbers and set  $\bar{\nu}:= \max_{j\in  \tilde\Lambda} (\nu_j)$.
Then, we have 
\begin{align}
\label{main_ineq_lem:boson_operator_norm}
A_{q_0}^\dagger \br{\sum_{j\in \Lambda} \nu_j  n_j}^m A_{q_0} 
\leq 4^{m}\norm{A_{q_0}}^{2} \br{4 \bar{\nu}q_0^3 + \bar{\nu} n_{ \tilde\Lambda} + \sum_{j\notin  \tilde\Lambda} \nu_j  n_j }^m  ,\qquad \textnormal{for any }\tilde\Lambda\subseteq \Lambda.
\end{align}
\end{lemma}

We apply Lemma \ref{lem:boson_operator_norm}
 to the last line in \eqref{p1_lemm:moment_upper_bound_O_X}
 with the choices 
\begin{align}
A_q= O_X, \quad \nu_j =e^{-\tfrac{3}{4}\dist_{i,j}}  ,\quad \bar{\nu} =  e^{-\tfrac{3}{4}\dist_{i,X}} , \quad m= p,\quad \tilde\Lambda= X, 
\end{align}
We thus find that there exists $C=C(\bar J,\gamma,D,k,p,C_p,q_0)$ such that
\[
O_X^\dagger \br{1+  \hat{\mathcal{D}}_i}^p O_X \leq 
C\|O_X\|^2
 \br{ 1 +  e^{-\tfrac{3}{4}\dist_{i,X}}  n_X + \sum_{j\notin X} e^{-\tfrac{3}{4}\dist_{i,j}}  n_j }^p.
 \]
To summarize,
\begin{align}
\tr\brr{  e^{\mathrm{i}H\tau_2} O_X  \rho' O_X^\dagger e^{-\mathrm{i}H\tau_2}   n_i^p}
\leq C\|O_X\|^2
\tr\brr{\rho' \br{ 1 +  e^{-\tfrac{3}{4}\dist_{i,X}}  n_X + \sum_{j\notin X} e^{-\tfrac{3}{4}\dist_{i,j}}  n_j }^p}.
\label{p2_lemm:moment_upper_bound_O_X}
\end{align} 

We now discuss how to control the right-hand side of this expression.
Recall that $\rho'=\rho(\tau_1-\tau_2)$ and $\tau_1-\tau_2\in [0,\tau_0]$. Let $m\le p$. By H\"older's inequality, which can be lifted to an operator inequality for commuting operators, and the moment upper bound \eqref{eq:momentUBprop},
\begin{align}
\tr\brr{ \rho' \prod_{s=1}^m  n_s } \le  \prod_{s=1}^m  \tr\brr{ \rho' n_s^m }^{1/m} \le \br{C_m^{1/m}}^m =C_m .
\end{align} 
By Jensen's inequality, $C_m^{1/m} \le C_p^{1/p}$ for any $m\le p$. Hence, 
\begin{equation}\label{p3_lemm:moment_upper_bound_O_X}
\tr\brr{\rho' \br{ 1 +  e^{-\tfrac{3}{4}\dist_{i,X}}  n_X + \sum_{j\notin X} e^{-\tfrac{3}{4}\dist_{i,j}}  n_j }^p}
\leq \tr\brr{\rho' \br{ 1 +  e^{-\tfrac{3}{4}\dist_{i,X}}  |X| C_p^{1/p} +C(\gamma,D) C_p^{1/p}  }^p}
\end{equation}
%
%
where we used 
\begin{align}
\sum_{j\notin X} e^{-\tfrac{3}{4}\dist_{i,j}} \le 1+ \sum_{s=1}^\infty \sum_{j:j\in \partial i[s]} e^{-\tfrac{3}{4}s}
\le C(\gamma,D).
\end{align} 
The claim follows from \eqref{p3_lemm:moment_upper_bound_O_X}. 
This completes the proof. 
\qed

\subsubsection{Proof of Lemma~\ref{lemm:upper_bound_effective_terms}}
The proof is a polynomial analog of the one in Ref.~\cite[Proposition~29]{kuwahara2022optimal}. 
We choose $r$ such that $e^{-r/4} |X| \le 1$. Recalling also \eqref{eq:littlerewrite}, we can apply \eqref{lemm:moment_upper_bound_O_X_main_ineq} from Lemma \ref{lemm:moment_upper_bound_O_X}.

\begin{align}
\sup_{i\in\tilde X}\tr\brr{ O_X (\tau_2) \rho(\tau_1) \brr{O_X (\tau_2) }^\dagger   n_i^p} \le C \|O_X\|^2 ,\label{lemm:moment_upper_bound_O_X_main_ineq/apply}
\end{align} 
where we used that $d_{i,X}\geq r/2$  for every $i\in \tilde X$.
Hence, by Markov's inequality, for every $0\leq s\leq p$,
\begin{align}
\label{prob_upper_bound_e^-s_ineq_1}
\tr\brr{ n_i^s \Pi_{n_i > \bar q} O_X (\tau_2) \rho(\tau_1)  O_X (\tau_2)^\dagger } 
=\tr\brr{ n_i^{s-p}\Pi_{n_i > \bar q}  n_i^{p}  O_X (\tau_2) \rho(\tau_1)  O_X (\tau_2)^\dagger   } 
\le C \|O_X\|^2  \bar{q}^{s-p} .
\end{align}
By using the above inequality with $s=0$ and  a union bound:
\begin{align}
\label{ineq_O_X_bar_Pi_tau_2_1_upp_derive}
\norm{ \bar{\Pi}_{\tilde{X},\bar{q}}^\co O_X (\tau_2) \sqrt{\rho(\tau_1)} }_F^2=\tr \brr{ \bar{\Pi}_{\tilde{X},\bar{q}}^\co O_X (\tau_2) \rho(\tau_1)  O_X (\tau_2)^\dagger \bar{\Pi}_{\tilde{X},\bar{q}}^\co }   
&\le \sum_{i\in \tilde{X}}\tr \brr{  \Pi_{n_i > \bar{q}} O_X (\tau_2) \rho(\tau_1)  O_X (\tau_2)^\dagger  }   \notag \\
&\le  C \|O_X\|^2   |\tilde{X}|  (\bar{q})^{-p} .
\end{align}
Recalling that $\bar q=r/2$ and taking square roots, this proves the first inequality~\eqref{ineq_O_X_bar_Pi_tau_2_1_upp_1st}.

\medskip
To prove the second claimed inequality~\eqref{ineq_O_X_bar_Pi_tau_2_2_upp_1st}, we first decompose
\begin{align}
\label{decomposition/pi_L_H_0_co}
\bar{\Pi}_{\tilde{X},\bar{q}}H_0 \bar{\Pi}_{\tilde{X},\bar{q}}^\co = \bar{\Pi}_{\tilde{X},\bar{q}} \br{H_{0,\tilde{X}}  +  \partial h_{\tilde{X}}} \bar{\Pi}_{\tilde{X},\bar{q}}^\co ,
\end{align}
where we use the definition \eqref{def:Ham_surface} of $\partial h_{\tilde{X}} $ (i.e., $\partial h_{\tilde{X}} = H_0- H_{0,\tilde{X}}-H_{0,\tilde{X}^\co}$)  and observed that $\bar{\Pi}_{\tilde{X},\bar{q}} H_{0,\tilde{X}^\co }\bar{\Pi}_{\tilde{X},\bar{q}}^\co =H_{0,\tilde{X}^\co } \bar{\Pi}_{\tilde{X},\bar{q}} \bar{\Pi}_{\tilde{X},\bar{q}}^\co =0$.

By Cauchy-Schwarz,
\begin{align}
\label{upper_bound_hopping_operator}
| b_i b_j^\dagger| = \sqrt{(b_i b_j^\dagger)^\dagger b_i b_j^\dagger} = \sqrt{ n_i ( n_j+1)} \leq  n_i+ n_j ,
\end{align}
and 
\begin{align}
\label{definition_bar_pai_ij_bar_q}
&\bar{\Pi}_{\tilde{X},\bar{q}} b_i b_j^\dagger \bar{\Pi}_{\tilde{X},\bar{q}}^\co = \bar{\Pi}_{\tilde{X},\bar{q}} b_i b_j^\dagger \bar{\Pi}_{i,j,\bar{q}}^\co ,\quad 
\bar{\Pi}_{i,j,\bar{q}}:=\begin{cases}
 \Pi_{n_i\le \bar{q}}\Pi_{n_j\le \bar{q}} &\for i,j\in \tilde{X}, \\
 \Pi_{n_i\le \bar{q}} &\for i\in \tilde{X} , \quad j\in \tilde{X}^\co,
 \end{cases}
\end{align}
we obtain 
\begin{align}
\label{ineq_O_X_bar_Pi_tau_2_2_upp_0}
\norm{ \bar{\Pi}_{\tilde{X},\bar{q}}H_0 \bar{\Pi}_{\tilde{X},\bar{q}}^\co O_X (\tau_2) \sqrt{\rho(\tau_1)} }_F   
&\le \bar{J}\sum_{i\in \tilde{X}} \sum_{j: \dist_{i,j}=1}  \norm{ \bar{\Pi}_{\tilde{X},\bar{q}} (b_i b_j^\dagger +{\rm h.c.} )\bar{\Pi}_{i,j,\bar{q}}^\co O_X (\tau_2) \sqrt{\rho(\tau_1)} }_F    \notag \\
&\le\bar{J}\sum_{i\in \tilde{X}} \sum_{j: \dist_{i,j}=1}  \norm{\br{|b_i b_j^\dagger| +|b_jb_i^\dagger|}\bar{\Pi}_{i,j,\bar{q}}^\co O_X (\tau_2) \sqrt{\rho(\tau_1)} }_F    \notag \\
&\le2 \bar{J}\sum_{i\in \tilde{X}} \sum_{j: \dist_{i,j}=1}  \norm{( n_i+ n_j)\bar{\Pi}_{i,j,\bar{q}}^\co O_X (\tau_2) \sqrt{\rho(\tau_1)} }_F .
\end{align}
From the inequality~\eqref{prob_upper_bound_e^-s_ineq_1} with $s=2$, we obtain
\begin{align}
\label{ineq_O_X_bar_Pi_tau_2_2_upp_02}
\tr\brr{( n_i + n_j)^2 \bar{\Pi}_{i,j,\bar{q}}^\co O_X (\tau_2) \rho(\tau_1)O_X (\tau_2)^\dagger  } 
&=\tr\brr{( n_i + n_j)^{2-p} \bar{\Pi}_{i,j,\bar{q}}^\co( n_i + n_j)^{p}  O_X (\tau_2) \rho(\tau_1)O_X (\tau_2)^\dagger  } \notag\\
&\le \bar{q}^{2-p} \tr\brr{( n_i + n_j)^{p}  O_X (\tau_2) \rho(\tau_1)O_X (\tau_2)^\dagger  } \notag\\
&\le C\|O_X\|^2 \bar{q}^{2-p} 
\end{align}
where we also again used the H\"older inequality $\norm{\rho  n_i^s  n_j^{p-s}}_1 \le \norm{\rho  n_i^p}_1^{s/p}\norm{\rho  n_j^p}^{1-s/p}_1$ that holds for an arbitrary quantum state $\rho$.

By combining the inequalities~\eqref{ineq_O_X_bar_Pi_tau_2_2_upp_0} and \eqref{ineq_O_X_bar_Pi_tau_2_2_upp_02}, we arrive at the claimed inequality~\eqref{ineq_O_X_bar_Pi_tau_2_2_upp_1st} as follows: 
\begin{align}
\label{ineq_O_X_bar_Pi_tau_2_2_upp_0_fin}
\norm{ \bar{\Pi}_{\tilde{X},\bar{q}}H_0 \bar{\Pi}_{\tilde{X},\bar{q}}^\co O_X (\tau_2) \sqrt{\rho(\tau_1)} }_F   
&\le  C\|O_X\|^2  \sum_{i\in \tilde{X}} \sum_{j: \dist_{i,j}=1}\bar{q}^{1-p/2}   \notag \\
&\le C\|O_X\|^2 \bar{q}^{1-p/2} |\tilde{X}|,
\end{align}
where we also used $\sum_{j: \dist_{i,j}=1} 1 \le |i[1]|\le \gamma$ for all $ i\in \Lambda$.
We thus obtain the second main inequality~\eqref{ineq_O_X_bar_Pi_tau_2_2_upp_1st}. 
This completes the proof of Lemma~\ref{lemm:upper_bound_effective_terms} and hence of Proposition \ref{prop:eff_Ham_approx}. \qed

\section{Proof of Theorem~\ref{Theorem_for_small_time_evo_average}}\label{sect:Theorem_for_small_time_evo_average}\label{sect:st2}

\subsection{Overview of the proof}

For the proof of Theorem~\ref{Theorem_for_small_time_evo_average} we use the same notation as in the proof of Theorem~\ref{Theorem_for_small_time_evo}. The key difference arises from the estimation of the error using the effective Hamiltonian $\bar{\Pi}_{\tilde{X},\bar{q}} H\bar{\Pi}_{\tilde{X},\bar{q}}$, which was previously given by Proposition~\ref{prop:eff_Ham_approx}. 
For the expectation, we do not use the direct analog of Lemma~\ref{corol:effective Hamiltonian_accuracy}, but instead we  use the following variant. We recall that
\[
\norm{O}_F= \sqrt{\tr(O O^\dagger)}
\]
is the Frobenius norm.
\begin{lemma}  \label{lem:effective Hamiltonian_accuracy_Average of the commutator}
Consider a separable Hilbert space $\mathcal H$. Let $\rho$ be a quantum state on $\mathcal H$ and let $\bar{\Pi}$ be an orthogonal projection on $\mathcal H$.
For every bounded operator $O$ on $\mathcal H$ with $\|O\|=1$, we have
\begin{align}
\label{error_time/evo_effectve_original_ineq_mix_Average of the commutator}
&\tr \brr{\br{ O(H,\tau)   - O(\bar{\Pi} H \bar{\Pi}, \tau) } \rho}  \notag \\
\leq&   \norm{\bar{\Pi}^\co \sqrt{\rho}}_F
\br{\norm{ \bar{\Pi}^\co e^{\mathrm{i}H\tau}Oe^{-\mathrm{i}H\tau}\sqrt{\rho}}_F
+
  \norm{\bar{\Pi}^\co Oe^{-\mathrm{i}H\tau}\sqrt{\rho} }_F}
\\
&+\int_0^\tau    \norm{\sqrt{\rho} e^{\mathrm{i}H \tau} O
 e^{-\mathrm{i}H \tau_1}  \bar{\Pi}^{\co}}_F 
 \norm{\bar{\Pi}^{\co} H_0 \bar{\Pi} e^{-\mathrm{i}\tilde{H} (\tau - \tau_1)}   \sqrt{\rho}}_F
 d\tau_1\\
&+2\norm{\bar{\Pi}^{\co}\sqrt{\rho} }_F^2+2\norm{e^{-\mathrm{i}H\tau} \bar{\Pi}^{\co}\sqrt{\rho} }_F^2
 +\br{ \int_0^\tau \norm{\bar{\Pi} H_0 \bar{\Pi}^{\co} e^{-\mathrm{i}{H} (\tau - \tau_1)}\sqrt{\rho} }_F   d\tau_1}^2.
\end{align}  
\end{lemma}

This result is similar to Lemma~\ref{corol:effective Hamiltonian_accuracy} but it leverages cyclicity of the trace to perform Cauchy-Schwarz in a slightly different way that will turn out to be beneficial: Under the integral, the term containing $H_0$ is separated from the term with the observable $O$. This will allow us to improve the $|\tilde X|$-dependence in the bound~\eqref{ineq_O_X_bar_Pi_tau_2_2_upp_1st} via the following lemma. 

There is some finesse involved to avoid terms where $\bar{\Pi}^{\co}$ is evaluated relative to a state of the form$e^{-\mathrm{i}\tilde{H} (\tau - \tau_1)}\sqrt{\rho}$. This would be problematic because controlling $\bar{\Pi}^{\co}$ via Markov's inequality requires the moment bound for the local particle number only  for the dynamics with respect to the full Hamiltonian $H$, not for the dynamics of $\tilde H$. Indeed, in our main example (Theorem \ref{thm:maininformal}) we can only verify the bound for the full Hamiltonian $H$ by using its translation-invariance, which $\tilde H$ fails to have.



\begin{lemma}\label{lemm:upper_bound_effective_terms_average}
We work under the assumptions of Theorem \ref{Theorem_for_small_time_evo_average} and we choose $\bar{q}=r/2$. Let $\tau_0$ be given by \eqref{eq:tau0defn}. Then, the following holds for all $\tau\leq \tau_0$. Let $p>0$ and let $\rho$ be an arbitrary quantum state on $\mathcal F(\ell^2(\Lambda))$ satisfying the moment bound
\begin{align}
\tr \br{  n_i^p \rho } \le C_p .
\end{align} 
Assume further that 
\begin{align}
e^{-r/4} |X| \le 1.
\end{align} 

Then, there exists a constant $C=C(\bar J,\gamma,D,k,p,C_p)$ such that for every $\tau_1\leq \tau$, it holds that
\begin{align}
\label{main_ineq_ave_norm_H_0_eff_ham1}
 \norm{\bar{\Pi}_{\tilde{X},\bar{q}}^\co H_0 \bar{\Pi}_{\tilde{X},\bar{q}} e^{-\mathrm{i}\tilde{H} \tau_1}  \sqrt{\rho}}_F 
\le C |\tilde X|^{1/2} r^{1-p/2}.
\end{align}
\end{lemma}

We postpone the proofs of Lemmas \ref{lem:effective Hamiltonian_accuracy_Average of the commutator} and \ref{lemm:upper_bound_effective_terms_average} for now.

\subsubsection{Proof of Theorem~\ref{Theorem_for_small_time_evo_average} assuming Lemmas \ref{lem:effective Hamiltonian_accuracy_Average of the commutator} and \ref{lemm:upper_bound_effective_terms_average}}
We recall the claim
\[
\abs{\tr \brr{\br{ O_{X}(\tau) - U_{V[r]}^\dagger O_{X} V_{X[r]}  } \rho} }
\le C\|O_X\| \br{ |\partial (X[r])| r e^{-r/(4k)} +  |X[r]\setminus X| r^{-p+1}}.
\]
By Proposition \ref{prop:short_time_Lieb-Robinson}, we already have that there exists $C=C(\bar J,\gamma,D,k)>0$ such that
\[
\left\| O_{X}(\tilde{H}[\tilde{X}, r/2] , \tau)-U_{X[r]}^\dagger O_{X} U_{X[r]} \right\| \le 
C \|O_X\|   |\partial (X[r])| re^{-r/(2k)}.
\]
Therefore, it suffices to prove
\begin{equation}\label{eq:remains_average}
\abs{\tr \brr{\br{ O_{X}(\tau) - O_{X}(\tilde{H}[\tilde{X}, r/2] , \tau) } \rho} }
\leq
C\|O_X\| |X[r]\setminus X| r^{-p+1}
\end{equation}
for a suitable constant $C=C(\bar J,\gamma,D,k,p,C_p)$.

We now prove \eqref{eq:remains_average}. We apply Lemma \ref{lem:effective Hamiltonian_accuracy_Average of the commutator} with the choice $ \bar{\Pi}= \bar{\Pi}_{\tilde{X},\bar{q}}$. We then need to estimate the corresponding terms that appear in the right-hand side of~\eqref{error_time/evo_effectve_original_ineq_mix_Average of the commutator} with $ \bar{\Pi}= \bar{\Pi}_{\tilde{X},\bar{q}}$.
We have already proved in Lemma~\ref{lemm:upper_bound_effective_terms} the following inequality:
\begin{align}
\label{ineq_O_X_bar_Pi_tau_2_1_upp_1st_re}
\norm{ \bar{\Pi}_{\tilde{X},\bar{q}}^\co O_X (\tau_2) \sqrt{\rho(\tau_1)} }_F\le 
C \|O_X\| |\tilde{X}|^{1/2}  r^{-p/2} .
 \end{align}
 Note that Lemma~\ref{lemm:moment_upper_bound_O_X}   also holds for the time evolution with respect to the effective Hamiltonian $\tilde H[\tilde X,\bar q]=\bar{\Pi}_{\tilde{X},\bar{q}} H\bar{\Pi}_{\tilde{X},\bar{q}}$, i.e.,
 \begin{align}
\label{ineq_O_X_bar_Pi_tau_2_1_upp_1st_re-eff}
\norm{ \bar{\Pi}_{\tilde{X},\bar{q}}^\co O_X (\tau_2,\tilde H[\tilde X,\bar q]) \sqrt{\rho(\tau_1)} }_F\le 
C \|O_X\| |\tilde{X}|^{1/2}  r^{-p/2}.
 \end{align}
From Lemma \ref{lemm:upper_bound_effective_terms_average}, we obtain that
\begin{align}
\norm{\bar{\Pi}_{\tilde{X},\bar{q}}^\co H_0 \bar{\Pi}_{\tilde{X},\bar{q}} e^{-\mathrm{i}\tilde{H} \tau_1}  \sqrt{\rho}}_F 
+ \quad \norm{\bar{\Pi}_{\tilde{X},\bar{q}}^\co H_0\bar{\Pi}_{\tilde{X},\bar{q}}  e^{-\mathrm{i}H \tau_1}  \sqrt{\rho}}_F  
\leq C |\tilde{X}|^{1/2}  r^{-p/2}. 
\end{align}  
By applying these inequalities to what we obtain from \eqref{error_time/evo_effectve_original_ineq_mix_Average of the commutator}, we find
 \begin{align}
\label{error_time/evo_effectve_Average}
\tr \brr{\br{ O(H,\tau)  \rho - O(\bar{\Pi}_{\tilde{X},\bar{q}} H \bar{\Pi}_{\tilde{X},\bar{q}}, \tau) } \rho}  &\le  
C |\tilde{X}| r^{-p+1}.
\end{align}  
Recalling that $\tilde X=X[r]\setminus X[r/2]$, we see that this proves \eqref{eq:remains_average} and hence also Theorem \ref{Theorem_for_small_time_evo_average}.
\qed

\subsubsection{Proof of Lemma~\ref{lem:effective Hamiltonian_accuracy_Average of the commutator}}  \label{sec:proof_lem:effective Hamiltonian_accuracy_Average}
We denote 
\begin{align}
\label{start_definition_notation}
\bar{\Pi}^\co =1- \bar{\Pi}, \quad \tilde{H}=\bar{\Pi} H \bar{\Pi}
\end{align}  
for simplicity.
We have \begin{equation}
\begin{aligned}
\label{first_estimation_rho_0_error_0_2}
&\abs{\tr \brr{ \br{ O(H,\tau)  -  O(\tilde{H}, \tau) }\rho}}\\ 
\le&\abs{\tr \brr{\br{ e^{\mathrm{i}H\tau} -  e^{\mathrm{i}\tilde{H}\tau}  }O e^{-\mathrm{i}H\tau}\rho}}
+\abs{\tr \brr{ e^{\mathrm{i}\tilde{H} \tau} O  \br{e^{-\mathrm{i}H\tau} -  e^{-\mathrm{i}\tilde{H} \tau}   }\rho}}\\
\le&\abs{\tr \brr{\br{ e^{\mathrm{i}H\tau} -  e^{\mathrm{i}\tilde{H}\tau}  }O e^{-\mathrm{i}H\tau}\rho}}
+\abs{\tr \brr{ e^{\mathrm{i}H \tau} O  \br{e^{-\mathrm{i}H\tau} -  e^{-\mathrm{i}\tilde{H} \tau}   }\rho}}\\
&+\abs{\tr \brr{  \br{e^{\mathrm{i}H\tau} -  e^{\mathrm{i}\tilde{H} \tau}}O  \br{e^{-\mathrm{i}H\tau} -  e^{-\mathrm{i}\tilde{H} \tau}   }\rho}}\\
=&2\abs{\tr \brr{ e^{\mathrm{i}H \tau} O  \br{e^{-\mathrm{i}H\tau} -  e^{-\mathrm{i}\tilde{H} \tau}   }\rho}}+\abs{\tr \brr{  \br{e^{\mathrm{i}H\tau} -  e^{\mathrm{i}\tilde{H} \tau}}O  \br{e^{-\mathrm{i}H\tau} -  e^{-\mathrm{i}\tilde{H} \tau}   }\rho}}\\
\end{aligned}
\end{equation}
where we used $\overline{\tr{A}}=\tr\brr{A^\dagger}$ and cyclicity in the last step.

Consider the first term for now. We recall Definition \eqref{start_definition_notation}  of $\bar{\Pi}^\co := 1- \bar{\Pi}$ and we decompose
 \begin{equation}\label{calculation_basic_effective_ham_2}
\abs{\tr \brr{ e^{\mathrm{i}H \tau} O  \br{e^{-\mathrm{i}H\tau} -  e^{-\mathrm{i}\tilde{H} \tau}   }\rho}}
\leq
\abs{\tr \brr{ e^{\mathrm{i}H \tau} O \br{e^{-\mathrm{i}H\tau} -  e^{-\mathrm{i}\tilde{H} \tau}   } \bar{\Pi}^\co\rho}}
+
\abs{\tr \brr{ e^{\mathrm{i}H \tau} O \br{e^{-\mathrm{i}H\tau} -  e^{-\mathrm{i}\tilde{H} \tau}   } \bar{\Pi}\rho}}
\end{equation}
For the first term, we use cyclicity, $\bar{\Pi}^\co \tilde{H}=\bar{\Pi}^\co \bar{\Pi} H \bar{\Pi} = 0$, and the Cauchy-Schwarz inequality $\tr(A^\dagger B) \le \norm{A}_F \norm{B}_F$ to obtain
\[
\begin{aligned}
\abs{\tr \brr{  e^{\mathrm{i}H \tau} O \br{e^{-\mathrm{i}H\tau} -  e^{-\mathrm{i}\tilde{H} \tau}   } \bar{\Pi}^\co\rho}}
\leq& \norm{\sqrt{\rho} e^{\mathrm{i}H \tau} O \br{e^{-\mathrm{i}H\tau} -  1}  \bar{\Pi}^\co}_F
\norm{\bar{\Pi}^\co \sqrt{\rho}}_F \\
\leq& \norm{\bar{\Pi}^\co \sqrt{\rho}}_F
\br{\norm{ \bar{\Pi}^\co e^{\mathrm{i}H\tau}Oe^{-\mathrm{i}H\tau}\sqrt{\rho}}_F
+
  \norm{\bar{\Pi}^\co Oe^{-\mathrm{i}H\tau}\sqrt{\rho} }_F}
\end{aligned}
\]
The second term on the right-hand side of \eqref{calculation_basic_effective_ham_2} is treated via Duhamel's formula
\begin{align}
\label{error_term_sum_small_dt}
\br{e^{-\mathrm{i}H\tau}  - e^{-\mathrm{i}\tilde{H}\tau}  }\bar{\Pi}
&=- i  \int_0^\tau e^{-\mathrm{i}H \tau_1}\br{\tilde{H} -H  } \bar{\Pi}e^{-\mathrm{i}\tilde{H} (\tau - \tau_1)}   d\tau_1.
\end{align}  
where we used $[\bar{\Pi},\tilde{H}]=0$. Next, we decompose
\begin{align}\label{eq:Hcomparison}
 \br{\tilde{H}-H  } \bar{\Pi}= \bar{\Pi}H \bar{\Pi}  - \br{\bar{\Pi}+1-\bar{\Pi}} H   \bar{\Pi}
=- \bar{\Pi}^\co H \bar{\Pi}=-\bar{\Pi}^{\co}H_0 \bar{\Pi}  ,
\end{align}  
where we used $[V,\bar{\Pi}]=0$ which is a consequence of $[ n_i, \bar{\Pi}]=0$ for $\forall i\in \Lambda$. 

Therefore, from Eq.~\eqref{error_term_sum_small_dt}, we have  
\begin{align}
\label{error_term_sum_small_dt_reduce}
\abs{\tr \brr{ e^{\mathrm{i}H \tau} O  \br{e^{-\mathrm{i}H\tau} -  e^{-\mathrm{i}\tilde{H} \tau}   } \bar{\Pi}\rho}}
=&\abs{ \int_0^\tau\tr \brr{ e^{\mathrm{i}H \tau} O
 e^{-\mathrm{i}H \tau_1} \bar{\Pi}^{\co} H_0 \bar{\Pi} e^{-\mathrm{i}\tilde{H} (\tau - \tau_1)}  
\rho}  d\tau_1}
 \notag \\
\le&\int_0^\tau \abs{ \tr \brr{ e^{\mathrm{i}H \tau} O
 e^{-\mathrm{i}H \tau_1} \bar{\Pi}^{\co} H_0 \bar{\Pi} e^{-\mathrm{i}\tilde{H} (\tau - \tau_1)}  
\rho}}
 d\tau_1\notag\\ 
\le &
\int_0^\tau    \norm{\sqrt{\rho} e^{\mathrm{i}H \tau} O
 e^{-\mathrm{i}H \tau_1}  \bar{\Pi}^{\co}}_F 
 \norm{\bar{\Pi}^{\co} H_0 \bar{\Pi} e^{-\mathrm{i}\tilde{H} (\tau - \tau_1)}   \sqrt{\rho}}_F
 d\tau_1.
\end{align}  
Returning to \eqref{calculation_basic_effective_ham_2}, we have shown that
\begin{align}
\label{calculation_basic_effective_ham_2_fini}
\abs{\tr \brr{ e^{\mathrm{i}H \tau} O  \br{e^{-\mathrm{i}H\tau} -  e^{-\mathrm{i}\tilde{H} \tau}   }\rho}}
\leq&   \norm{\bar{\Pi}^\co \sqrt{\rho}}_F
\br{\norm{ \bar{\Pi}^\co e^{\mathrm{i}H\tau}Oe^{-\mathrm{i}H\tau}\sqrt{\rho}}_F
+
  \norm{\bar{\Pi}^\co Oe^{-\mathrm{i}H\tau}\sqrt{\rho} }_F}
\\
&+\int_0^\tau    \norm{\sqrt{\rho} e^{\mathrm{i}H \tau} O
 e^{-\mathrm{i}H \tau_1}  \bar{\Pi}^{\co}}_F 
 \norm{\bar{\Pi}^{\co} H_0 \bar{\Pi} e^{-\mathrm{i}\tilde{H} (\tau - \tau_1)}   \sqrt{\rho}}_F
 d\tau_1.
\end{align}

It remains to consider the last term in \eqref{first_estimation_rho_0_error_0_2}. We use $\bar{\Pi}^\co \tilde{H}= 0$, Cauchy-Schwarz, and $[O,\bar{\Pi}]=0$ to obtain
\[
\begin{aligned}
&\abs{\tr \brr{  \br{e^{\mathrm{i}H\tau} -  e^{\mathrm{i}\tilde{H} \tau}}O  \br{e^{-\mathrm{i}H\tau} -  e^{-\mathrm{i}\tilde{H} \tau}   }\rho}}\\
\leq&
\abs{\tr \brr{  \br{e^{\mathrm{i}H\tau} -  1}\bar{\Pi}^{\co}O \bar{\Pi}^{\co} \br{e^{-\mathrm{i}H\tau} -  1   } \rho }}+\abs{\tr \brr{  \br{e^{\mathrm{i}H\tau} -  1}\bar{\Pi} O \bar{\Pi} \br{e^{-\mathrm{i}H\tau} -  1   } \rho }}\\
\leq& \norm{\bar{\Pi}^{\co}\br{e^{-\mathrm{i}H\tau} -  1 }  \sqrt{\rho} }_F^2+\norm{ \bar{\Pi}\br{e^{-\mathrm{i}H\tau} -  e^{-\mathrm{i}\tilde{H} \tau}   }\sqrt{\rho} }_F^2\\
\leq& 2\norm{\bar{\Pi}^{\co}\sqrt{\rho} }_F^2+2\norm{e^{-\mathrm{i}H\tau} \bar{\Pi}^{\co}\sqrt{\rho} }_F^2+\norm{\bar{\Pi}\br{e^{-\mathrm{i}H\tau} -  e^{-\mathrm{i}\tilde{H} \tau}   } \sqrt{\rho} }_F^2
\end{aligned}
\]
For the last term, we recall that Duhamel's formula \eqref{error_term_sum_small_dt} and \eqref{eq:Hcomparison} imply
\[
\bar{\Pi}\br{e^{-\mathrm{i}H\tau}  - e^{-\mathrm{i}\tilde{H}\tau}  }
= i  \int_0^\tau e^{-\mathrm{i}\tilde{H} \tau_1}  \bar{\Pi} H_0 \bar{\Pi}^{\co}e^{-\mathrm{i}{H} (\tau - \tau_1)}   d\tau_1
\]
and so
\[
\norm{\bar{\Pi}\br{e^{-\mathrm{i}H\tau} -  e^{-\mathrm{i}\tilde{H} \tau}   } \sqrt{\rho} }_F\leq 
 \int_0^\tau \norm{\bar{\Pi} H_0 \bar{\Pi}^{\co} e^{-\mathrm{i}{H} (\tau - \tau_1)}\sqrt{\rho} }_F   d\tau_1.
\]
We have thus shown that 
\begin{equation}\label{calculation_basic_effective_ham_fin_2}
\begin{aligned}
&\abs{\tr \brr{  \bar{\Pi}^{\co}\br{e^{\mathrm{i}H\tau} -  1}O  \br{e^{-\mathrm{i}H\tau} -  1   } \bar{\Pi}^{\co}\rho }}\\
\leq&2\norm{\bar{\Pi}^{\co}\sqrt{\rho} }_F^2+2\norm{e^{-\mathrm{i}H\tau} \bar{\Pi}^{\co}\sqrt{\rho} }_F^2
 +\br{ \int_0^\tau \norm{\bar{\Pi} H_0 \bar{\Pi}^{\co} e^{-\mathrm{i}{H} (\tau - \tau_1)}\sqrt{\rho} }_F   d\tau_1}^2.
\end{aligned}
\end{equation}

Therefore, by applying the inequalities~\eqref{calculation_basic_effective_ham_2_fini} and \eqref{calculation_basic_effective_ham_fin_2} to \eqref{first_estimation_rho_0_error_0_2}, we obtain the desired inequality~\eqref{error_time/evo_effectve_original_ineq_mix_Average of the commutator}.
This completes the proof. \qed

\subsubsection{Proof of Lemma~\ref{lemm:upper_bound_effective_terms_average}}
Our task is to estimate the following quantity:
\begin{align}
 \norm{\bar{\Pi}_{\tilde{X},\bar{q}}^\co H_0 \bar{\Pi}_{\tilde{X},\bar{q}} e^{-\mathrm{i}{H} \tau_1}  \sqrt{\rho}}_F^2
 = \tr \brr{\bar{\Pi}_{\tilde{X},\bar{q}} H_0 \bar{\Pi}_{\tilde{X},\bar{q}}^\co H_0 \bar{\Pi}_{\tilde{X},\bar{q}} \rho({H}, \tau_1) }.
\end{align}
By using the relation~\eqref{definition_bar_pai_ij_bar_q}, i.e.,
\begin{align}
\label{definition_bar_pai_ij_bar_q_rre}
&\bar{\Pi}_{\tilde{X},\bar{q}} b_i b_j^\dagger \bar{\Pi}_{\tilde{X},\bar{q}}^\co = \bar{\Pi}_{\tilde{X},\bar{q}} b_i b_j^\dagger \bar{\Pi}_{i,j,\bar{q}}^\co ,\quad 
\bar{\Pi}_{i,j,\bar{q}}:=\begin{cases}
 \Pi_{n_i\le \bar{q}}\Pi_{n_j\le \bar{q}} &\for i,j\in \tilde{X}, \\
 \Pi_{n_i\le \bar{q}} &\for i\in \tilde{X} , \quad j\in \tilde{X}^\co,
 \end{cases}
\end{align}
we obtain
\begin{align}
\label{definition_bar_pai_ij_bar_q_rre_summation_de}
\bar{\Pi}_{\tilde{X},\bar{q}}  H_0  \bar{\Pi}_{\tilde{X},\bar{q}}^\co H_0 \bar{\Pi}_{\tilde{X},\bar{q}}
&= \sum_{i,j: i\in \tilde{X}}\sum_{k,l: k \in \tilde{X}} J_{i,j}J_{k,l}\bar{\Pi}_{\tilde{X},\bar{q}} (b_i b_j^\dagger +{\rm h.c.} )\bar{\Pi}_{\tilde{X},\bar{q}}^\co \cdot \bar{\Pi}_{\tilde{X},\bar{q}}^\co (b_{k} b_{l}^\dagger +{\rm h.c.} ) \bar{\Pi}_{\tilde{X},\bar{q}} \notag \\
&= \sum_{i,j: i\in \tilde{X}}\sum_{k,l: k \in \tilde{X}} J_{i,j}J_{k,l}\bar{\Pi}_{\tilde{X},\bar{q}} (b_i b_j^\dagger +{\rm h.c.} )\bar{\Pi}_{i,j,\bar{q}}^\co \cdot \bar{\Pi}_{k,l,\bar{q}}^\co (b_{k} b_{l}^\dagger +{\rm h.c.} ) \bar{\Pi}_{\tilde{X},\bar{q}}  \notag \\
&=  \sum_{\substack{i,j,k,l\\  i\in \tilde{X},\ k \in \tilde{X},\ \{i,j\}\cap \{k,l\}\neq \emptyset}} J_{i,j}J_{k,l}\bar{\Pi}_{\tilde{X},\bar{q}} (b_i b_j^\dagger +{\rm h.c.} )\bar{\Pi}_{i,j,\bar{q}}^\co \cdot \bar{\Pi}_{k,l,\bar{q}}^\co (b_{k} b_{l}^\dagger +{\rm h.c.} ) \bar{\Pi}_{\tilde{X},\bar{q}} ,
\end{align}
where we use $[\bar{\Pi}_{i,j,\bar{q}}^\co,\bar{\Pi}_{k,l,\bar{q}}^\co ]=0$ and $[b_i b_j^\dagger , \bar{\Pi}_{k,l,\bar{q}}^\co ]=0$ for $\{i,j\}\cap \{k,l\}= \emptyset$. 
Note that $\bar{\Pi}_{\tilde{X},\bar{q}} \bar{\Pi}_{k,l,\bar{q}}^\co=0$. 

Each of the terms in the summation of the right-hand side of~\eqref{definition_bar_pai_ij_bar_q_rre_summation_de} is upper-bounded by
\begin{align}
&\abs{\tr \brrr{\brr{ J_{i,j}J_{k,l}\bar{\Pi}_{\tilde{X},\bar{q}} (b_i b_j^\dagger +{\rm h.c.} )\bar{\Pi}_{i,j,\bar{q}}^\co \cdot \bar{\Pi}_{k,l,\bar{q}}^\co (b_{k} b_{l}^\dagger +{\rm h.c.} ) \bar{\Pi}_{\tilde{X},\bar{q}} }\rho({H}, \tau_1) }} \notag \\
&\le \bar{J}^2 \norm{\bar{\Pi}_{i,j,\bar{q}}^\co (b_i b_j^\dagger +{\rm h.c.} )\sqrt{\rho({H}, \tau_1) } }_F \cdot 
 \norm{\bar{\Pi}_{k,l,\bar{q}}^\co (b_{k} b_{l}^\dagger +{\rm h.c.} )  \sqrt{\rho({H}, \tau_1) } }_F .
 \label{interaction/H__0/upper:boud}
\end{align}
We then consider 
\begin{align}
\norm{\bar{\Pi}_{i,j,\bar{q}}^\co b_i b_j^\dagger \sqrt{\rho({H}, \tau_1) } }_F^2 
&= \tr\brr{ \bar{\Pi}_{i,j,\bar{q}}^\co b_i b_j^\dagger \rho({H}, \tau_1) b_i^\dagger  b_j  } \notag \\
&= \tr\brr{( n_i+ n_j)^{-p+2} \bar{\Pi}_{i,j,\bar{q}}^\co ( n_i+ n_j)^{p} \frac{1}{ n_i+ n_j} b_i b_j^\dagger \rho({H}, \tau_1) b_i^\dagger  b_j \frac{1}{ n_i+ n_j} } \notag \\
&\le \bar{q}^{2-p} \tr\brr{ ( n_i+ n_j)^p \br{\frac{1}{ n_i+ n_j} b_i b_j^\dagger }\rho({H}, \tau_1) \br{\frac{1}{ n_i+ n_j} b_i b_j^\dagger }^\dagger } ,
\label{ineq_bar_pi_bi_bj_2_0}
\end{align}
where we can ensure
\begin{align}
\norm{\frac{1}{ n_i+ n_j} b_i b_j^\dagger} \le 1 
\end{align}
because of $|b_i b_j^\dagger| \leq  n_i+ n_j$ from the inequality~\eqref{upper_bound_hopping_operator}. 
Using Lemma~\ref{lem:boson_operator_norm} by letting
\begin{align}
O_q \to  \br{\frac{1}{ n_i+ n_j} b_i b_j^\dagger }^\dagger , \quad q\to 1, \quad \sum_{j\in \Lambda} \nu_j  n_j \to  n_i+ n_j , \quad  \bar{\nu}\to 1, \quad  \tilde\Lambda\to \{i,j\}
\end{align}
in the inequality~\eqref{main_ineq_lem:boson_operator_norm}, we obtain
\begin{align}
\tr\brr{ ( n_i+ n_j)^p \br{\frac{1}{ n_i+ n_j} b_i b_j^\dagger }\rho({H}, \tau_1) \br{\frac{1}{ n_i+ n_j} b_i b_j^\dagger }^\dagger } 
&\le 4^{p}   \tr\brr{   \br{4  +  n_i+ n_j }^p \rho({H}, \tau_1)}  \notag \\
&\le 8^p  \tr\brrr{   \brr{4^p  + ( n_i+ n_j)^p} \rho({H}, \tau_1)}  \notag \\
&\le 8^p   \brr{4^p  +2^p C_p }  ,
\label{ineq_bar_pi_bi_bj_2_0_1}
\end{align}
where we use the standing assumption \eqref{eq:rhoconditionst_av} in the last step. By combining the inequalities~\eqref{ineq_bar_pi_bi_bj_2_0} and \eqref{ineq_bar_pi_bi_bj_2_0_1}, we obtain 
\begin{align}
\norm{\bar{\Pi}_{i,j,\bar{q}}^\co b_i b_j^\dagger \sqrt{\rho({H}, \tau_1) } }_F.
\le C\bar{q}^{1-p/2}  ,
\label{ineq_bar_pi_bi_bj_2_0_ffin}
\end{align}
Returning to the inequality~\eqref{interaction/H__0/upper:boud}, we have shown that
\begin{align}
&\abs{\tr \brrr{\brr{ J_{i,j}J_{k,l}\bar{\Pi}_{\tilde{X},\bar{q}} (b_i b_j^\dagger +{\rm h.c.} )\bar{\Pi}_{i,j,\bar{q}}^\co \cdot \bar{\Pi}_{k,l,\bar{q}}^\co (b_{k} b_{l}^\dagger +{\rm h.c.} ) \bar{\Pi}_{\tilde{X},\bar{q}} }\rho(\tilde{H}, \tau_1) }} \le C  \bar{q}^{2-p} .
 \label{interaction/H__0/upper:boud/2}
\end{align}
Finally, by using the assumption on the graph geometry, cf.\ \eqref{supp_parameter_gamma_X_i}, we have
\begin{align}
 \sum_{\substack{i,j,k,l\\  i\in \tilde{X},\ k \in \tilde{X},\ \{i,j\}\cap \{k,l\}\neq \emptyset}} 1 
 &\le \sum_{i\in \tilde{X}}  \sum_{j: \dist_{i,j}=1}  \sum_{k: \dist_{i,k}\le 2} \sum_{l: \dist_{k,l}=1} 1 \le  |\tilde{X}| \gamma^3 2^D ,
\end{align}
which yields
\begin{align}
 \norm{\bar{\Pi}_{\tilde{X},\bar{q}}^\co H_0 \bar{\Pi}_{\tilde{X},\bar{q}} e^{-\mathrm{i}\tilde{H} \tau_1}  \sqrt{\rho}}_F^2
 =\tr \brr{ \bar{\Pi}_{\tilde{X},\bar{q}}  H_0  \bar{\Pi}_{\tilde{X},\bar{q}}^\co H_0 \bar{\Pi}_{\tilde{X},\bar{q}}\rho(\tilde{H}, \tau_1) }\le C |\tilde X| \bar q^{2-p}.
\end{align}
We have thus proved \eqref{main_ineq_ave_norm_H_0_eff_ham1} and hence the claim.
\qed 
%
%
%
%
\section{Conclusions}\label{sect:conclusion}
We considered a class of Bose-Hubbard Hamiltonians with local interactions of the form $n_i^p$ with $p>2$ sufficiently large. We prove that general initial states that are translation-invariant and of uniformly bounded energy density satisfy an enhanced LRB  $v\sim  t^{\frac{D}{p-D-1}}$ compared to \cite{kuwahara2022optimal}  $v\sim t^{D-1}$. In particular, we obtain an almost ballistic LRB $v\sim t^\epsilon$ for sufficiently large $p$. Our bound excludes scenarios for rapid information transport through boson accumulation for our Hamiltonians that were put forward in \cite{kuwahara2022optimal}, cf.\ Figure 4a therein.

While interactions of the form $n_i^p$ with integer $p>2$ are not part of the standard Bose-Hubbard model, they have been discussed in the physics literature to arise physically from local $p$-body repulsion \cite{will2010time, mark2011precision,petrov2014elastic,petrov2014three,mondal2020two}.  Of course, the physically most relevant case is $p=2$, which corresponds to two-body interactions in the standard Bose-Hubbard Hamiltonian, and our result does not cover this case. Addressing Questions 1 and 1' raised in the introduction for $p=2$ will require a much finer dynamical analysis of the local energy flow and, relatedly, the dynamical build-up of higher particle-particle correlations  for suitable initial states. Controlling these other local dynamical quantities will require new insights. 

A different feature of our result is that the LRB on the expectation is appreciably stronger than the one on the trace norm. 
       For long-range quantum spin systems, a similar phenomenon is observed \cite{tran2020hierarchy}.    Of course, the expectation value is bounded by the trace norm, but it could indeed be substantially smaller due to cancellations. While we do not expect such cancellations to occur generically, it is an interesting question under what circumstances the different $t$-scaling truly occurs dynamically, i.e., what kind of lower bounds can be proved. If this phenomenon persists at the level of lower bounds, this would add another unexpected wrinkle to the rich and subtle nature of bosonic information propagation, which would require refining Questions 1 and 1' above based on which measure of size is used.

Finally, looking beyond the specific model, our result places into center focus the significant qualitative rigidity effect that dynamical constraints, specifically symmetries (here, translation-invariance) and energy conservation, have for quantum dynamics of strongly interacting bosons.
Our result is a rigorous manifestation of the associated rigidity concepts in the many-boson context. It would be interesting to explore the effect of different types of dynamical constraints, for instance, dipole moment conservation, which was recently studied in the Bose-Hubbard context in \cite{lake2022dipolar}.

\section*{Acknowledgments}
The authors thank Jingxuan Zhang for useful comments on a draft version of the manuscript.

T.K.\ acknowledges the Hakubi projects of RIKEN.  
T.K.\ is supported by JST PRESTO (Grant No. JPMJPR2116), ERATO (Grant No. JPMJER2302), and JSPS Grants-in-Aid for Scientific
Research (No. JP24H00071), Japan. The research of M.L.\ is partially supported by the DFG through the grant TRR 352 – Project-ID 470903074 and by the European Union (ERC Starting Grant MathQuantProp, Grant Agreement 101163620).\footnote{Views and opinions expressed are however those of the authors only and do not necessarily reflect those of the European Union or the European Research Council Executive Agency. Neither the European Union nor the granting authority can be held responsible for them.}

\section*{Data availability statement}
The manuscript has no associated data.

\section*{Conflict of interest statement}
The authors have no competing interests to declare that are relevant to the content of this article.

\appendix

\section{Examples of translation-invariant quantum states with bounded energy density and significant boson concentration}
\label{app:bad}

We recall that, in proving Theorem \ref{thm:maininformal}, a key input was the polynomial tail bound on the local boson number distribution
\begin{align}
\label{tail_upp_prob}
\tr \brr{\Pi_{ n_i \ge q} \rho_0(t)} \lesssim \frac{1}{q^p}, \qquad \textnormal{for all } t.
\end{align}
which followed via Markov's inequality from the moment bound 
\begin{align}
\label{tr_moment_upp_prob}
\tr \brr{ n_i^p\rho_0(t)}\leq C,\qquad \textnormal{for all } t.
\end{align}
These bounds are derived by using translation-invariance and energy conservation $\tr\brr{\rho H} = \tr\brr{\rho(t) H}$, cf.\  Lemma~\ref{lm:moment}.

Any improvement of the tail bound \eqref{tail_upp_prob} would yield an improved LR velocity. It is therefore a natural question whether our approach is sharp, i.e., whether \eqref{tr_moment_upp_prob} can be improved.
(For short times, Proposition~\ref{prop:interpolation} yields a stronger upper bound, but as noted thereafter, this is insufficient for an improved LR bound.) 

\begin{itemize}
    \item Question 1: Is the current proof via Lemma \ref{lm:moment} sharp, i.e., can we derive a stronger bound than \eqref{tr_moment_upp_prob} assuming translation-invariance and conservation of the first moment of the energy $\tr\brr{\rho H} = \tr\brr{\rho(t) H}$?
    \item Question 2: Can \eqref{tr_moment_upp_prob} be improved by also using conservation of the second moment of the Hamiltonian $\tr\brr{\rho(t) H^2}=\tr\brr{\rho H^2} $?
    \end{itemize}
  
In this Appendix, we construct explicit quantum states which show that the answer to Question 1 is ``Yes'' and the answer to Question 2 is ``No.'' In this sense, the particular method we develop in this paper is sharp; i.e., it will be necessary to leverage other dynamical constraints to obtain further improvements to the bosonic LRB for translation-invariant initial states.

\subsection{Construction of high boson occupation translation-invariant quantum state}
In Proposition \ref{prop:bad_quantum_state} below, we construct an explicit translation-invariant quantum state which saturates the bound~\eqref{tr_moment_upp_prob} while, at the same time, both $\frac{\mathrm{tr}\brr{\rho H}}{|\Lambda_R|}$ and $\frac{\mathrm{tr}\brr{\rho H^2}}{|\Lambda_R|^2}$ are arbitrarily small. 

At the same time, there are many translation-invariant initial states of small particle density for which $\frac{\mathrm{tr}\brr{\rho H}}{|\Lambda_R|}$ and $\frac{\mathrm{tr}\brr{\rho H^2}}{|\Lambda_R|^2}$ are $\mathcal O(1)$ (We give a simple, explicit example after Proposition \ref{prop:bad_quantum_state} for completeness).


Therefore, using only translation-invariance and moment conservation does not allow to exclude that a well-behaved (i.e., one of small local particle density) initial quantum state time-evolves into a badly behaved quantum state (i.e., one that saturates \eqref{tr_moment_upp_prob}). In particular, the construction shows that \textit{translation-invariance by itself} is certainly not enough for controlling the boson concentration better than in \eqref{tr_moment_upp_prob}.\\

For simplicity, we work in two dimensions $D=2$, but the construction generalizes to any dimension in an obvious way.

\begin{prop} \label{prop:bad_quantum_state}
Let $p\geq 2$ and let $\Lambda_R\subset \mathbb Z^2$ be a discrete torus of side length $R$. We choose the Hamiltonian $H$ as a Bose-Hubbard Hamiltonian of the form
\begin{align}
H=J \sum_{i\sim j}(b_i b_j^\dagger +{\rm h.c.}) + U\sum_{i\in \Lambda_R} ( n_i-1)^p 
\end{align}
with periodic boundary conditions.

Then, there exists a constant $C>0$ such that the following holds in the limits where $R,q\to\infty$ with $R\geq q^p$.
There exists a bosonic quantum state $\rho$ with the following properties.
\begin{enumerate}[label=(\Roman*)]
\item{} The state $\rho$ is translation invariant and has total particle number $|\Lambda_R|$ (i.e., the global particle density is $=1$).
\item{} The state $\rho$ satisfies,
\begin{align}
\label{moment_equality_mott}
\frac{\mathrm{tr}\brr{\rho H}}{|\Lambda_R|}=\mathcal O(q^{-p}),\qquad
\frac{\mathrm{tr}\brr{\rho H^2}}{|\Lambda_R|^2}=\mathcal O(q^{-p}),
\qquad \textnormal{for }q\to\infty.
\end{align}
\item{} For any site $i\in\Lambda_R$, we have
 \begin{align}\label{eq:assertionIII}
\tr\brr{ \rho\Pi_{ n_i \ge q}} \ge \frac{C}{q^p} .
\end{align}
\end{enumerate}
\end{prop}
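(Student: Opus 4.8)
\emph{Proof strategy.} The plan is to build, for every pair $(R,q)$ with $R\geq q^p$, an explicit translation-invariant state supported on a sparse, periodically spaced family of fully occupied lines of spacing $\sim q^p$, with the remaining bulk carrying essentially unit density. Concretely, fix a base occupation-number (Fock) configuration $\phi$ on $\Lambda_R=\mathbb Z_R^2$ as follows (for notational convenience assume $q^p\mid R$; in general one gets $\lfloor R/q^p\rfloor$ heavy rows, which only changes the constant $C$). Call a site $x=(x_1,x_2)$ \emph{heavy} if $x_2\equiv 0\pmod{q^p}$, and put $n_x(\phi)=q$ on every heavy site; there are $R/q^p$ heavy rows, carrying $R^2q^{1-p}$ bosons. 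On every non-heavy site put $n_x(\phi)=1$, except on a periodically placed deficit set $D$ with $|D|=R^2q^{-p}(q-1)$, where $n_x(\phi)=0$, so that $\mathcal N(\phi)=\sum_x n_x(\phi)=R^2=|\Lambda_R|$ exactly. Then set
\begin{equation*}
\rho:=\frac{1}{|\Lambda_R|}\sum_{a\in\Lambda_R}T_a\,|\phi\rangle\langle\phi|\,T_a^\dagger ,
\end{equation*}
the uniform mixture over lattice translates (a uniform coherent superposition of the $T_a|\phi\rangle$ works equally well for everything below).

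Properties (I) and (III) are then checked directly. Each $T_a|\phi\rangle$ is an $\mathcal N$-eigenstate with eigenvalue $|\Lambda_R|$, and $\rho$ is manifestly translation-invariant, giving (I). For (III), $\Pi_{n_i\geq q}$ is diagonal in the occupation basis and $T_a^\dagger\Pi_{n_i\geq q}T_a=\Pi_{n_{i-a}\geq q}$, so
\begin{equation*}
\tr\br{\rho\,\Pi_{n_i\geq q}}=\frac{1}{|\Lambda_R|}\sum_{a\in\Lambda_R}\langle\phi|\Pi_{n_{i-a}\geq q}|\phi\rangle=\frac{\#\{x : n_x(\phi)\geq q\}}{|\Lambda_R|}=\frac{R^2/q^p}{R^2}=\frac{1}{q^p},
\end{equation*}
which is \eqref{eq:assertionIII} with $C=1$.

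For (II) I would compute the two energy moments directly. Since $H$ is translation-invariant, $\tr(\rho H)=\langle\phi|H|\phi\rangle$, and since $\phi$ is a Fock state the hopping term contributes nothing, $\langle\phi|H_0|\phi\rangle=0$; hence $\tr(\rho H)=U\sum_x(n_x(\phi)-1)^p$, an explicit sum of the heavy contribution $U(q-1)^p(R^2/q^p)$ and the deficit contribution $\pm U|D|$. For the second moment one has $\langle\phi|H^2|\phi\rangle=\langle\phi|H_0^2|\phi\rangle+2\langle\phi|H_0V|\phi\rangle+\langle\phi|V^2|\phi\rangle$; the last term equals $\langle\phi|V|\phi\rangle^2$ because $|\phi\rangle$ is a $V$-eigenvector, the cross term vanishes by $V|\phi\rangle\propto|\phi\rangle$ and $\langle\phi|H_0|\phi\rangle=0$, and $\langle\phi|H_0^2|\phi\rangle=\norm{H_0|\phi\rangle}^2\lesssim J^2\sum_{\langle i,j\rangle}(n_i(\phi)+1)(n_j(\phi)+1)=\mathcal O(J^2|\Lambda_R|)$, the last bound using $p\geq 2$ and that $\phi$ has unit average occupation plus a heavy set of relative density $q^{-p}$. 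Dividing by $|\Lambda_R|^2$ and using $R\geq q^p$ yields $\tr(\rho H^2)/|\Lambda_R|^2=(\tr(\rho H)/|\Lambda_R|)^2+\mathcal O(q^{-2p})$, so the second-moment statement reduces to the first.

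The step I expect to be the genuine obstacle is obtaining the precise order in (II). The heavy lines carry interaction energy of order $U(q-1)^pq^{-p}|\Lambda_R|\sim U|\Lambda_R|$, and by the same Markov-type mechanism used in Lemma~\ref{lm:moment} (for even $p$, $(n_i-1)^p\geq(q-1)^p$ on the range $n_i\geq q$) a cost of this order is in fact \emph{forced} by (III). Hence the bare Fock construction by itself delivers only a bounded energy density; pushing it down to $\mathcal O(q^{-p})$ requires additionally compensating this essentially fixed positive interaction cost --- for instance by replacing the unit-density Fock bulk with a low-energy, mildly superfluid translation-invariant bulk state whose negative hopping energy cancels the heavy-line cost down to the required remainder, and by carefully tracking the dependence on the couplings $J,U$ fixed in the statement. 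Arranging this cancellation to the needed accuracy while preserving translation-invariance and keeping the heavy-site occupation $\geq q$ (so that (III) survives) is the delicate part; translation-invariance together with the unit-density normalization is precisely what makes the energy genuinely constrained here, consistent with the surrounding discussion.
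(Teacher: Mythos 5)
There is a genuine gap, and you have in fact identified it yourself: your construction does not prove assertion (II). With a pure Fock-state background of occupation $1$ (plus deficit sites) the hopping expectation vanishes, so the energy density of your $\rho$ is dominated by the heavy lines, $\tr(\rho H)/|\Lambda_R|\approx U(q-1)^p q^{-p}=\Theta(1)$, which misses \eqref{moment_equality_mott} by a factor $q^{p}$. As you correctly observe, an interaction cost of this order is unavoidable once \eqref{eq:assertionIII} holds (for even $p$, $(n_i-1)^p\Pi_{n_i\ge q}\succeq (q-1)^p\Pi_{n_i\ge q}$), so the \emph{only} way to reach $\mathcal O(q^{-p})$ total energy density is to cancel it against a strictly negative hopping contribution from the bulk. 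Your proposal stops exactly at this point: it gestures at ``a low-energy, mildly superfluid translation-invariant bulk state'' but neither constructs it nor verifies that the cancellation can be arranged compatibly with translation invariance, with the unit-density normalization, and with the heavy-line occupation staying $\gtrsim q$. That missing step is precisely the technical core of the paper's argument: Lemma \ref{lm:phitrialstate} produces, on each strip between heavy lines, a vertically translation-invariant state of density $\gamma_0$ with energy density $-\gamma_0 e_1+(1-\gamma_0)e_2<0$ (via an explicit paired trial state, passing to the ground state in the fixed-particle-number sector of a periodized strip Hamiltonian, and then diluting), and the two constraints --- global density $1$ and vanishing first moment --- are then solved simultaneously by tuning $\gamma_0\approx 1-e_1^{1/p}\ell^{(1-p)/p}$, which forces $q\propto \ell^{1/p}$, i.e.\ exactly the $q^{-p}$ density of heavy sites needed for (III).

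A second, smaller gap concerns the second moment: once the bulk is no longer a Fock state, $\langle H^2\rangle$ is not controlled by the trivial eigenvector identities you use. The paper needs the variance bound \eqref{eq:largevariance} for the strip states and, in addition, must control the squared boundary hopping terms $\partial h^2$ between heavy lines and strips; the latter requires bounding $\sum_x n_x^2$ on the boundary columns by the strip energy (an operator inequality using the repulsion), not just Cauchy--Schwarz on occupation numbers as in your estimate. So while your verification of (I) and (III) for a translation-averaged configuration is fine and structurally parallels the paper (heavy lines of occupation $q$ at spacing $\sim q^{p}$), the proposal as written establishes only a bounded energy density, and the construction and analysis needed to obtain \eqref{moment_equality_mott} --- the actual content of Proposition \ref{prop:bad_quantum_state} --- are absent.
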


A few remarks are in order.

(i) In assertion (II), the moments of $H$ are normalized in a natural way to obtain finite numbers in the thermodynamic limits $R\to\infty$.

(ii) In the statement, we consider the boson-boson interaction in the form of $(n_i-1)^p$, but this is just for convenience and the construction generalizes to many other forms. In fact, if we consider the Bose-Hubbard type interactions, e.g., $n_i^{p-1}(n_i-1)$, we are able to improve the second order moment bound to $\mathrm{tr}(\rho H^2)-\brr{\mathrm{tr}(\rho H)}^2 =\mathcal O(q^{-p}) |\Lambda_R|$. 

 (iii) The error terms $\mathcal O(q^{-p})$ are mainly due to finite-size \& number-theoretic effects. It is possible to obtain vanishing errors as $R\to\infty$ by a slightly modified, more involved construction.
 

(iv) The proof gives the following stronger version of assertion (III): Let $q=\ell^{1/p}$. Fix any site $i\in \Lambda_R$. For sufficiently large $R$, with probability larger than $1/\ell$, this site belongs to a one-dimensional path of length $R$ with boson number $\sim \ell^{1/p}$ at each of the sites in the path. The fact that the state we construct not only has large occupation on individual sites but on entire one-dimensional paths shows that no improvement can be expected from using the kind of alternative refined LRB for bounded interactions derived in \cite{kuwahara2022optimal} which takes into account the average occupation along a one-dimensional path.\\


For completeness, we give an explicit example of a well-behaved translation-invariant quantum state $\ket{m}$ (i.e., one of small local particle density)  with bounded first two energy moments. 

  We define the vertical lines 
\begin{equation}\label{eq:Ldefn}
    L_x=\left\{(x,y)\in \Lambda_R\,:\, y\in \mathbb Z\right\} .
\end{equation}
In this Appendix, we use Dirac's bra-ket notation. We assume $R$ is even and introduce the following Mott states
\[
\ket{m_{ev}}=\bigotimes_{1\leq x \leq R} \ket{0}_{L_{2x-1}}\otimes \ket{2}_{L_{2x}},\qquad 
\ket{m_{odd}}=\bigotimes_{1\leq x \leq R} \ket{2}_{L_{2x-1}}\otimes \ket{0}_{L_{2x}}.
\]
which both consist of alternating vertical strips of occupation numbers $=0$ and $=2$. Then we consider the state 
\[
\ket{m}=\frac{\ket{m_{ev}}+\ket{m_{odd}}}{\sqrt{2}}.
\]
The state $\ket{m}$ is translation-invariant and has total particle number $|\Lambda_R|$. Moreover, it satisfies the moment bounds\footnote{Here, the second equality holds by
\begin{align}
\bra{m} H^2 
\ket{m}
&= J^2 \sum_{i\sim j} \bra{m} (b_i b_j^\dagger +{\rm h.c.}) ^2\ket{m}  +U^2R^4\notag\\
&= J^2 \sum_{\qmexp{i,j}}  \bra{m}   n_i( n_j+1)+  n_j( n_i+1) \ket{m}+U^2R^4\\  
&=4 J^2 R^2+U^2 R^4 . 
\label{second_moment_H^2}
\end{align}}
\[
\frac{\bra{m}H\ket{m}}{|\Lambda_R|}=U,\qquad \frac{\bra{m}H^2\ket{m}}{|\Lambda_R|^2}=U^2+\frac{4J^2}{|\Lambda_R|}.
\]
By Proposition \ref{prop:bad_quantum_state}, using only translation-invariance and bounds on the first two moments, it is not possible to exclude that the time-evolved state $\ket{m(t)}$ has significantly accumulated bosons in the sense of \eqref{eq:assertionIII}.


\subsection{Proof of Proposition~\ref{prop:bad_quantum_state}.}
We first construct the preliminary state $\ket{\psi_0}$.
Recall \eqref{eq:Ldefn} and define the rectangles $L_{x:x'}$ by  $\bigcup_{x_1\in[x,x']} L_{x_1}$ (see also Fig.~\ref{fig:bad_quantum_state}).  
Fix $\ell\geq 1$.  


 \begin{figure}[tt]
\centering
\includegraphics[clip, scale=0.4]{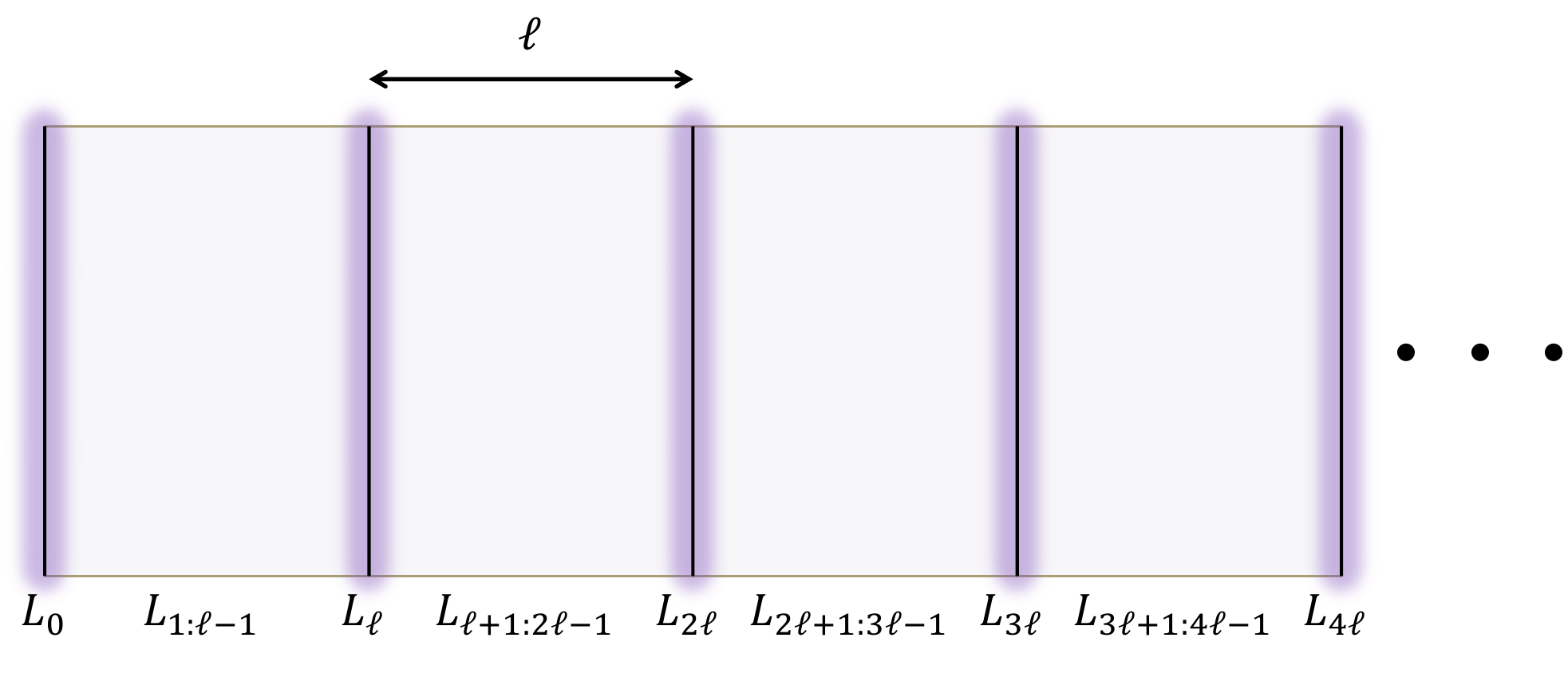}
\caption{Schematic description of the construction of the preliminary quantum state $\ket{\psi_0}$ in Eq.~\eqref{eq:bad_quantum_state_const}. 
In $\ket{\psi_0}$, one-dimensional paths with high-boson density $q$ periodically appear with spacing $\ell$. 
Between these high-density paths, the quantum state is given by a low-energy state with small boson density $\gamma_0<1$.  
The state $\ket{\psi_0}$ is constructed so that the conditions in Eq.~\eqref{moment_equality_mott} are satisfied, but it is not translation invariant. Therefore we average it over translations through Eq.~\eqref{eq:bad_quantum_state_const_trans} so that $\rho$ to obtain a translation-invariant state.
}
\label{fig:bad_quantum_state}
\end{figure}

Let $R\geq \ell\geq 3$. We construct the preliminary quantum state $\ket{\psi_0}$ by alternating high-occupancy lines with the low-energy states from Lemma \ref{lm:phitrialstate} (see Fig.~\ref{fig:bad_quantum_state}). Let $\gamma_0\in(0,1)$ and $q\geq 2$ to be determined later.  For simplicity, we suppose that $R$ is a multiple of $\ell$. We set
 \begin{align}
 \label{eq:bad_quantum_state_const}
\ket{\psi_0}
= \bigotimes_{s=0}^{R/\ell-1} \ket{q_{L_{s\ell} }} \otimes \ket{\phi_{\gamma_0,L_{s\ell+1:(s+1)\ell-1}}} ,
\end{align}
Here $\ket{q_{L_{s\ell} }}$ is the Mott state on the region $L$ with the boson number $q$ at the site of $i\in L_0$. The states on the intermediate rectangles $\ket{\phi_{\gamma_0,L_{s\ell+1:(s+1)\ell-1}}}$ are constructed via the following lemma.

\begin{lemma}[Low-energy states]\label{lm:phitrialstate}
    Let $\gamma_0\in(0,1)$. The following holds as $R\geq \ell\to\infty$. There exist  $e_1,e_2>0$ and a quantum state $\ket{\phi_{\gamma_0,L_{1:\ell-1}}}$ supported on the subset $L_{1:\ell-1}$ with boson density $\gamma_0+\mathcal O(R^{-1})$ such that the following energetic conditions are satisfied:
    \begin{itemize}
\item[(a)] $\ket{\phi_{\gamma_0,L_{1:\ell-1}}}$ commutes with translations in the vertical direction.
    \item[(b)] First-moment condition. 
    \begin{align}
\frac{\bra{\phi_{\gamma_0,L_{1:\ell-1}}} H_{L_{1:\ell-1}}\ket{\phi_{\gamma_0,L_{1:\ell-1}}}}{|L_{1:\ell-1}|}= -\gamma_0 e_1+(1-\gamma_0) e_2  +\mathcal O(\ell^{-1}).
\end{align}
\item[(c)] Second moment condition. 
\begin{align}\label{eq:largevariance}
\frac{\bra{\phi_{\gamma_0,L_{1:\ell-1}}}  H_{L_{1:\ell-1}}^2\ket{\phi_{\gamma_0,L_{1:\ell-1}}}}{|L_{1:\ell-1}|^2} \leq\brr{-\gamma_0 e_1+(1-\gamma_0) e_2}^2+\mathcal O(\ell^{-1}).
\end{align}

\end{itemize}
\end{lemma}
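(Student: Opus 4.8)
\medskip

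\noindent\textbf{Proof plan for Lemma~\ref{lm:phitrialstate}.}
The plan is to take $\ket{\phi_{\gamma_0,L_{1:\ell-1}}}$ to be a free‑boson state: all $N:=\lfloor \gamma_0|L_{1:\ell-1}|\rfloor$ bosons placed in a single, suitably chosen eigenmode of the hopping operator $H_{0,L_{1:\ell-1}}$ on the strip (open boundary in $x$, periodic in $y$). Let $\varphi_0$ be a normalized lowest‑energy single‑particle eigenfunction of $H_{0,L_{1:\ell-1}}$, with single‑particle energy $\epsilon_0=\epsilon_0(\ell,R)<0$, set $a_0^\dagger:=\sum_{i\in L_{1:\ell-1}}\varphi_0(i)\,b_i^\dagger$, and put $\ket{\phi_{\gamma_0,L_{1:\ell-1}}}:=(N!)^{-1/2}(a_0^\dagger)^N\ket 0$. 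Since $H_{0,L_{1:\ell-1}}$ is block‑diagonal in the vertical momentum (the horizontal bonds preserve $y$), $\varphi_0$ may be taken of product form $\varphi_0(x,y)=f_0(x)\,\zeta^{\,y}/\sqrt R$ with $|\zeta|=1$; a vertical translation then multiplies $a_0^\dagger$ by a phase, so the state is vertically translation invariant, giving (a). Because $\langle n_i\rangle=N|\varphi_0(i)|^2$ and $\sum_i|\varphi_0(i)|^2=1$, the total particle number is $N=\gamma_0|L_{1:\ell-1}|+\mathcal O(1)$, hence the boson density is $\gamma_0+\mathcal O((\ell R)^{-1})\subseteq\gamma_0+\mathcal O(R^{-1})$.

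The structural fact used throughout is that $\ket{\phi_{\gamma_0,L_{1:\ell-1}}}$ is an exact eigenvector of the hopping part: $H_{0,L_{1:\ell-1}}\ket{\phi_{\gamma_0,L_{1:\ell-1}}}=N\epsilon_0\ket{\phi_{\gamma_0,L_{1:\ell-1}}}$. For (b) this makes the hopping contribution exact, $\langle H_{0,L_{1:\ell-1}}\rangle=N\epsilon_0$, where $\epsilon_0\to\epsilon_0^\infty=-4|J|$ at rate $\mathcal O(\ell^{-1})$ (open‑chain plus ring band bottoms, using $R\ge\ell$). For the interaction term one uses that, for a single‑mode state, the occupation vector $(n_i)_{i}$ is exactly multinomially distributed with cell probabilities $|\varphi_0(i)|^2=\mathcal O((\ell R)^{-1})$; in particular each $n_i\sim\mathrm{Bin}(N,|\varphi_0(i)|^2)$ has bounded mean $\lambda_i:=N|\varphi_0(i)|^2=\mathcal O(1)$, so $\langle (n_i-1)^p\rangle=g_p(\lambda_i)+\mathcal O(N^{-1})$ with $g_p(\lambda):=\mathbb E_{\mathrm{Poi}(\lambda)}[(n-1)^p]$. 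Summing over $i$, using that $\lambda_i$ depends only on the column $x$ and converges to a smooth profile in $x/\ell$, and recognizing a Riemann sum, gives $\langle V_{L_{1:\ell-1}}\rangle/|L_{1:\ell-1}|=U\,\overline g_p(\gamma_0)+\mathcal O(\ell^{-1})$ for an explicit $\overline g_p(\gamma_0)$. Thus the energy density is $-4|J|\gamma_0+U\overline g_p(\gamma_0)+\mathcal O(\ell^{-1})$, which we write as $-\gamma_0 e_1+(1-\gamma_0)e_2+\mathcal O(\ell^{-1})$ by taking $e_1:=4|J|>0$ and $e_2:=U\overline g_p(\gamma_0)/(1-\gamma_0)$, the latter strictly positive for even $p$ since then $(n-1)^p\ge0$ forces $\overline g_p(\gamma_0)>0$ (odd $p$ is handled by an analogous reassignment or by modifying the mode occupations). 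This is (b).

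For (c) decompose $\mathrm{Var}(H_{L_{1:\ell-1}})=\mathrm{Var}(H_{0,L_{1:\ell-1}})+\mathrm{Var}(V_{L_{1:\ell-1}})+2\,\mathrm{Cov}(H_{0,L_{1:\ell-1}},V_{L_{1:\ell-1}})$. The eigenvector property immediately kills the first and third terms: $\mathrm{Var}(H_{0,L_{1:\ell-1}})=0$ and $\langle H_{0,L_{1:\ell-1}}V_{L_{1:\ell-1}}\rangle=N\epsilon_0\langle V_{L_{1:\ell-1}}\rangle=\langle H_{0,L_{1:\ell-1}}\rangle\langle V_{L_{1:\ell-1}}\rangle$. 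So it remains to bound $\mathrm{Var}(V_{L_{1:\ell-1}})=U^2\sum_{i,j}\mathrm{Cov}\big((n_i-1)^p,(n_j-1)^p\big)$. The diagonal terms are $\mathcal O(1)$ each (bounded moments of $\mathrm{Bin}(N,\mathcal O((\ell R)^{-1}))$), so $\mathcal O(|L_{1:\ell-1}|)=\mathcal O(\ell R)$ in total. For the off‑diagonal terms, expand $(n-1)^p$ in falling factorials and use that for a multinomial law $|\mathrm{Cov}((n_i)_a,(n_j)_b)|\le ab\,N^{a+b-1}|\varphi_0(i)|^{2a}|\varphi_0(j)|^{2b}$; summing over $i\ne j$ and using $\sum_i|\varphi_0(i)|^{2a}\le\|\varphi_0\|_\infty^{2(a-1)}=\mathcal O((\ell R)^{-(a-1)})$ gives $\mathcal O(N\cdot(N/(\ell R))^{a+b-2})=\mathcal O(\ell R)$ for each $a,b\le p$. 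Hence $\mathrm{Var}(H_{L_{1:\ell-1}})=\mathcal O(\ell R)$; dividing by $|L_{1:\ell-1}|^2=((\ell-1)R)^2$ and combining with (b) yields $\langle H_{L_{1:\ell-1}}^2\rangle/|L_{1:\ell-1}|^2=(-\gamma_0 e_1+(1-\gamma_0)e_2)^2+\mathcal O(\ell^{-1})$, which is (c).

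The main obstacle is (c), and the reason the argument works is precisely the single‑mode choice: such a state has off‑diagonal long‑range order (e.g.\ $\langle b_i^\dagger b_j\rangle=N\,\overline{\varphi_0(i)}\,\varphi_0(j)$ does not decay), so the fluctuations of the hopping term are a priori of the dangerous size $|L_{1:\ell-1}|^2$; the construction sidesteps this entirely by making the state an exact eigenvector of $H_{0,L_{1:\ell-1}}$, after which only the interaction fluctuations survive, and these are tame because the site occupations are only weakly (multinomially, with $\mathcal O((\ell R)^{-1})$ weights) correlated. The remaining work is bookkeeping: handling the unboundedness of $n_i$ through the binomial/multinomial moment bounds above, and tracking the $\mathcal O(\ell^{-1})$ corrections (from $\epsilon_0\to\epsilon_0^\infty$, the Poisson approximation, the Riemann sum, and the integer rounding of $N$). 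A fully equivalent manifestly bounded‑occupation variant, if preferred, is to take a tensor product over disjoint two‑site vertical blocks—a fraction $\gamma_0$ carrying the (strictly negative energy) two‑boson ground state of the two‑site pair Hamiltonian and the rest empty—and to symmetrize over vertical translations; there the variance bound is immediate from the product structure together with the occupation cap $n_i\le 2$.
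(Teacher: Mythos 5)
Your main construction (all $N\approx\gamma_0|L_{1:\ell-1}|$ bosons condensed in the lowest mode of $H_{0,L_{1:\ell-1}}$) has a genuine gap at property (b), and it is exactly the point the lemma exists for. As used in the proof of Proposition \ref{prop:bad_quantum_state}, $e_1,e_2$ must be constants depending only on $J,U,p$ (not on $\gamma_0$), because there $\gamma_0$ is subsequently tuned with $\ell$, $\gamma_0\approx 1-e_1^{1/p}\ell^{(1-p)/p}\to 1$, and the occupied region must carry a \emph{strictly negative} energy density, bounded away from zero as $\gamma_0\to1$, in order to cancel the $+(q-1)^p$ energy of the high-occupancy lines and make $\tr(\rho H)/|\Lambda_R|=\mathcal O(q^{-p})$. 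Your condensate gives energy density $-4|J|\gamma_0+U\,\overline g_p(\gamma_0)+\mathcal O(\ell^{-1})$, where $\overline g_p$ is an average of Poisson-type moments of order $p$ at mean $\approx\gamma_0$; this is a fixed number once $J,U,p$ are given, and for $U\gg|J|$ (or already for moderate $U$ and larger $p$, since these moments grow rapidly in $p$) it is strictly \emph{positive} near $\gamma_0=1$. Your fix, $e_2:=U\overline g_p(\gamma_0)/(1-\gamma_0)$, satisfies the letter of the statement for each fixed $\gamma_0$ only because any real number can be written as $-\gamma_0e_1+(1-\gamma_0)e_2$ with positive constants; but $e_2$ then diverges as $\gamma_0\to1$ and the construction cannot feed into \eqref{eq:constraint2}. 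The paper avoids this by a variational two-site (dimer) state $\lambda_1\ket{1,1}-\lambda_2\,{\rm sign}(J)(\ket{2,0}+\ket{0,2})$ whose interaction cost is $\mathcal O(\lambda_2^2)$ while the hopping gain is $-\mathcal O(\lambda_1\lambda_2)$, hence strictly negative energy density for \emph{every} $J\neq0,U>0$; it then passes to the exact density-$1$ ground state of the periodized strip (so the second-moment computation reduces to boundary hopping terms, controlled by particle-number and energy bounds) and dilutes with vacuum, which is precisely what yields the $\gamma_0$-independent constants, with $e_2=U$ coming from $(0-1)^p$ on the empty fraction.

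On the positive side, your density count and property (a) are fine, and your argument for (c) in the condensate setting is clean and essentially correct: making the state an exact eigenvector of $H_{0,L_{1:\ell-1}}$ kills the hopping fluctuations and the cross term, and the multinomial occupation statistics give $\mathrm{Var}(V_{L_{1:\ell-1}})=\mathcal O(|L_{1:\ell-1}|)$ --- but this is moot without (b) in the uniform sense above. Your closing one-sentence ``variant'' (a product of two-site blocks carrying the two-boson pair ground state, energy $U-\sqrt{U^2+2J^2}<0$, diluted with empty sites) is in fact the mechanism the paper uses and could be completed: with the dimers oriented horizontally inside an occupied slab of width $\approx\gamma_0\ell$ no symmetrization over vertical translations is needed, the first moment gives constants $e_1,e_2$ directly, and the variance is $\mathcal O(|L_{1:\ell-1}|)$ by the product structure and the occupation cap. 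As written, however, that variant is only a sketch (the negativity of the pair energy is asserted, not verified, and the vertical-invariance/dilution bookkeeping is not done), so the proposal as it stands does not prove the lemma.
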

Below, we will choose $\gamma_0\approx 1$ so that the state has negative energy. The proof idea is as follows: First, we construct a trial state of density-$1$ of strictly negative energy. Second, we consider the corresponding ground state and deplete it by adding particle-free regions to obtain a state satisfying the condition (a). 

For now, we assume Lemma \ref{lm:phitrialstate} to hold and continue the proof of Proposition \ref{prop:bad_quantum_state}. We link the two parameters $q$ and $\gamma_0$ in the state $\ket{\psi_0}$ by the condition
that the total boson density equals $1$, i.e., 
\[
q |L_0| +\brr{\gamma_0+\mathcal O(R^{-1})} |L_{1:\ell-1}| =  |L_{0:\ell-1}|,
\]
which we can rephrase as
\begin{align}\label{eq:constraint1}
q=(1 - \gamma_0 )(\ell-1)+ 1+\mathcal O(R^{-1}).
\end{align}


We construct from $\ket{\psi_0}$ a  translation invariant state by averaging it over translations up to order $\ell-1$,
 \begin{align}
  \label{eq:bad_quantum_state_const_trans}
  \rho:=\frac{1}{\ell} \sum_{x=0}^{\ell-1} \ket{\psi_x}\bra{\psi_x} ,\qquad  \ket{\psi_x}:=\bigotimes_{s=0}^{R/\ell-1}  \ket{q_{L_{x+s\ell} }} \otimes \ket{\phi_{\gamma_0,x+L_{s\ell+1:x+(s+1)\ell-1}}}.
\end{align}
By construction, the state is periodic in the horizontal direction, and by property (a) of Lemma \ref{lm:phitrialstate}, it is periodic in the vertical direction. Therefore, it satisfies
assertion (I) in Proposition \ref{prop:bad_quantum_state}. 

In the quantum state $\ket{\psi}$, the reduced density matrix on the  line $L_0$ is given by
 \begin{align}
\rho_{L_0}=\frac{1}{\ell} \ket{q_{L_0}}\bra{q_{L_0}} +\frac{\ell-1}{\ell}  \tilde{\rho}_{L_0},
\end{align}
where $\tilde{\rho}_{L_0} $ is constructed from the reduced density matrix of $\ket{\phi_{\gamma_0,x+L_{s\ell+1:x+(s+1)\ell-1}}}$.
In particular, we see that assertion (III) holds\footnote{In fact, recalling the definition of $\ket{\psi}$, we see that $q\gtrsim \ell^{1/p}$ implies the stronger version of assertion (III) that the site belongs to an entire path of boson numbers $\gtrsim \ell^{1/p}$.} if we ensure that
$
q\geq C\ell^{1/p}.
$
Using \eqref{eq:constraint1}, this can be rephrased as the constraint
\begin{equation}\label{eq:constraint1'}
(1 - \gamma_0 ) (\ell-1)+ 1+\mathcal O(R^{-1})\geq C\ell^{1/p}.
\end{equation}
For large $\ell$ and $\gamma_0\approx 1$, this constraint is asymptotic to $\gamma_0\gtrsim 1-C\ell^{\frac{1-p}{p}}$.\\

Hence, to prove Proposition \ref{prop:bad_quantum_state}, it remains to find a choice of $\ell$ so that simultaneously \eqref{eq:constraint1'} holds and the energetic relations in~\eqref{moment_equality_mott} from assertion (II) are satisfied.

\underline{First moment.}
By translation-invariance, the average energy is given by
 \begin{align}
\tr\brr{\rho H} = \bra{\psi_0} H \ket{\psi_0} .
\end{align}
We decompose $H$ as 
 \begin{align}
H = \sum_{s=0}^{R/\ell-1}  \br{ H_{L_{s\ell+1:(s+1)\ell-1}} + \partial h_{L_{s\ell} }+ V_{L_{s\ell} }},
\end{align}
where $ \partial h_{L_{s\ell} }$ are the hopping terms that act on the line $L_{s\ell}$. Note that $\bra{\psi_0}\partial h_{L_{s\ell} } \ket{\psi_0}=0$.
We have
 \begin{align}
\bra{\psi_0}H \ket{\psi_0} &=  \frac{R}{\ell} \brr{ (-\gamma_0e_1+(1-\gamma_0) e_2+\mathcal O(\ell^{-1})} |L_{1:\ell-1}|+ (q-1)^p |L_{0} |    .
\end{align}
We introduce the following constraint on the parameter  $\gamma_0$
\[
\brr{-\gamma_0e_1+(1-\gamma_0) e_2}  |L_{1:\ell-1} | + (q-1)^p |L_0 |=0  
\]
Substituting for $q$ by using the previous constraint \eqref{eq:constraint1}, we rephrase this as
\begin{align}\label{eq:constraint2}
\gamma_0 e_1\ell = (1-\gamma_0)^p(\ell-1)^p +\brr{(1-\gamma_0) e_2}\ell +\mathcal O(\ell^{-1}).
\end{align}
For large $\ell$ and $\gamma_0\approx 1$, this amounts to $e_1\ell \approx (1-\gamma_0)^p \ell^p$ or equivalently $\gamma_0\approx 1-e_1^{1/p} \ell^{\frac{1-p}{p}}$. Absorbing the case of small $\ell$ in the constant, we can ensure that \eqref{eq:constraint1'} and \eqref{eq:constraint2} hold simultaneously. This establishes
\[
\tr\brr{\rho H}  =\mathcal O(\tfrac{R^2}{\ell}).
\]

\underline{Second moment.} 
By translation-invariance, 
 \begin{align}
\tr\brr{\rho H^2} = \bra{\psi_0} H^2 \ket{\psi_0} .
\end{align}
For the state $\ket{\psi_0}$, we have
 \begin{align}
H \ket{\psi_0} =  \sum_{s=0}^{R/\ell-1} \br{ H_{L_{s\ell+1:(s+1)\ell-1}}  + q^p |L_{s\ell} | } \ket{\psi_0} 
+  \partial h_{L_{s\ell}} \ket{\psi_0} .
\end{align}
Using this and $\bra{\psi_0}  H_{L_{s\ell+1:(s+1)\ell-1}} \partial h_{L_{s\ell}}  \ket{\psi_0} =0$, we find
 \begin{align}
&\bra{\psi_0}  H^2 \ket{\psi_0} \\ 
=&\bra{\psi_0}  H^2 \ket{\psi_0} -\bra{\psi_0}  H \ket{\psi_0}^2+\mathcal O(\tfrac{R^4}{\ell^2})\\
=&\sum_{s=0}^{R/\ell-1}\Big(\bra{\phi_{\gamma_0,L_{s\ell+1:(s+1)\ell-1}}} H_{L_{s\ell+1:(s+1)\ell-1}}^2\ket{\phi_{\gamma_0,L_{s\ell+1:(s+1)\ell-1}}}\\
&\qquad +2q^p |L_{s\ell} |  \cdot (-\gamma_0e_1 +(1-\gamma_0)e_2 +\mathcal O(\ell^{-1})) + (q^p |L_{s\ell}|)^2
+ \bra{\psi_0}\partial h_{L_{s\ell}}^2 \ket{\psi_0} \Big)+\mathcal O(\tfrac{R^4}{\ell^2})\notag \\
=&\frac{R}{\ell} \br{  \bra{\phi_{\gamma_0,L_{1:\ell-1}}} \Delta H_{L_{1:\ell-1}}^2 \ket{\phi_{\gamma_0,L_{1:\ell-1}}} 
+ \bra{\psi_0}\partial h_{L_{\ell}}^2 \ket{\psi_0}}+\mathcal O(\tfrac{R^4}{\ell^2}),
\label{eq:DeltaHintro}
\end{align}
where we introduced
\[
\Delta H_{L_{1:\ell-1}}^2=H_{L_{1:\ell-1}}^2- 2q^p |L_{0} |  \cdot (-\gamma_0e_1 +(1-\gamma_0)e_2)|L_{1:\ell-1}| +(q^p |L_{0}|)^2.
\]
We estimate $ \bra{\phi_{\gamma_0,L_{1:\ell-1}}} \Delta H_{L_{1:\ell-1}}^2 \ket{\phi_{\gamma_0,L_{1:\ell-1}}}$ via \eqref{eq:largevariance} in part (b) of Lemma \ref{lm:phitrialstate}. By our choice \eqref{eq:constraint1} of $q$ (i.e., $q\propto \ell^{1/p}$), the order $|L_{1:\ell-1}|^2$ terms cancel and so
\[
\bra{\phi_{\gamma_0,L_{1:\ell-1}}} \Delta H_{L_{1:\ell-1}}^2\ket{\phi_{\gamma_0,L_{1:\ell-1}}} =\mathcal O(R^2\ell).
\]
By Cauchy-Schwarz,
 \begin{align}
\bra{\psi_0}\partial h_{L_0}^2 \ket{\psi_0}
&\leq J^2 \sum_{ \{i,j\}\cap L_0 \neq \emptyset}  \bra{\psi_0}   n_i( n_j+1)+  n_j( n_i+1) \ket{\psi_0} \notag \\
\label{eq:boundaryhoppingCS}
& \le 3J^2  q(q+1) |L_0| \leq CJ^2  \ell^{2/p} R.
\end{align}
 It follows that $
\bra{\psi_0}  H^2 \ket{\psi_0}\leq \mathcal O(R^3)+\mathcal O(\tfrac{R^4}{\ell^2})$.
This completes the proof of Proposition \ref{prop:bad_quantum_state}. 
\qed

\subsection{Proof of Lemma \ref{lm:phitrialstate}}
The proof is comprised of two steps. First, we construct an explicit trial state of particle density $1$ and negative energy expectation. Then we consider the density-$1$ ground state and lower its density to $\gamma_0$ by ``dilution'', i.e.,  tensoring it with zero-occupancy regions in a suitable way. For technical reasons (to control certain boundary terms via the number of particles on the boundary), we need translation-invariance of these subsystem states and for this reason, we use a periodic subsystem Hamiltonian, which again introduces other boundary terms; see Step 2 for the details. 

\underline{Step 1.} Let $\ell_0\geq 4$.
We consider the following trial state $\ket{\varphi}$ of particle density $1$:
\[
\ket{\varphi} :=\bigotimes_{k=1}^R\left( \bigotimes_{j=1}^{\lfloor\ell_0/2\rfloor}\ket{\varphi_{(2j-1,k),(2j,k)}} \otimes \ket{1}^{\ell_0/2-\lfloor \ell_0/2\rfloor}\right),
\]
where $k$ denotes the vertical coordinate and $\ket{\varphi_{(2j-1,k),(2j,k)}} $ is given in the particle occupation basis by
\[
\ket{\varphi_{(2j-1,k),(2j,k)}} =\lambda_1 \ket{1,1} - \lambda_2 {\rm sign}(J)  (\ket{2,0}+\ket{0,2}),
\]
with $\lambda_1,\lambda_2\ge 0$ and $\lambda_1^2+2\lambda_2^2=1$. 

We consider the periodized Hamiltonian
\[
H^{\mathrm{per}}_{L_{1:\ell_0-1}}:=H_{L_{1:\ell_0-1}}+\partial h_{\ell_0\leftrightarrow 1}.
\]
We calculate $\bra{\varphi} H_{L_{1:\ell_0-1}} \ket{\varphi} 
$ and show that it is strictly negative and on the order of the volume $|L_{1:\ell_0-1}|$ for sufficiently large $\ell_0$.

For the potential energy, we have
\[
\bra{\varphi} U\sum_{x\in L_{1:\ell-1}}  (n_x-1)^2 \ket{\varphi} = 2 U R\ell \lambda_2^2.
\]
For the kinetic energy, we distinguish cases. Denote $h_{(j,k),(j',k')}=J(b_{(j,k)}^\dagger b_{(j',k')}+ {\rm h.c.})$. Straightforward calculations in the bulk ($1\leq j<\ell/2-1$) gives
\[\begin{aligned}
\bra{\varphi} h_{(2j,k),(2j+1,k)} \ket{\varphi} =\bra{\varphi} h_{(2j,k),(2j,k+1)} \ket{\varphi}   =0.
\end{aligned}
\]
That is, the only non-vanishing hopping terms are those ``within'' a single tensor factor $\ket{\varphi_{(2j-1,k),(2j,k)}}$. In the bulk, i.e., for $1\leq j<\ell/2-1$, these are given by
\[
\bra{\varphi} h_{(2j-1,k),(2j,k)} \ket{\varphi} 
=-4 |J| \lambda_1 \lambda_2 . 
\]
We estimate the hopping terms at the boundary by Cauchy-Schwarz and obtain
\[
\begin{aligned}
\bra{\varphi} H^{\mathrm{per}}_{L_{1:\ell_0-1}} \ket{\varphi} 
&\leq R\ell_0 ( 2 U \lambda_2^2 - 2|J|\lambda_1 \lambda_2)+c(J,U) R\\
 &=2UR\ell_0 (\lambda_2^2 - T_1 \lambda_1\lambda_2)   +c(J,U) R,\qquad \textnormal{with }\ T_1=\frac{|J|}{U}.
 \end{aligned}
\]
Minimizing over pairs $(\lambda_1,\lambda_2)$ satisfying the constraint $\lambda_1^2+2\lambda_2^2=1$, we obtain
\[
\min_{\lambda_1,\lambda_2}\bra{\varphi} H^{\mathrm{per}} _{L_{1:\ell_0-1}} \ket{\varphi} = \frac{U R\ell_0}{2} \Bigl(1- \sqrt{1+2T_1^2}  \Bigr)+c(J,U) R.
\]
Hence, for all sufficiently large $\ell_0$, there exists $\tilde e_1>0$ such that
\[
\inf\mathrm{spec}\,  H^{\mathrm{per}}_{L_{1:\ell_0-1}} \leq -\tilde e_1 |L_{1:\ell_0-1}|.
\]
The Hamiltonian $H^{\mathrm{per}}_{L_{1:\ell_0-1}}$ commutes with $n_{L_{1:\ell_0-1}}$, the particle number on $L_{1:\ell_0-1}$ and so we can block-diagonalize it. Since the trial state $\ket{\varphi}$ satisfies $n_{L_{1:\ell_0-1}}\ket{\varphi}=|L_{1:\ell_0-1}|\ket{\varphi}$ (it has global particle density $1$), we may conclude
\begin{equation}\label{eq:step1}
\inf\mathrm{spec}\, (  \Pi_{n_{L_{1:\ell_0-1}}=|L_{1:\ell_0-1}|} H^{\mathrm{per}}_{L_{1:\ell_0-1}} \Pi_{n_{L_{1:\ell_0-1}}=|L_{1:\ell_0-1}|})\leq -\tilde e_1 |L_{1:\ell_0-1}|.
\end{equation}
where $ \Pi_{n_{L_{1:\ell_0-1}}=|L_{1:\ell_0-1}|}$ projects onto the spectral subspace on which $n_{L_{1:\ell_0-1}}=|L_{1:\ell_0-1}|$.

\underline{Step 2.} 
Fix $\gamma_0\in (0,1)$. We set $\ell_0=\lfloor\gamma_0 \ell \rfloor<\ell$.  We assume that $\ell$ is large enough so that \eqref{eq:step1} holds. Using  \eqref{eq:step1}, the spectral theorem implies that exists a normalized ground state $\ket{\xi_{1:\ell_0-1}}$ of global particle density $1$ and energy per volume $- e_1\leq -\tilde e_1<0$, i.e.,
\[
H^{\mathrm{per}}_{L_{1:\ell_0-1}}\ket{\xi_{1:\ell_0-1}}=-e_1 |L_{1:\ell_0-1}|\ket{\xi_{1:\ell_0-1}}.
\]
By block-diagonalizing $H^{\mathrm{per}}_{L_{1:\ell_0-1}}$ with respect to translations in the vertical direction, we may assume that $\ket{\xi_{1:\ell_0-1}}$ is invariant under translations in the vertical direction up to a phase.

To construct $\ket{\phi_{\gamma_0,0:\ell-1}}$, we dilute $\ket{\xi_{0:\ell_0-1}}$ with particle-free regions as follows.
\[
\ket{\phi_{\gamma_0,0:\ell-1}}:=\ket{\xi_{0:\ell_0-1}}\otimes\ket{0}_{\ell_0:\ell-1},
\qquad \textnormal{with }\ket{0}_{\ell_0:\ell-1}:=\bigotimes_{\ell_0\leq j\leq \ell-1}\bigotimes_{1\leq k\leq R} \ket{0_{j,k}}.
\] 

First, we note that the state $\ket{\phi_{1:\ell}}$ is invariant under translations in the vertical direction up to a phase, which ensures property (a) of Lemma \ref{lm:phitrialstate}.
Moreover, it has particle density $\frac{\ell_0}{\ell}=\gamma_0+\mathcal O(\ell^{-1})$ as desired.

We check property (b). We abbreviate $H_{a:b}\equiv H_{L_{a:b}}$ and $H^{\mathrm{per}}_{a:b}\equiv H^{\mathrm{per}}_{L_{a:b}}$. The building blocks of $\ket{\phi_{1:\ell}}$ are all individually eigenstates of suitable local Hamiltonians, 
\begin{equation}\label{eq:evalueapp}
\begin{aligned}
H^{\mathrm{per}}_{1:\ell_0-1}\ket{\xi_{{1:\ell_0-1}}}=&- e_1 \ell_0 R\ket{\xi_{{1:\ell_0-1}}},\\
 H_{\ell_0:\ell-2}\ket{0}_{\ell_0:\ell-1}=&U\ket{0}_{\ell_0:\ell-1}.\qquad\end{aligned}
\end{equation}
We write $H_{1:\ell-1}=H^{\mathrm{per}}_{1:\ell_0-1}+H_{\ell_0:\ell-1}+\partial h_{\ell_0-1\leftrightarrow \ell_0}-\partial h_{\ell_0-1\leftrightarrow 1}$ and use \eqref{eq:evalueapp}  to calculate
\begin{equation}\label{eq:firstmomentabove}
\begin{aligned}
&\frac{\bra{\phi_{\gamma_0,L_{1:\ell-1}}} H_{{1:\ell-1}}\ket{\phi_{\gamma_0,L_{1:\ell-1}}}}{|L_{1:\ell-1}|}\\
= &-\frac{\ell_0 R}{{|L_{1:\ell-1}|}} e_1+\frac{|L_{1:\ell-1}|-\ell_0 R}{{|L_{1:\ell-1}|}}  U+\frac{\bra{\phi_{\gamma_0,L_{1:\ell-1}}}  \partial h_{\ell_0-1\leftrightarrow \ell_0}-\partial h_{\ell_0-1\leftrightarrow 1}\ket{\phi_{\gamma_0,L_{1:\ell-1}}}}{|L_{1:\ell-1}|}\\
=& -\gamma_0 e_1+(1-\gamma_0) U  +\mathcal O(\ell^{-1}) +\frac{\bra{\phi_{\gamma_0,L_{1:\ell-1}}}  \partial h_{\ell_0-1\leftrightarrow 1}\ket{\phi_{\gamma_0,L_{1:\ell-1}}}}{|L_{1:\ell-1}|}.
\end{aligned}
\end{equation}
In the last step, we used  $\frac{\ell_0}{\ell}=\gamma_0+\mathcal O(\ell^{-1})$ and the fact that $\bra{\phi_{\gamma_0,L_{1:\ell-1}}} \partial h_{\ell_0-1\leftrightarrow \ell_0}\ket{\phi_{\gamma_0,L_{1:\ell-1}}}=0$. By Cauchy-Schwarz, 
\begin{equation}\label{eq:boundaryest}
\begin{aligned}
\bra{\phi_{\gamma_0,L_{1:\ell-1}}}  \partial h_{\ell_0-1\leftrightarrow 1}\ket{\phi_{\gamma_0,L_{1:\ell-1}}}
\leq& CJ \bra{\phi_{\gamma_0,L_{1:\ell-1}}} (n_{L_{\ell_0-1}}+n_{L_1})\ket{\phi_{\gamma_0,L_{1:\ell-1}}}\\
=&\frac{2CJ}{\ell_0-1}\bra{\xi_{1:\ell_0-1}} n_{L_{1:\ell_0-1}}\ket{\xi_{1:\ell_0-1}}\\
=& 2CJ R
\end{aligned}
\end{equation}
for an explicit constant $C>0$ that may change from line to line that does not depend on any parameters, where we use $\bra{\xi_{1:\ell_0-1}} n_{L_{1:\ell_0-1}}\ket{\xi_{1:\ell_0-1}}=R|L_{1:\ell_0-1}|$. The first equality in \eqref{eq:boundaryest} holds by the periodicity of $\ket{\xi_{1:\ell_0-1}}$ in the horizontal direction. Returning to \eqref{eq:firstmomentabove} and using that $\frac{R}{|L_{1:\ell-1}|}=\mathcal O(\ell^{-1})$, this proves that property (b) holds with $e_2:=U$.

We come to property (c). Using the eigenvalue equations \eqref{eq:evalueapp}, the only non-trivial contributions again come from the hopping term along the boundaries:
\[\begin{aligned}
&\bra{\phi_{\gamma_0,L_{1:\ell-1}}}  H_{L_{1:\ell-1}}^2\ket{\phi_{\gamma_0,L_{1:\ell-1}}}\\
=&\bra{\phi_{\gamma_0,L_{1:\ell-1}}}  (H^{\mathrm{per}}_{1:\ell_0-1}+H_{\ell_0:\ell-1}+\partial h_{\ell_0-1\leftrightarrow \ell_0}-\partial h_{\ell_0-1\leftrightarrow 1})^2\ket{\phi_{\gamma_0,L_{1:\ell-1}}}\\
=&\bra{\phi_{\gamma_0,L_{1:\ell-1}}}  (H^{\mathrm{per}}_{1:\ell_0-1}+H_{\ell_0:\ell-1})^2\ket{\phi_{\gamma_0,L_{1:\ell-1}}}\\
&+\bra{\phi_{\gamma_0,L_{1:\ell-1}}}  \{H^{\mathrm{per}}_{1:\ell_0-1}+H_{\ell_0:\ell-1},\partial h_{\ell_0-1\leftrightarrow \ell_0}-\partial h_{\ell_0-1\leftrightarrow 1}\}\ket{\phi_{\gamma_0,L_{1:\ell-1}}}\\
&+\bra{\phi_{\gamma_0,L_{1:\ell-1}}} (\partial h_{\ell_0-1\leftrightarrow \ell_0}-\partial h_{\ell_0-1\leftrightarrow 1})^2\ket{\phi_{\gamma_0,L_{1:\ell-1}}}\\
= &(-\gamma_0 e_1+(1-\gamma_0) e_2+\mathcal O(\ell^{-1}))^2|L_{0:\ell-1}|^2\\
 &+2(-\gamma_0 e_1+(1-\gamma_0) e_2+\mathcal O(\ell^{-1}))|L_{0:\ell-1}| \bra{\phi_{\gamma_0,L_{1:\ell-1}}} (\partial h_{\ell_0-1\leftrightarrow \ell_0}-\partial h_{\ell_0-1\leftrightarrow 1}) \ket{\phi_{\gamma_0,L_{1:\ell-1}}}\\
&+\bra{\phi_{\gamma_0,L_{1:\ell-1}}} (\partial h_{\ell_0-1\leftrightarrow \ell_0}-\partial h_{\ell_0-1\leftrightarrow 1})^2 \ket{\phi_{\gamma_0,L_{1:\ell-1}}},
\\
\end{aligned}
\]
where we use Eq.~\eqref{eq:evalueapp} in the last equation.
The boundary terms can be controlled by an extension of the argument in \eqref{eq:boundaryest}.
For the first-order boundary terms, this is immediate,
\begin{align}
 |\bra{\phi_{\gamma_0,L_{1:\ell-1}}} (\partial h_{\ell_0-1\leftrightarrow \ell_0}-\partial h_{\ell_0-1\leftrightarrow 1}) \ket{\phi_{\gamma_0,L_{1:\ell-1}}}|
 &\leq CJ |\bra{\phi_{\gamma_0,L_{1:\ell-1}}} (n_{L_1}+n_{L_{\ell_0-1}})\ket{\phi_{\gamma_0,L_{1:\ell-1}}}| \notag \\
 &\leq 2CJR. \notag 
\end{align}
Together with the factor of $|L_{0:\ell-1}|$ this gives a subleading term $\mathcal O(\ell^{-1})$ compared to the leading $|L_{0:\ell-1}|^2$ term.

For the squared boundary terms $\bra{\phi_{\gamma_0,L_{1:\ell-1}}}  (\partial h_{\ell_0-1\leftrightarrow \ell_0}-\partial h_{\ell_0-1\leftrightarrow 1})^2\ket{\phi_{\gamma_0,L_{1:\ell-1}}}$, we need to use an energy bound. First, multiple applications of Cauchy-Schwarz give 
\[
\bra{\phi_{\gamma_0,L_{1:\ell-1}}} (\partial h_{\ell_0-1\leftrightarrow \ell_0}-\partial h_{\ell_0-1\leftrightarrow 1})^2 \ket{\phi_{\gamma_0,L_{1:\ell-1}}}\leq 
32CR J^2 \sum_{x\in L_1\cup L_{\ell_0-1}}\bra{\phi_{\gamma_0,L_{1:\ell-1}}} n_x^2\ket{\phi_{\gamma_0,L_{1:\ell-1}}},
\]
 with $C>0$ a universal constant. For instance, denoting $\tilde n_x=n_x+1$,
\[
\begin{aligned}
&\bra{\phi_{\gamma_0,L_{1:\ell-1}}} \partial h_{\ell_0-1\leftrightarrow 1}^2 \ket{\phi_{\gamma_0,L_{1:\ell-1}}}\\
=&J^2\sum_{1\leq k,k'\leq R}\bra{\phi_{\gamma_0,L_{1:\ell-1}}} 
( b_{1,k}^\dagger b_{\ell_0-1,k}+b_{1,k} b_{\ell_0-1,k}^\dagger)
( b_{1,k'}^\dagger b_{\ell_0-1,k'}+b_{1,k'} b_{\ell_0-1,k'}^\dagger)
 \ket{\phi_{\gamma_0,L_{1:\ell-1}}}\\
\leq &CJ^2\sum_{1\leq k<k'\leq R}\bra{\phi_{\gamma_0,L_{1:\ell-1}}} (
\tilde n_{1,k}\tilde n_{1,k'}+\tilde n_{1,k}\tilde n_{\ell_0-1,k'}+\tilde n_{\ell_0-1,k}\tilde n_{1,k'}+\tilde n_{\ell_0-1,k}\tilde n_{\ell_0-1,k'}) \ket{\phi_{\gamma_0,L_{1:\ell-1}}}\\
\leq &2CR J^2\sum_{x\in L_1\cup L_{\ell_0-1}}\bra{\phi_{\gamma_0,L_{1:\ell-1}}} (n_x+1)^2 \ket{\phi_{\gamma_0,L_{1:\ell-1}}}.
\\
\leq &8CR J^2\sum_{x\in L_1\cup L_{\ell_0-1}}\bra{\phi_{\gamma_0,L_{1:\ell-1}}} n_x^2 \ket{\phi_{\gamma_0,L_{1:\ell-1}}}.
\\
\end{aligned}
\]
The other boundary terms are estimated analogously.
To control $R\sum_{x\in L_j}\bra{\phi_{\gamma_0,L_{1:\ell-1}}} n_x^2\ket{\phi_{\gamma_0,L_{1:\ell-1}}}$, we first recall periodicity of $\ket{\xi_{1:\ell_0-1}}$ in the horizontal direction, 
\[
R\sum_{x\in L_j}\bra{\phi_{\gamma_0,L_{1:\ell-1}}} n_x^2\ket{\phi_{\gamma_0,L_{1:\ell-1}}}
=\frac{R}{\ell_0}
 \bra{\xi_{1:\ell_0-1}}\sum_{x\in L_{1:\ell_0-1}}n_x^2\ket{\xi_{1:\ell_0-1}}. 
\]
We control $\sum_{x\in L_{0:\ell_0-1}}n_x^2$ by the energy of $\ket{\xi_{1:\ell_0-1}}$ up to lower order terms that are linear in particle number. Indeed, by Cauchy-Schwarz, we have the operator inequality
\[
\sum_{x\in L_{1:\ell_0-1}}n_x^2\leq \frac{1}{U}\left( H^{\mathrm{per}}_{1:\ell_0-1}+c_1(J,U) n_{L_{1:\ell_0-1}}+c_2(J,U)|L_{1:\ell_0-1}|\right)
\]
for suitable constants $c_1(J,U),c_2(J,U)>0$. This implies
\[
\frac{R}{\ell_0}
 \bra{\xi_{1:\ell_0-1}}\sum_{x\in L_{1:\ell_0-1}}n_x^2\ket{\xi_{1:\ell_0-1}}
 \leq\frac{R|L_{1:\ell_0-1}|}{\ell_0 U}(c_1(J,U)+c_2(J,U)-e_1)
 =:c(J,U) R^2.
\]
Since $R^2=|L_{1:\ell_0-1}|^2\mathcal O(\ell_0^{-2})=|L_{1:\ell_0-1}|^2\mathcal O(\ell^{-2})$ from $\ell_0=\lfloor \gamma_0 \ell\rfloor \propto \ell$, this establishes property (c) and completes the proof of Lemma \ref{lm:phitrialstate}. \qed

\begin{footnotesize}
\bibliographystyle{alpha}
\bibliography{LR_boson.bib} 
\end{footnotesize}
%
%
%





%
%
%
%
%
%

\end{document}